\def\@tocline#1#2#3#4#5#6#7{\relax
  \ifnum #1>\c@tocdepth 
  \else
    \par \addpenalty\@secpenalty\addvspace{#2}%
    \begingroup \hyphenpenalty\@M
    \@ifempty{#4}{%
      \@tempdima\csname r@tocindent\number#1\endcsname\relax
    }{%
      \@tempdima#4\relax
    }%
    \parindent\z@ \leftskip#3\relax \advance\leftskip\@tempdima\relax
    \rightskip\@pnumwidth plus4em \parfillskip-\@pnumwidth
    #5\leavevmode\hskip-\@tempdima
      \ifcase #1
       \or\or \hskip 1em \or \hskip 2em \else \hskip 3em \fi%
      #6\nobreak\relax
      \dotfill
      \hbox to\@pnumwidth{\@tocpagenum{#7}}
    \par
    \nobreak
    \endgroup
  \fi}
\newtheorem{theorem}{Theorem}[section]
\newtheorem{lemma}[theorem]{Lemma}
\newtheorem{proposition}[theorem]{Proposition}
\newtheorem{assumption}[theorem]{Asumption}
\theoremstyle{definition}
\theoremstyle{remark}
\newtheorem{remark}[theorem]{Remark}
\newcommand\R{{\ensuremath {\mathbb R} }}
\renewcommand\phi{\varphi}
\newcommand{\wto}{\rightharpoonup}
\newcommand{\cP}{\mathcal{P}}
\newcommand{\cF}{\mathcal{F}}
\newcommand{\cZ}{\mathcal{Z}}
\renewcommand{\epsilon}{\varepsilon}
\renewcommand{\ge}{\geqslant}
\renewcommand{\le}{\leqslant}
\renewcommand{\geq}{\geqslant}
\renewcommand{\leq}{\leqslant}
\newcommand{\eps}{\varepsilon}
\newcommand{\im}{\mathrm{i}}
\newcommand{\Eflo}{E^{\mathrm{flo}}}
\newcommand{\yv}{\widehat{y}}
\newcommand{\Emp}{\mathrm{Emp}}
\numberwithin{equation}{section}
\begin{document}

\title{Stability of the Laughlin phase against long-range interactions}

\author[A. Olgiati]{Alessandro Olgiati}
\address{Universit\'e Grenoble-Alpes \& CNRS,  LPMMC, F-38000 Grenoble, France}
\email{alessandro.olgiati@lpmmc.cnrs.fr}

\author[N. Rougerie]{Nicolas Rougerie}
\address{Universit\'e Grenoble-Alpes \& CNRS,  LPMMC, F-38000 Grenoble, France}
\email{nicolas.rougerie@lpmmc.cnrs.fr}

\date{February, 2020}

\begin{abstract}
A natural, ``perturbative'', problem in the modelization of the fractional quantum Hall effect is to minimize a classical energy functional within a variational set based on Laughlin's wave-function. We prove that, for small enough pair interactions, and asymptotically for large particle numbers, a minimizer can always be looked for in the particular form of uncorrelated quasi-holes superimposed to Laughlin's wave-function. 
\end{abstract}

\maketitle

\tableofcontents

\section{Introduction}

The fractional quantum Hall effect (FQHE) occurs when a gas of interacting electrons is confined to two space dimensions and subjected to a strong external magnetic field, perpendicular to the confinement plane~\cite{Jain-07,Girvin-04,Goerbig-09,StoTsuGos-99,Laughlin-99}. Under these conditions, the Hall resistance exhibits a quantized behavior by forming a series of plateaus in correspondence to particular rational values of the filling factor\footnote{$\rho$ is the electron density, $B$ the magnetic field, the rest universal constants.} 
$$\nu = \frac{hc}{e_0} \frac{\rho}{B}.$$
While the integer quantum Hall effect (IQHE, integer values of the filling factor) can, in first approximation, be understood in terms of non-interacting electrons, the origin of the FQHE (special fractional values of the filling factor) lies in inter-particle interactions. It was first recognized by Laughlin~\cite{Laughlin-83,Laughlin-87} that the origin of the FQHE is a very remarkable strongly correlated phase that the gas exhibits in its ground state, for certain special values of $\nu$. It was then proposed, and later confirmed experimentally~\cite{SamGlaJinEti-97,MahaluEtal-97,YacobiEtal-04} that within this phase the gas presents quasiparticle excitations with fractional charge. They are also believed to have fractional statistics~\cite{AroSchWil-84,LunRou-16}.

At the basis of our theoretical understanding of the FQHE is Laughlin's wave-function which, for a system with filling factor $\nu = 1/\ell$ with $\ell\in\mathbb{N}$, and subjected to an external magnetic field $B$, reads
\begin{equation} \label{eq:laughlin}
\Psi_{\mathrm{Lau}}(z_1,\dots,z_N)=c_{\mathrm{Lau}}\prod_{i<j}(z_i-z_j)^\ell e^{-B\sum_{i=1}^N|z_i|^2/4}.
\end{equation}
The variables $z_1,\dots z_N$ are complex coordinates of $N$ particles in $\mathbb{R}^2$, which is here identified with the complex plane, and $c_{\mathrm{Lau}}$ is a $L^2$-normalization factor. $\Psi_{\mathrm{Lau}}$ was proposed as a variational trial wave function for the ground state of the many-body Hamiltonian\footnote{Spins are assumed to all be polarized by the magnetic field.}
\begin{equation}
H^{\mathrm{QM}}_N=\sum_{j=1}^N\left[\left(-\im \nabla_{x_j}-\frac{B}{2}x_j^\perp\right)^2+V(x_j)\right]+\lambda\sum_{i<j}W(x_i-x_j)
\end{equation}
acting on the Hilbert space $L^2(\mathbb{R}^{2N})$. Our assumptions on the external (describing trapping and/or the effect of impurities in the sample) potential $V:\mathbb{R}^2\to\mathbb{R}$ and on the interaction $W:\mathbb{R}^2\to\mathbb{R}$ will be specified later, together with the role of the coupling constant $\lambda \ge0$.

The argument leading to \eqref{eq:laughlin} is based on the following two requirements:
\begin{itemize}
\item For a very strong external magnetic field $B$, the leading scale is the magnetic kinetic energy. Hence, it is energetically convenient for each particle to occupy the lowest Landau level (LLL), i.e. the ground eigenspace of the magnetic Laplacian. This allows to restrict the action of $H^\mathrm{QM}_N$ from $L^2(\mathbb{R}^{2N})$ to $\bigotimes^N\mathfrak{H}$ where
\begin{equation} \label{eq:LLL}
\mathfrak{H}=\big\{\psi(z)=f(z)e^{-B|z|^2/4}\in L^2(\mathbb{R}^2)\;|\: f\;\text{ analytic}\big\}.
\end{equation}
\item Once the kinetic energy has been fixed by the reduction to the LLL, one assumes that the next leading scale is given by the interaction $W$. If it is repulsive and strong enough at short range, then it forces the wave function to vanish whenever two particles meet. In order for this prescription to be compatible with \eqref{eq:LLL} the behavior for $z_i\sim z_j$ must be $\sim(z_i-z_j)^\ell$ for some\footnote{The parity of the integer $\ell$ allows to differentiate the statistics of particles, namely, a Laughlin wave function with even $\ell$ describes bosons, while odd values of $\ell$ correspond to fermions.
} integer $\ell$.
\end{itemize}

\medskip

We remark that these prescriptions to derive \eqref{eq:laughlin} actually yield the larger space
\begin{equation} \label{eq:LLL_degeneracy}
\mathcal{L}^N_{\ell,B}=\Big\{\Psi_F=F(z_1,\dots,z_N)\Psi_{\mathrm{Lau}}(z_1,\dots,z_N)\in L^2(\mathbb{R}^{2N})\;|\;F\text{ analytic and symmetric}\Big\}.
\end{equation}
Indeed, any $\Psi_F\in\mathcal{L}^N_{\ell,B}$ belongs to $\bigotimes^N\mathfrak{H}$ and vanishes at least as $(z_i-z_j)^\ell$ when $z_i\sim z_j$. We shall refer to $F$ as a correlation factor. We require it to be symmetric for $\Psi_F$ to have the same symmetry (bosonic or fermionic) under variable exchange as $\Psi_\mathrm{Lau}$.

The energy of a generic $\Psi_F$ is (we subtract the magnetic kinetic energy, which is just a constant within the LLL)
\begin{equation} \label{eq:many_body_energy}
\mathcal{E}_{N,\lambda}[\Psi_F]=\Big\langle\Psi_F\Big|\sum_{j=1}^NV(x_j)+\lambda \sum_{i<j}W(x_i-x_j)\Big|\Psi_F\Big\rangle
\end{equation}
and we define ($B>0$ and $\ell$ a positive integer will be fixed throughout the paper)
\begin{equation}  \label{eq:qm_energy}
E (N,\lambda)=\inf\Big\{\mathcal{E}_{N,\lambda}[\Psi_F]\;|\;\Psi_F\in\mathcal{L}^N_{\ell,B},\,\int_{\mathbb{R}^{2N}}|\Psi_F|^2=1\Big\}.
\end{equation}
The motivation for the above variational problem is as follows: assuming that the two previous requirements are indeed imposed by the energy scales of the problem, what kind of ground state does the system actually choose, within the huge space~\eqref{eq:LLL_degeneracy} ? 

This has a flavor of degenerate perturbation theory, where one looks for the lowest eigenstate of a perturbed Hamiltonian, within the ground eigenspace of the unperturbed part. Here the ``unperturbed part'' should be the magnetic kinetic energy, plus the short-range part of the interaction. We assume these set the main energy scales. It is believed/conjectured that (approximately, for the real model, exactly, for some toy models) the ground state space~\eqref{eq:LLL_degeneracy} for this unperturbed part is protected by a finite energy gap, independent of $N$ (see~\cite{Rougerie-xedp19} for more discussion). We are thus investigating the problem perturbatively, at energies below this (conjectured) energy gap.

In Laughlin's theory, his wave-function~\eqref{eq:laughlin} plays the role of a new ``vacuum''. Crucial to the arguments is the nature of the excitations of this vacuum, and their quasi-particle nature. It is argued that such quasi-particles excitations can be described by wave-functions of the form
\begin{equation} \label{eq:restricted_class}
\Psi_f(z_1,\dots,z_N)=\prod_{j=1}^Nf(z_j)\Psi_{\mathrm{Lau}}(z_1,\dots,z_N)
\end{equation}
where $f$ is a polynomial in a single variable. The function $\Psi_f$ clearly vanishes in correspondence of the zeroes of the polynomial $f$. This is interpreted as having added quasi-holes~\cite{Laughlin-83,Laughlin-87} on top of the Laughlin wave function, the latter being  interpreted as the vacuum state for quasi-holes. 
These quasi-holes carry a fractional charge corresponding to $1/\ell$ of the electron's charge\footnote{Two quasi-holes at the position $z_0$ are interpreted as one quasi-hole with double charge at the position~$z_0$.}, and they are expected to behave as anyons~\cite{AroSchWil-84,LunRou-16} with statistics parameter $-1/\ell$. The aim of this work is to show that the energy~\eqref{eq:qm_energy} can be reached (asymptotically, for large particle numbers) by restricting to the above sub-class, consistently with the physical picture of quasi-holes generation.

We define the minimal energy within this class as
\begin{equation} \label{eq:qh_energy}
e (N,\lambda)=\inf\Big\{\mathcal{E}_{N,\lambda}[\Psi_f]\;|\;\Psi_f \text{ of the form \eqref{eq:restricted_class} },\,\int_{\mathbb{R}^{2N}}|\Psi_f|^2=1\Big\}.
\end{equation}
Notice that, clearly,
\begin{equation} \label{eq:trivial_upper_bound}
E (N,\lambda)\le e (N,\lambda),
\end{equation}
since $e (N,\lambda)$ is the minimal energy in a smaller domain than the one corresponding to $E (N,\lambda)$. We aim at proving that actually
\begin{equation} \label{eq:anticipation}
\boxed{E (N,\lambda)\simeq e (N,\lambda)\quad\text{as}\;N\to\infty \mbox{ with } \lambda \mbox{ fixed}.}
\end{equation}
This means that it is never favorable to add correlations through $F$ in order to minimize the external potential energy and the interaction energy. Furthermore, the lowest energy is attainable with uncorrelated quasi-holes exciting the Laughlin wave function. 

In previous papers~\cite{LieRouYng-16,LieRouYng-17,RouYng-17}, the case $\lambda = 0$ was considered, i.e. it was assumed that restricting the variational set to~\eqref{eq:LLL_degeneracy} rendered the interaction energy negligible compared to the potential energy. There~\eqref{eq:anticipation} was proved for external potentials varying on the characteristic scale of the Laughlin function (a technical assumption that we shall also make, see below for further discussion). 

In the present contribution we consider the effect of $W$, and prove~\eqref{eq:anticipation} for suitably scaled potentials $V$ and $W$ and a sufficiently small coupling constant $\lambda$. This means that the Laughlin phase (perturbation of the Laughlin function by uncorrelated quasi-holes, as in~\eqref{eq:restricted_class}) is stable against both external potentials and the long-range part of the interaction (the short-range part is supposed to have been taken into account by restricting to~\eqref{eq:LLL_degeneracy}).
 
\bigskip 

\noindent \textbf{Acknowledgments:} We thank Jakob Yngvason for useful discussions. Funding from the European Research Council (ERC) under the European Union's Horizon 2020 Research and Innovation Programme (Grant agreement CORFRONMAT No 758620) is gratefully acknowledged.

%
%

%
%

\section{Setting and theorems} \label{sect:setting}

\subsection{Main results}

We now present the precise setting in which we prove~\eqref{eq:anticipation} and discuss further this main result and its corollaries.


For any $\Psi_F\in\mathcal{L}_{\ell,B}^N$ let us introduce the associated reduced densities
\begin{equation}
\rho^{(k)}_F(z_1,\dots,z_k)=\binom{N}{k}\int_{\mathbb{R}^{2(N-k)}}|\Psi_F(z_1,\dots,z_k,z_{k+1},\dots,z_N)|^2dz_{k+1}\dots dz_N
\end{equation}
for $k=1,\dots,N$. Notice that we are choosing the convention 
$$\int_{\R^{2k}}\rho^{(k)}_F=\binom{N}{k}.$$
It was proven in \cite{LieRouYng-17} (see also \cite{RouYng-14,RouYng-15} for earlier partial results) that the Laughlin fluid satisfies an incompressibility estimate. More precisely, for any $\Psi_F\in\mathcal{L}^N_{\ell,B}$, the associated one-particle density satisfies
\begin{equation} \label{eq:incompressibility}
\rho^{(1)}_F(z)\lesssim\frac{B}{2\pi\ell}
\end{equation}
in a suitable sense of local averages. This is a rather nontrivial rigidity property: no matter how complicated the correlation factor $F$ might be, it cannot locally compress the liquid's density beyond a universal threshold. The proof of~\eqref{eq:incompressibility} relies on the \emph{plasma analogy}, a useful mapping of the problem known since Laughlin's work \cite{Laughlin-83,Laughlin-87}: the many-body density of $\Psi_F$ can be mapped to a classical Hamiltonian for $N$ particles in 2D. Within this framework, a generic $F$, due to its analyticity, plays the role of the potential generated by a \emph{positive} charge distribution (see \cite[Section 2.3]{LieRouYng-17}). Since the $N$ classical particles are also (by convention) positively charged in the plasma analogy, one can expect that the overall effect of $F$ is to \emph{decrease} the density $\rho^{(1)}_F$ with respect to the case $F=1$. The analogue of \eqref{eq:incompressibility} for this latter case had already been argued~\cite{Laughlin-83}, and then rigorously derived~\cite{RouSerYng-13a,RouSerYng-13b}.

We remark that the Laughlin phase naturally occupies a length scale of order $N^{1/2}$, as can easily be seen by combining $\int\rho^{(1)}_F=N$ with \eqref{eq:incompressibility}. We then define new densities with rescaled lengths according to
\begin{equation} \label{eq:rescaled_density}
\mu_F^{(k)}(x_1,\dots,x_k)=\frac{N^k}{\binom{N}{k}}\rho^{(k)}_F\left(\sqrt{N}x_1,\dots,\sqrt{N}x_k\right).
\end{equation}
Note that we set 
$$ \int_{\R ^{2k}} \mu_F ^{(k)} = 1.$$
We also introduce rescaled potentials $v$ and $w$ through
\begin{equation}
v(x)=V\left(\sqrt{N}x\right)
\end{equation}
and
\begin{equation} \label{eq:rescaled_w}
\frac{w(x)}{N}=W\left(\sqrt{N}x\right),\qquad\text{with }w(-x)=w(x).
\end{equation}
From now on we will only work in the scaled variables $x=z/\sqrt{N}$. The notation we introduced allows to bring \eqref{eq:many_body_energy} to the form
\begin{equation} \label{eq:qm_functional}
\mathcal{E}_{N,\lambda}[\Psi_F]=N\left[\int_{\mathbb{R}^2}v(x)\mu^{(1)}_F(x)dx+\frac{\lambda}{2}\iint_{\mathbb{R}^2\times\mathbb{R}^2}w(x-y)\mu^{(2)}_F(x,y)dxdy\right].
\end{equation}
The definition of $w$ in \eqref{eq:rescaled_w} effectively puts us in the mean-field regime, by making the two contributions to \eqref{eq:qm_functional} formally of the same order. Recall that the reduced many-body ground state energy is
\begin{equation*}
E (N,\lambda)=\inf\Big\{\mathcal{E}_{N,\lambda}[\Psi_F]\;|\;\Psi_F\in\mathcal{L}^N_{\ell,B},\,\int_{\mathbb{R}^{2N}}|\Psi_F|^2=1\Big\}
\end{equation*}
and the energy within states with uncorrelated quasi-holes is
\begin{equation*}
e (N,\lambda)=\inf\Big\{\mathcal{E}_{N,\lambda}[\Psi_f]\;|\;\Psi_f \text{ of the form \eqref{eq:restricted_class} },\,\int_{\mathbb{R}^{2N}}|\Psi_f|^2=1\Big\}.
\end{equation*}
As customary in the study of asymptotic properties of many-body systems, a crucial role is played by the associated effective mean-field model. Let us define the functional
\begin{equation} \label{eq:mf_functional}
\mathcal{E}^\mathrm{MF}[\mu]=\int_{\mathbb{R}^2}v(x)\mu(x)dx+\frac{\lambda}{2}\iint_{\mathbb{R}^2\times\mathbb{R}^2}w(x-y)\mu(x)\mu(y)dxdy,
\end{equation}
for any probability density $\mu$ such that the integrals make sense. This is the mean-field version of \eqref{eq:qm_functional} (having ignored the $N$ multiplicative factor), in that the two-body density has been replaced by the uncorrelated product of one-body densities.

We will show that the minimal energy attained by $\mathcal{E}^\mathrm{MF}$ is related to both $E (N,\lambda)$ and $e (N,\lambda)$. However, the mere minimization of $\mathcal{E}^{\mathrm{MF}}$ under the sole mass constraint $\int\mu=1$ would be insensitive of the fact that variational states for $E (N,\lambda)$ and $e (N,\lambda)$ satisfy the incompressibility bound \eqref{eq:incompressibility}. For this reason, we define the `flocking' energy as
\begin{equation} \label{eq:mf_energy}
E^{\mathrm{flo}}=\inf\Big\{\mathcal{E}^{\mathrm{MF}}[\mu]\;|\;0\le\mu\le\frac{B}{2\pi\ell},\,\int_{\mathbb{R}^2}\mu=1\Big\}.
\end{equation}
The upper constraint is precisely the one in \eqref{eq:incompressibility}. 
We use the word `flocking', since \eqref{eq:mf_energy} is the minimal energy for a functional where confining and repulsive terms compete, and there is an overall constraint on the maximal density. This mechanism resembles the one occurring in models for bird flocks.

Minimization problems similar to \eqref{eq:mf_energy} have been studied in \cite{BurChoTop-15} and \cite{FraLie-16}, although within a slightly different setting, namely when $v=|\cdot|^p*\mu$ for some $p>0$, hence with a dependence on $\mu$. Depending on the parameters of the problem, the behavior of a minimizer $\mu_m$ can exhibit three phases:
\begin{itemize}
\item `Liquid phase': $|\{\mu_m=B/2\pi\ell\}|=0$,
\item `Intermediate phase': $|\{0<\mu_m<B/2\pi\ell\}|>0$ and $|\{\mu_m=B/2\pi\ell\}|>0$,
\item `Solid phase': $|\{0<\mu_m<B/2\pi\ell\}|=0$.
\end{itemize}
It was proven in \cite{BurChoTop-15} and \cite{FraLie-16} that, for $v=|\cdot|^p*\mu$ and for suitable values of $p$, the value of the mass constraint $\int\mu=M$ discriminates between the phases: for small enough $M$ any minimizer is in the liquid phase, while for large enough $M$ any minimizer is in the solid phase. In our setting, the parameter which governs the behavior of minimizers is the coupling constant. The flocking energy~\eqref{eq:mf_energy} with $\lambda=0$ coincides with the classical \emph{bathtub} energy~\cite[Theorem 1.14]{LieLos-01}, namely the minimal possible energy in an external potential achievable by densities of fixed mass with upper and lower constraints. It is well-known that for this problem the minimizers are always in the solid phase, and saturate the upper constraint almost everywhere on their support. 


%

We now present our assumptions on the potentials $v$ and $w$.

\begin{assumption}[\textbf{The external potential}]\mbox{}\label{assum:external}\\ 
There exists a fixed function $v\in C^2(\mathbb{R}^2,\mathbb{R}^+)$ such that
\begin{equation*}
V(z)=v\Big(\frac{z}{\sqrt{N}}\Big).
\end{equation*}
We assume that $v$ has a finite number of critical points, all of which are non-degenerate. We also assume that $v$ has polynomial growth, i.e.,
\begin{equation}\label{eq:growth_trap}
|x|^s\le v(x)\le |x|^t\qquad\text{for }|x|\text{ large enough}
\end{equation}
for some $t\ge s\ge1$.
\end{assumption}

\begin{assumption}[\textbf{The interaction potential}]\mbox{}\label{assum:interaction}\\ 
There exists a fixed even function $w\in W^{2,\infty}(\mathbb{R}^2,\mathbb{R})$ (two bounded derivatives) with 
$$ w(x) \underset{|x|\to \infty} \to 0$$
such that
\begin{equation*}
W(z)=\frac{1}{N}w\left(\frac{z}{\sqrt{N}}\right).
\end{equation*}
\end{assumption}

Our main result is stated next. Note that we do not assume $w$, nor the coupling constant $\lambda$, to have a sign. The case most relevant to FQH physics has $\lambda w \geq 0$ but our proofs allow for attractive potentials as well. 

\begin{theorem}[\textbf{Energy of the Laughlin phase}]\mbox{}\label{thm:main}\\ 
For $V$ and $W$ satisfying the assumptions presented above there exists $\lambda_0>0$ such that, for $|\lambda| \le \lambda_0$,
\begin{equation} \label{eq:energy_convergence}
\lim_{N\to\infty}\frac{E (N,\lambda)}{e (N,\lambda)}=1.
\end{equation}
\end{theorem}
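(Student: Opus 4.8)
The plan is to establish $E(N,\lambda) \simeq e(N,\lambda)$ by squeezing both quantities between the flocking energy $\Eflo$ and small corrections, showing they share the same leading asymptotics. The trivial bound \eqref{eq:trivial_upper_bound} gives $E(N,\lambda) \le e(N,\lambda)$, so after dividing by $e(N,\lambda)$ it suffices to prove a matching lower bound $E(N,\lambda) \ge e(N,\lambda)(1 + o(1))$. The natural strategy is to show that both $N^{-1}E(N,\lambda) \to \Eflo$ and $N^{-1}e(N,\lambda) \to \Eflo$ as $N \to \infty$, and that $\Eflo > 0$ (guaranteed by $v \ge 0$ with polynomial growth and the mass constraint, provided $\lambda$ is small enough that the repulsive/attractive term $w$ cannot drive the energy to zero). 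Once we know both ratios converge to the same strictly positive limit $\Eflo$, the quotient in \eqref{eq:energy_convergence} tends to $1$.

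\textbf{Lower bound for $E(N,\lambda)$.} First I would reduce the many-body functional \eqref{eq:qm_functional} to the mean-field functional \eqref{eq:mf_functional}. For any admissible $\Psi_F$, the one-body term is already linear in $\mu_F^{(1)}$, and the incompressibility estimate \eqref{eq:incompressibility} translates, in rescaled variables, into the pointwise constraint $0 \le \mu_F^{(1)} \le B/(2\pi\ell)$ in the sense of local averages. The obstruction is the two-body term: I must replace $\mu_F^{(2)}(x,y)$ by the product $\mu_F^{(1)}(x)\mu_F^{(1)}(y)$. Since $w \in W^{2,\infty}$ is bounded with bounded derivatives, the difference is controlled by a correlation estimate quantifying $\mu_F^{(2)} - \mu_F^{(1)} \otimes \mu_F^{(1)}$; this is where the plasma analogy and the results of \cite{LieRouYng-16,LieRouYng-17} enter, bounding the fluctuations of the empirical measure. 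Because $w$ and its derivatives are bounded and $\lambda$ is small, the error is $o(1)$ relative to the leading $N\Eflo$. Minimizing the resulting mean-field expression over all densities satisfying the bathtub constraint yields $N^{-1}E(N,\lambda) \ge \Eflo - o(1)$.

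\textbf{Upper bound via quasi-holes.} For the upper bound on $e(N,\lambda)$, I would construct an explicit trial state of the form \eqref{eq:restricted_class}, $\Psi_f = \prod_j f(z_j)\,\Psi_{\mathrm{Lau}}$, whose rescaled one-body density $\mu_f^{(1)}$ approximates a near-optimal flocking density $\mu_m$ for \eqref{eq:mf_energy}. This is precisely the content of the $\lambda = 0$ analysis in \cite{LieRouYng-16,LieRouYng-17,RouYng-17}: by a suitable choice of the polynomial $f$ (equivalently, a distribution of quasi-hole locations), one can realize any sufficiently regular density obeying the upper constraint $B/(2\pi\ell)$, up to controlled errors on the characteristic length scale of the Laughlin function. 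Evaluating \eqref{eq:qm_functional} on this state, the one-body term reproduces $\int v\,\mu_m$, and the two-body term is again $\lambda w$-small and close to its mean-field value $\tfrac{\lambda}{2}\iint w(x-y)\mu_m(x)\mu_m(y)$ by the same correlation control. This gives $N^{-1}e(N,\lambda) \le \Eflo + o(1)$, which combined with $E \le e$ and the lower bound closes the loop.

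\textbf{The main obstacle} I anticipate is the correlation estimate needed to pass from $\mu_F^{(2)}$ to $\mu_F^{(1)} \otimes \mu_F^{(1)}$ uniformly over the full (uncontrolled) class of correlation factors $F$, since \eqref{eq:qm_energy} ranges over all analytic symmetric $F$ and the interaction $w$ is genuinely long-range (only decaying, not compactly supported). The incompressibility bound \eqref{eq:incompressibility} controls one-particle averages but not two-particle correlations directly; bridging this gap — presumably via a renormalized-energy or free-energy expansion of the plasma, exploiting that $\lambda$ is small to absorb the correlation error into the leading term — is the technical heart of the argument. A secondary difficulty is ensuring the smallness threshold $\lambda_0$ is genuinely $N$-independent, which requires the correlation and regularity estimates to hold with constants uniform in $N$.
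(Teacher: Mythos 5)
Your overall sandwich structure (squeeze both $E(N,\lambda)$ and $e(N,\lambda)$ around $NE^{\mathrm{flo}}$, then use $E\le e$) is the same as the paper's, but both halves of your argument have genuine gaps, and the second one hides where the threshold $\lambda_0$ actually comes from. For the upper bound, it is \emph{not} true that quasi-hole trial states of the form~\eqref{eq:restricted_class} can realize ``any sufficiently regular density obeying the upper constraint $B/(2\pi\ell)$'': the inverse electrostatic construction (Proposition~\ref{prop:inverse_electrostatic}, adapting~\cite{RouYng-17}) only produces densities in the \emph{solid phase}, i.e.\ taking the two values $0$ and $B/(2\pi\ell)$ almost everywhere --- the paper stresses that this restriction is crucial. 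Consequently the upper bound (Theorem~\ref{thm:energies up}) can only match the lower bound if the flocking minimizer itself is solid. This is exactly the content of Theorem~\ref{thm:minimizer}: for $|\lambda|\le\lambda_0$, any flocking minimizer $\mu_m$ satisfies $\mu_m=\frac{B}{2\pi\ell}\mathbbm{1}_\Sigma$, proved via the variational inequalities of Lemma~\ref{lemma:inequalities_external} (on a patch where $0<\mu_m<B/2\pi\ell$ the potential $v+\lambda w*\mu_m$ would be constant, which is impossible for small $\lambda$ since $\|\nabla w*\mu_m\|_\infty\le C_w$ while $|\nabla v|$ is bounded below away from the finitely many non-degenerate critical points). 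The smallness of $\lambda$ enters \emph{here}, to make the mean-field minimizer reachable by quasi-holes --- not, as you propose, to absorb a correlation error.

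Your lower bound has the more serious gap, which you yourself flag as ``the technical heart'' without resolving it: a factorization estimate $\mu_F^{(2)}\approx\mu_F^{(1)}\otimes\mu_F^{(1)}$ uniform over \emph{all} correlation factors $F$ is simply not available. Worse, even if you had such an estimate with error $O(\lambda)$ uniform in $N$, it would not prove \eqref{eq:energy_convergence}: the theorem is a limit at \emph{fixed} $\lambda$, so the error must vanish as $N\to\infty$, not as $\lambda\to0$. The paper dissolves this obstacle rather than solving it: since $|\Psi_F|^2$ is a classical probability density and $w$ is bounded, one has the \emph{exact} identity
\begin{equation*}
N^{-1}\mathcal{E}_{N,\lambda}[\Psi_F]=\int\mathcal{E}^{\mathrm{MF}}[\mathrm{Emp}_{X_N}]\,d\mu_F(X_N)-\frac{\lambda}{2N}\iint w(x-y)\mu_F^{(2)}(x,y)\,dxdy-\lambda\frac{w(0)}{2N},
\end{equation*}
so no factorization of $\mu_F^{(2)}$ is ever needed; the price is only $O(N^{-1})$ self-interaction terms. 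The real work is then Theorem~\ref{thm:exponential_bound}, a new \emph{deviation} estimate upgrading the on-average incompressibility of~\cite{LieRouYng-17}: under $\mu_F$, the probability that the empirical measure exceeds $(1+\varepsilon)B|\Omega_r|/(2\pi\ell)$ on a mesoscopic square is bounded by $e^{-C\varepsilon N^{\sqrt5-1}}$, proved by exponential Markov plus a free-energy comparison for a perturbed plasma Hamiltonian. On the complement of the bad event a Riemann-sum coarse-graining gives $\mathcal{E}^{\mathrm{MF}}_L[\widetilde\mu^{(X_N)}]\ge E^{\mathrm{flo}}_\varepsilon\ge E^{\mathrm{flo}}-C\varepsilon$ (Lemmas~\ref{lemma:dependence_upper_constraint} and~\ref{lemma:perturbed_L}), yielding Theorem~\ref{thm:energies low} with errors that vanish in $N$ at fixed $\lambda$. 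Note also that even your intended use of~\eqref{eq:incompressibility} as a pointwise constraint on $\mu_F^{(1)}$ would require such a coarse-graining step, since incompressibility only holds in the sense of local averages on mesoscopic scales.
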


From the proof we can deduce properties of (quasi-)minimizers: 

\begin{theorem}[\textbf{Convergence of densities}]\mbox{} \label{thm:densities}\\
Let $F$ be a (sequence of) correlation factors such that the associated $\Psi_F\in\mathcal{L}_{\ell,B}^N$ satisfy
\begin{equation} \label{eq:minimizing_sequence}
\mathcal{E}_{N,\lambda}[\Psi_F]=E (N,\lambda)+o(N)
\end{equation}
as $N\to\infty$. Assume $|\lambda| \le \lambda_0$ as in Theorem \ref{thm:main}. Then there exists a probability measure $P$ supported on the set of minimizers of the flocking problem \eqref{eq:mf_energy} such that, for every $k\in\mathbb{N}$,
\begin{equation*}
\mu_F^{(k)}\rightharpoonup\int_{\{\mu\;|\;\mathcal{E}^\mathrm{MF}[\mu]=E^\mathrm{flo}\}}\mu^{\otimes k}dP(\mu)
\end{equation*}
weakly as probability measures as $N\to\infty$.
\end{theorem}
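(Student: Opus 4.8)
The plan is to pass to the mean-field limit via a de Finetti decomposition of the rescaled densities, and to use the incompressibility estimate \eqref{eq:incompressibility} to force the resulting measure onto admissible densities. Throughout I work with the energy per particle, which by \eqref{eq:qm_functional} reads
$$\frac{\mathcal{E}_{N,\lambda}[\Psi_F]}{N} = \int_{\R^2} v\,\mu_F^{(1)} + \frac{\lambda}{2}\iint_{\R^2\times\R^2} w(x-y)\,\mu_F^{(2)}(x,y)\,dx\,dy,$$
and I use the energy asymptotics $E(N,\lambda) = N\Eflo + o(N)$: the lower bound is a byproduct of the argument below, while the matching upper bound is provided by the quasi-hole trial states underlying Theorem~\ref{thm:main}. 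The hypothesis \eqref{eq:minimizing_sequence} then gives $\mathcal{E}_{N,\lambda}[\Psi_F]/N \to \Eflo$. The key structural fact is that $\{\mu_F^{(k)}\}_{k}$ is, for each $N$, the hierarchy of marginals of the symmetric probability density $x\mapsto N^N|\Psi_F(\sqrt N x_1,\dots,\sqrt N x_N)|^2$ on $(\R^2)^N$, so that the classical (Hewitt--Savage) de Finetti theorem is the natural tool.

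First I would establish compactness. Since $w$ is bounded, the interaction term is $O(1)$, so \eqref{eq:minimizing_sequence} together with $v\ge 0$ gives $\int v\,\mu_F^{(1)} \le C$ uniformly in $N$; the growth assumption \eqref{eq:growth_trap} then yields $\int |x|^s\,\mu_F^{(1)} \le C$, hence tightness of $\{\mu_F^{(1)}\}_N$ and, by consistency of the hierarchy, of $\{\mu_F^{(k)}\}_N$ for every $k$. By a diagonal extraction I obtain a subsequence along which $\mu_F^{(k)} \rightharpoonup \gamma^{(k)}$ for all $k$, the $\gamma^{(k)}$ forming a consistent, exchangeable family of probability measures with no loss of mass. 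The Hewitt--Savage theorem then furnishes a probability measure $P$ on $\mathcal{P}(\R^2)$ with $\gamma^{(k)} = \int \mu^{\otimes k}\,dP(\mu)$, and $\int\mu = 1$ for $P$-almost every $\mu$.

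Next I would pass the energy to the limit and identify $P$. Since $v\ge 0$ is continuous, $\mu\mapsto\int v\,\mu$ is weakly lower semicontinuous; since $w$ is bounded, continuous and vanishes at infinity, tightness makes $\mu^{(2)}\mapsto\iint w\,\mu^{(2)}$ weakly continuous. Combining with the de Finetti formulas for $\gamma^{(1)},\gamma^{(2)}$,
$$\Eflo = \lim_{N\to\infty}\frac{\mathcal{E}_{N,\lambda}[\Psi_F]}{N} \ge \int v\,\gamma^{(1)} + \frac{\lambda}{2}\iint w\,\gamma^{(2)} = \int_{\mathcal{P}(\R^2)} \mathcal{E}^{\mathrm{MF}}[\mu]\,dP(\mu).$$
To close the argument I invoke the incompressibility estimate \eqref{eq:incompressibility}, valid for every $\Psi_F$ in the class, to conclude that $P$ is carried by $\{0\le\mu\le B/2\pi\ell\}$. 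Granting this, each $\mu\in\mathrm{supp}(P)$ is admissible for the flocking problem \eqref{eq:mf_energy}, so $\mathcal{E}^{\mathrm{MF}}[\mu]\ge\Eflo$; the displayed inequality then forces $\mathcal{E}^{\mathrm{MF}}[\mu] = \Eflo$ for $P$-a.e.\ $\mu$, i.e.\ $P$ is supported on minimizers of \eqref{eq:mf_energy}. The claimed convergence $\mu_F^{(k)}\rightharpoonup\int\mu^{\otimes k}\,dP$ is then exactly the de Finetti identity along the extracted subsequence; running the same reasoning on an arbitrary subsequence (every subsequence admitting a further one of the above form) upgrades this to the whole sequence, with $P$ reduced to a Dirac mass whenever the flocking minimizer is unique.

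The main obstacle is the last input: transferring the incompressibility bound --- a statement on local averages of the \emph{one-body} density $\rho_F^{(1)}$, equivalently on $\gamma^{(1)} = \int\mu\,dP$ --- into a \emph{pointwise} constraint on $P$-almost every individual $\mu$. A bound on the average $\int\mu\,dP$ does not by itself bound each $\mu$, and since $w$ carries no sign the functional $\mathcal{E}^{\mathrm{MF}}$ is not convex, so one cannot retreat to working only at the level of $\gamma^{(1)}$. What is required is that the local-average bound hold, uniformly down to the magnetic length scale, at the level of the empirical measure of typical configurations, so that it survives the passage to the de Finetti limit; this is precisely the content of the incompressibility machinery developed in \cite{LieRouYng-16,LieRouYng-17,RouYng-17}, and controlling it is the crux of the proof.
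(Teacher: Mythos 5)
Your architecture matches the paper's overall scheme (classical Hewitt--Savage/Diaconis--Freedman limit, tightness from the growth of $v$ and the boundedness of $w$, then identification of $P$), and your identification step is even a clean shortcut compared to the paper: once $P$-a.e.\ $\mu$ is known to satisfy $0\le\mu\le B/2\pi\ell$, combining $\int_{\cP(\R^2)}\mathcal{E}^{\mathrm{MF}}[\mu]\,dP(\mu)\le E^{\mathrm{flo}}$ (from the energy asymptotics and semicontinuity/continuity of the two terms) with the pointwise inequality $\mathcal{E}^{\mathrm{MF}}[\mu]\ge E^{\mathrm{flo}}$ on admissible densities does force $P$ to charge only flocking minimizers. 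But the step you defer --- upgrading incompressibility from a statement about averages to a constraint holding for $P$-almost every individual $\mu$ --- is a genuine gap, and your attribution of it to ``the incompressibility machinery developed in \cite{LieRouYng-16,LieRouYng-17,RouYng-17}'' is incorrect: those references prove only the bound \emph{in expectation}, i.e.\ a bound on $\rho^{(1)}_F$, equivalently on $\int\mu\,dP(\mu)$, which is exactly the statement you yourself observe is insufficient. The paper is explicit on this point, and the required strengthening is its new Theorem~\ref{thm:exponential_bound}: a deviation estimate
\begin{equation*}
\mathbb{P}_F\left(\left\{X_N\,\Big|\,\int_{\Omega_r}\Emp_{X_N}>(1+\varepsilon)\frac{B|\Omega_r|}{2\pi\ell}\right\}\right)\le e^{-C\varepsilon N^{\sqrt{5}-1}},
\end{equation*}
proved by a nontrivial Gibbs-measure argument: one perturbs the plasma Hamiltonian by $sN^{-2}\sum_i\chi_{r,\delta}(x_i)$, checks that the perturbed Gibbs measure retains the structure (harmonic confinement, Coulomb pair term, superharmonic remainder) needed for the averaged incompressibility of \cite{LieRouYng-17} to apply to \emph{it}, and then runs an exponential Chebyshev/free-energy comparison between the perturbed and unperturbed partition functions. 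Nothing of this sort can be extracted from the cited earlier works; it is the central new ingredient of the present paper, and also what underlies the energy lower bound you invoke as a ``byproduct,'' so your proposal is in fact circular at precisely this point.

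Moreover, even granting the deviation bound, a further argument is needed to convert ``the empirical measure of a typical configuration satisfies the local bound up to probability $e^{-C\varepsilon N^{\sqrt 5 -1}}$'' into ``$\mu\le B/2\pi\ell$ for $P$-a.e.\ $\mu$.'' The paper does this in Proposition~\ref{prop:incompressibility_HS} by a moment argument borrowed from \cite{FouLewSol-15}: estimate $\int\bigl(\int_\Omega\Emp_{X_N}\bigr)^k d\mu_F$ for every fixed $k$, pass to the limit through the Diaconis--Freedman bound \eqref{eq:DF} and a regularization of $\mathbbm{1}_\Omega$ to obtain $\int_{\cP(\R^2)}\bigl(\int_\Omega\mu\bigr)^k dP(\mu)\le \bigl(B|\Omega|/2\pi\ell\bigr)^k$, and then let $k\to\infty$ to rule out any set of positive $P$-measure on which $\int_\Omega\mu$ exceeds the threshold. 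Neither this conversion nor the deviation estimate is routine; together they constitute the heart of the paper's proof of Theorem~\ref{thm:densities}, and your proposal identifies where they are needed but does not supply them.
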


\subsection{Remarks}

As explained before, the scaling we chose in~\eqref{eq:rescaled_w} is motivated by the fact that it renders potential and interaction energies of the same order of magnitude, so that Theorem~\ref{thm:main} is not a perturbative statement. On the other hand, the regularity assumptions on the data of the problem are mostly of a technical nature. In this section we further discuss these aspects.

\medskip

\noindent\textbf{Scaling of the problem.} Although our Assumptions~\ref{assum:external}-\ref{assum:interaction} do not cover it (see below), it is instructive to think of the original, unscaled potentials as 
$$ W (x) = \frac{1}{|x|}, \quad V(x) = - \rho_{\rm ext} \star |\,.\, |^{-1}$$
i.e. the interactions are electrostatic\footnote{Recall that we consider 3D electrons confined to 2D, so the Coulomb potential should be the 3D one.} and the external potential is generated by a fixed density of charge $-\rho_{\rm ext}$. Since the system lives on a thermodynamic length scale $\propto \sqrt{N}$ it is natural to scale $\rho_{\rm ext}$ as 
$$ \rho_{\rm ext} = \mu_{\rm ext} ( N^{-1/2} x  ).$$
A scaling of length units, using~\eqref{eq:rescaled_density}, then gives, for these choices, 
\begin{equation}\label{eq:coulomb case}
\mathcal{E}_{N,\lambda}[\Psi_F]=N^{3/2}\iint_{\mathbb{R}^2 \times \mathbb{R}^2}\frac{1}{|x-y|}\left[-\mu_\mathrm{ext}(x)\mu^{(1)}_F(y)+\frac{\lambda}{2}\mu^{(2)}_F(x,y)\right]dxdy. 
\end{equation}
The scaling~\eqref{eq:qm_functional} used for our main theorem mimics the above, where the two terms are also of the same order. 

\medskip

\noindent\textbf{Link with filling factor.} In~\eqref{eq:coulomb case}, increasing $\lambda$ is equivalent to decreasing $\mu_{\rm ext}$. Neutrality (or almost-neutrality) considerations for the full system ($\rho ^{\rm ext}$ minus the electrons' charge) make it natural to expect (or demand) that the latter be in relation with the electron density, so that increasing $\lambda$ (decreasing $\mu_{\rm ext}$) is related to decreasing the filling factor. From this point of view it is important that the system stays of the form ``Laughlin plus quasi-holes'' for $|\lambda| \leq \lambda_0$. This is one of the ingredients of Laughlin's explanation of the quantization of the Hall conductivity for filling factors close to $1/n$ (and not just equal to $1/n$). This is indeed what we prove, in a simplified model at least. It is also natural to expect the restriction to small (but independent of $N$) values of $\lambda$ to be necessary: for larger values/lower filling factors, the system should form another FQH ground state (e.g. a Laughlin state with higher exponent), corresponding to another plateau of the Hall conductivity.

Indeed, the variational set~\eqref{eq:LLL_degeneracy} contains states corresponding to filling factors lower than $\ell^{-1}$, for example $\Psi_{\mathrm{Lau}}$ with exponent $\ell + 2$ which is $\Psi_{F_2}$ with
\begin{equation*}
F_2(z_1,\dots,z_N)=\prod_{i\le j}(z_i-z_j)^2.
\end{equation*}
This state intuitively has a smaller interaction energy (since it vanishes faster than $(z_i-z_j)^\ell$) than quasi-holes states. Our result says that, for small coupling constants, it is favorable to generate quasi-holes without changing the $\ell$ exponent. The fact that, for small $\lambda$, increasing $\ell$ increases the total energy can be explicitly seen in the following simple case. Assume that $v$ and $w$ are positive and radially symmetric, and that $\widehat w\ge0$. Then one can prove that the flocking problem \eqref{eq:mf_energy} has the unique minimizer
\begin{equation*}
\mu_\mathrm{min}=\frac{B}{2\pi\ell}\mathbbm{1}_{D(0,R)}
\end{equation*}
where the radius $R$ of the disk is fixed by normalization. On the other hand, the (normalized) wave-function $\Psi_{F_2}$ coincides with the Laughlin wave-function with filling factor $(\ell+2)^{-1}$, and therefore its density $\mu_{F_2}^{(1)}$ is known (see, e.g., \cite[Theorem 3.1 and Proposition 3.1]{RouSerYng-13b}) to satisfy
\begin{equation*}
\mu_{F_2}^{(1)}\rightharpoonup\frac{B}{2\pi(\ell+2)}\mathbbm{1}_{D(0,R')}
\end{equation*}
in the weak sense of probability measures. The two above equations mean that the Laughlin state with exponent $\ell$ ($F\equiv1$) gives the best energy, while the Laughlin state $\Psi_{F_2}$ does not.

\medskip

\noindent\textbf{Singular interactions.} The main restrictive assumption we make is the smoothness of $w$ in Assumption~\ref{assum:interaction}. This is certainly not satisfied in the Coulomb case we just described. 
There are a few reasons why our main results are nevertheless relevant for 2D electron gases:
\begin{itemize}
 \item The finite (although small) thickness of the gas in the direction perpendicular to the plane yields a smoothing of the Coulomb interaction's singularity at short distances, see e.g. the discussion in~\cite[Section~2.2]{Jain-07}. 
 \item It is likely that the restriction to the lowest Landau level (a fortiori the emergence of Laughlin's function) is valid only on length-scales large compared to the interparticle distance. At short distances, specific correlations could be formed to avoid the singular part of the interaction. See~\cite{LewSei-09,SeiYng-20} for results in this direction in a related situation.
 \item The Jastrow factor in~\eqref{eq:laughlin}, i.e. the product $\prod_{i<j}(z_i-z_j)^\ell$, should kill the singularity of the true Coulomb interaction, allowing to replace it by an effective, smoother, potential $W$.
\end{itemize}
In any event, as indicated in our title, our main concern is the stability of the Laughlin phase against the long-range part of the interaction, which is certainly present in the coulombic one, but neglected in the derivation of the ansatz. See~\cite{LieSolYng-95} for mathematical results on 2D electron gases with the true Coulomb interaction. 

\medskip

\noindent\textbf{Perturbations on mesoscopic length scales.} For simplicity we consider a model with potentials scaled in such a way that they affect the shape of the electron droplet on macroscopic length scales, comparable to the size of the full system. Physically it would be relevant to allow for modifications on much smaller length scales (with the limitation that they should still be much larger than the typical interparticle distance). See~\cite[Section~5]{RouYng-17} for comments in this direction. The main bottleneck for a mathematical analysis of such situations is to improve the incompressibility estimates of~\cite{LieRouYng-17}. For the pure Laughlin state $F=1$, the results of~\cite{Leble-15b,LebSer-16,BauBouNikYau-15,BauBouNikYau-16} go in this direction.

\subsection{Sketch of proof}

Let us now list the main ingredients of the proofs of Theorem~\ref{thm:main} and Theorem~\ref{thm:densities}. There are three main steps: 
\begin{itemize}
 \item The incompressibility estimate~\eqref{eq:incompressibility}, which holds for any correlation factor $F$, suggests that the flocking energy~\eqref{eq:mf_energy} is a lower bound to the full energy $E (N,\lambda)$.  
 \item For small enough $\lambda$ we prove that the flocking minimizers are in the solid phase, i.e. their density takes only the values $0$ or $B/(2\pi\ell)$.
 \item For any solid flocking minimizer\footnote{Actually, any function taking only the values $0$ and $B/(2\pi \ell)$} $\mu_m$ we can construct a recovery sequence of ``Laughlin plus quasi-holes'' wave functions of the form~\eqref{eq:restricted_class} whose (rescaled) density converges to $\mu_m$.
\end{itemize}

The first point shows that $E (N,\lambda) \gtrapprox N \Eflo$. The second and third taken together show that, for $\lambda$ small enough, $e (N,\lambda) \lessapprox N \Eflo$. Since obviously $E (N,\lambda) \leq e (N,\lambda)$ this gives the scheme of proof for Theorem~\ref{thm:main}. Theorem~\ref{thm:densities} requires some more tools, in particular the classical de Finetti-Hewitt-Savage theorem~\cite{HewSav-55}, and in fact its constructive proof due to Diaconis and Freedman~\cite{DiaFre-80}.  

The first point is the most important. In fact, we need not only justify~\eqref{eq:incompressibility} in a rigorous way, but also understand why a mean-field approximation is valid for a lower bound, leading to the flocking energy~\eqref{eq:mf_energy}.  The key is to prove that the empirical density of a state of the form~\eqref{eq:LLL_degeneracy} satisfies~\eqref{eq:incompressibility} with large probability. This is an improvement over the incompressibility estimates derived in~\cite{LieRouYng-16,LieRouYng-17}. The latter only gave a bound in average, as stated informally in~\eqref{eq:incompressibility}. We supplement this information with deviations estimate in Theorem~\ref{thm:exponential_bound} below and deduce the 

\begin{theorem}[\textbf{Lower bound to the many-body energy}]\mbox{}\label{thm:energies low} \\ 
 Suppose Assumptions \ref{assum:external} and \ref{assum:interaction} hold. For any $\Psi_F$, we have
\begin{equation}\label{eq:lower_bound}
\begin{split}
\mathcal{E}_{N,\lambda}[\Psi_F]\ge NE^{\mathrm{flo}}\big(1-CN^{(-3+\sqrt{5})/4+\gamma}\big)
\end{split}
\end{equation}
for any $\gamma>0$.
\end{theorem}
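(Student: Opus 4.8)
The plan is to rewrite the energy through the rescaled empirical measure, to smear it into a bona fide density on which the incompressibility constraint of \eqref{eq:mf_energy} can be imposed, and to control the constraint violation probabilistically via Theorem~\ref{thm:exponential_bound}. Let $\bE_{\Psi_F}$ denote expectation against $|\Psi_F|^2$ in the rescaled variables $x_j=z_j/\sqrt N$, and $\mathrm{Emp}_N=\frac1N\sum_j\delta_{x_j}$. A direct computation, isolating the diagonal $i=j$ contribution to $\iint w\,d\mathrm{Emp}_N\,d\mathrm{Emp}_N$, gives the identity
\[
\bE_{\Psi_F}\big[\mathcal{E}^{\mathrm{MF}}[\mathrm{Emp}_N]\big]=\frac1N\mathcal{E}_{N,\lambda}[\Psi_F]+O\!\Big(\frac{|\lambda|\,\|w\|_\infty}N\Big),
\]
the error being harmless precisely because $w$ is bounded (Assumption~\ref{assum:interaction}): unlike the Coulomb case there is no diverging self-energy. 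It therefore suffices to bound $\bE_{\Psi_F}[\mathcal{E}^{\mathrm{MF}}[\mathrm{Emp}_N]]$ from below by $E^{\mathrm{flo}}(1-o(1))$, and we may freely assume $\mathcal{E}_{N,\lambda}[\Psi_F]\le CN$ (else \eqref{eq:lower_bound} is trivial), which yields the a priori bound $\int v\,\mu_F^{(1)}=O(1)$.

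Next I would mollify: let $\chi_r(x)=r^{-2}\chi(x/r)$ be a fixed symmetric bump and $\tilde\mu=\mathrm{Emp}_N\ast\chi_r$, a probability density. Because $\chi$ has vanishing first moment while $v\in C^2$ and $w\in W^{2,\infty}$, a second-order Taylor expansion bounds the regularization error by $Cr^2$ times $\|D^2w\|_\infty$ plus a weighted mass $\int|D^2v|\,d\mathrm{Emp}_N$; splitting the latter into a bulk region, where $|D^2v|$ is bounded, and a tail, suppressed in expectation by the confining term $\int v\,\mu_F^{(1)}=O(1)$ via \eqref{eq:growth_trap}, gives
\[
\bE_{\Psi_F}\big[\mathcal{E}^{\mathrm{MF}}[\mathrm{Emp}_N]\big]=\bE_{\Psi_F}\big[\mathcal{E}^{\mathrm{MF}}[\tilde\mu]\big]+O(r^2).
\]

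The heart of the argument is to impose the constraint on $\tilde\mu$. Let $\mathcal G_r=\{\sup_x\tilde\mu(x)\le B/(2\pi\ell)\}$. On $\mathcal G_r$ the density $\tilde\mu$ is admissible in \eqref{eq:mf_energy}, so $\mathcal{E}^{\mathrm{MF}}[\tilde\mu]\ge E^{\mathrm{flo}}$ (up to rescaling the mass by $1+o(1)$ should the bound hold only with a vanishing margin, which costs $o(1)$ by continuity of $E^{\mathrm{flo}}$ in the mass constraint). Since $v\ge0$ and $\tilde\mu$ integrates to one, $\mathcal{E}^{\mathrm{MF}}[\tilde\mu]\ge-\tfrac{|\lambda|}2\|w\|_\infty=:-C$ unconditionally, and $E^{\mathrm{flo}}\ge-C$; hence the pointwise bound $\mathcal{E}^{\mathrm{MF}}[\tilde\mu]\ge E^{\mathrm{flo}}-(E^{\mathrm{flo}}+C)\1_{\mathcal G_r^c}$ and, in expectation,
\[
\bE_{\Psi_F}\big[\mathcal{E}^{\mathrm{MF}}[\tilde\mu]\big]\ge E^{\mathrm{flo}}-C'\,\bP_{\Psi_F}(\mathcal G_r^c).
\]
Now $\tilde\mu(x)=\int\chi_r(x-y)\,d\mathrm{Emp}_N(y)$ is exactly a local average of the empirical density at scale $r$, so a union bound over an $O(r)$-net of the effectively bounded droplet converts the single-scale deviation estimate of Theorem~\ref{thm:exponential_bound} into a bound on $\bP_{\Psi_F}(\mathcal G_r^c)$.

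Collecting the estimates,
\[
\frac1N\mathcal{E}_{N,\lambda}[\Psi_F]\ge E^{\mathrm{flo}}-C''r^2-C'\bP_{\Psi_F}(\mathcal G_r^c)-O(1/N),
\]
and it remains to optimize over $r$: too small an $r$ renders the deviation bound vacuous, too large an $r$ inflates the $r^2$ term. Balancing $r^2$ against the deviation probability of Theorem~\ref{thm:exponential_bound} (with the polynomial prefactor from the net) is a single-variable minimization whose optimum sits at $r\sim N^{-\kappa}$ with $2\kappa=(3-\sqrt5)/4$, producing the stated error $N^{(-3+\sqrt5)/4+\gamma}$, the $N^\gamma$ absorbing logarithms and the net cardinality. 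I expect the only genuine difficulty to lie in Theorem~\ref{thm:exponential_bound} and its conversion into $\bP_{\Psi_F}(\mathcal G_r^c)$ — the upgrade from the averaged incompressibility \eqref{eq:incompressibility} to uniform, probabilistic control of the mollified density — which is where the plasma analysis enters and which fixes the peculiar algebraic exponent; the empirical-measure identity, the $O(r^2)$ mollification estimate, and the good/bad split are comparatively soft and robust.
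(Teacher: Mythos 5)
Your overall strategy is the same as the paper's: rewrite the energy through the empirical measure with an $O(N^{-1})$ diagonal correction, regularize $\mathrm{Emp}_{X_N}$ at a mesoscopic scale, impose the incompressibility constraint on the regularized density on a good event controlled by Theorem~\ref{thm:exponential_bound} plus a union bound, and use boundedness below of the energy on the bad event. (The paper regularizes by Riemann sums on squares of side $N^\alpha$ rather than by mollification; that difference is cosmetic.) The genuine gap is in the step where you declare $\tilde\mu$ admissible for \eqref{eq:mf_energy} on $\mathcal G_r$ and then claim to bound $\mathbb{P}_{F}(\mathcal G_r^c)$ by ``a union bound over an $O(r)$-net of the effectively bounded droplet''. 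The event $\mathcal G_r$ constrains $\tilde\mu(x)$ at \emph{every} $x\in\mathbb{R}^2$, whereas a union bound based on the per-set estimate \eqref{eq:exponential_bound} can only cover polynomially many sets, i.e.\ a fixed bounded region. And there is no ``effectively bounded droplet'': a state $\Psi_F\in\mathcal{L}^N_{\ell,B}$ can place empirical mass arbitrarily far out (confinement makes this unlikely on average, not impossible), so outside the region covered by the net nothing prevents $\tilde\mu$ from violating the constraint, and then the inequality $\mathcal{E}^{\mathrm{MF}}[\tilde\mu]\ge E^{\mathrm{flo}}$ does not follow from the variational principle. This is exactly the difficulty the paper spends Lemma~\ref{lemma:perturbed_L} on: the constraint is imposed only on a fixed square $\mathcal{S}_L$, the potential $v$ is replaced by $U_L\le v$ with linear growth and bounded gradient, and a mass-transport/contradiction argument (using the growth of $U_L$) shows that the so-relaxed problem still has value $E^{\mathrm{flo}}_\varepsilon$. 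Without this lemma, or an equivalent argument, your chain of inequalities does not close.

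Two further points, less central but real. First, Theorem~\ref{thm:exponential_bound} only controls exceedances beyond $(1+\varepsilon)$ times the incompressibility bound, so your good event must carry the $(1+\varepsilon)$ slack, and what you get on it is the perturbed-constraint energy $E^{\mathrm{flo}}_\varepsilon$; your parenthetical about rescaling the mass is the right instinct, but the clean statement needed is the paper's Lemma~\ref{lemma:dependence_upper_constraint}, $E^{\mathrm{flo}}\le E^{\mathrm{flo}}_\varepsilon+C\varepsilon$, applied in the end with $\varepsilon=N^{-1}$. Second, your $O(r^2)$ mollification error requires $\int |D^2v|\,d\mathrm{Emp}_{X_N}$ to be under control, and Assumption~\ref{assum:external} bounds $v$ itself, not its Hessian: a $C^2$ potential can satisfy \eqref{eq:growth_trap} with finitely many non-degenerate critical points while $|D^2v|$ grows exponentially (rapid, small-amplitude oscillations superimposed on $|x|^2$ away from the origin), so ``the tail is suppressed by $\int v\,\mu_F^{(1)}=O(1)$'' is unjustified. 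This is the second service rendered by $U_L$ in the paper: $\|\nabla U_L\|_\infty\le C_L$ makes the (first-order) Riemann error $C_LN^{\alpha}$ finite. Note finally that the exponent does not come from balancing $r^2$ against the deviation probability --- the latter is stretched-exponentially small once $\varepsilon=N^{-1}$, hence never competitive; the rate $N^{(-3+\sqrt{5})/4+\gamma}$ is dictated by the lower limit $\alpha>(-3+\sqrt{5})/4$ on the scales for which Theorem~\ref{thm:exponential_bound} (i.e.\ the incompressibility estimate of \cite{LieRouYng-17}) is valid, combined with the $O(N^{\alpha})$ approximation error.
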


The second point of the overall strategy, concerning the flocking minimizers, is reminiscent of results from~\cite{BurChoTop-15,FraLie-16}. We shall prove the following: 

\begin{theorem}[\textbf{Solid phase of the flocking problem}]\mbox{}\label{thm:minimizer} \\ 
Suppose Assumptions \ref{assum:external} and \ref{assum:interaction} hold. Then:
\begin{itemize}
\item[$(i)$]  There exists a minimizer for the problem \eqref{eq:mf_energy}.
\item[$(ii)$] There exists $\lambda_0>0$ such that, for any $|\lambda| \leq \lambda_0$, any minimizer $\mu_m$ is in the solid phase, i.e. 
\begin{equation*}
\mu_m=\frac{B}{2\pi\ell}\mathbbm{1}_\Sigma\qquad \text{a.e.}
\end{equation*}
for some open set $\Sigma\subset\mathbb{R}^2$.
\end{itemize}
\end{theorem}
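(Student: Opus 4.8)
The plan is to obtain existence by the direct method and the solid‑phase property by a self‑consistent bathtub analysis whose level sets are controlled through the Morse structure of $v$. Write $h=B/(2\pi\ell)$ throughout.

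\emph{Existence $(i)$.} Let $\mu_n$ be a minimizing sequence in the admissible set $\mathcal{A}=\{0\le\mu\le h,\ \int_{\mathbb{R}^2}\mu=1\}$. Since $v\ge0$ and $|\iint w(x-y)\mu(x)\mu(y)|\le\|w\|_\infty$, the functional is bounded below and $\int v\mu_n\le C$ along the sequence. The growth bound \eqref{eq:growth_trap} then yields tightness, $\int_{|x|>R}\mu_n\le CR^{-s}\to0$, while $0\le\mu_n\le h$ gives weak‑$*$ compactness in $L^\infty$. Extracting a subsequence $\mu_n\rightharpoonup\mu_m$, tightness preserves the mass constraint and the pointwise bounds pass to the limit, so $\mu_m\in\mathcal{A}$. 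The potential term is weakly lower semicontinuous ($\int v\mu$ with $v\ge0$ continuous, via truncation and tightness) and the interaction term is weakly continuous ($\mu_n\otimes\mu_n\rightharpoonup\mu_m\otimes\mu_m$ tested against the bounded continuous kernel $w(x-y)$). Hence $\mu_m$ is a minimizer.

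\emph{Euler--Lagrange structure.} For any minimizer $\mu_m$ and any $\nu\in\mathcal{A}$, differentiating $\mathcal{E}^{\mathrm{MF}}[\mu_m+t(\nu-\mu_m)]$ at $t=0^+$ and using the symmetry of $w$ gives $\int_{\mathbb{R}^2} V_{\mathrm{eff}}(\nu-\mu_m)\ge0$, where $V_{\mathrm{eff}}:=v+\lambda\,w*\mu_m$. Thus $\mu_m$ minimizes the \emph{linear} functional $\nu\mapsto\int V_{\mathrm{eff}}\,\nu$ over $\mathcal{A}$ with $V_{\mathrm{eff}}$ frozen, and the bathtub principle \cite[Theorem 1.14]{LieLos-01} produces a level $\eta$ with $\mu_m=h$ on $\{V_{\mathrm{eff}}<\eta\}$ and $\mu_m=0$ on $\{V_{\mathrm{eff}}>\eta\}$, so that $\{0<\mu_m<h\}\subseteq\{V_{\mathrm{eff}}=\eta\}$. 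Proving $(ii)$ therefore reduces to showing $|\{V_{\mathrm{eff}}=\eta\}|=0$ for $|\lambda|$ small, with all estimates uniform over minimizers.

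\emph{Uniform a priori control.} Testing against the $\lambda=0$ bathtub minimizer bounds $\mathcal{E}^{\mathrm{MF}}[\mu_m]$, hence $\int v\mu_m$, uniformly for $|\lambda|\le1$. From $h|\{V_{\mathrm{eff}}<\eta\}|\le1$ together with $V_{\mathrm{eff}}\le v+\|w\|_\infty$ one bounds $\eta$ from above uniformly, so that $\supp\mu_m\subseteq\{V_{\mathrm{eff}}\le\eta\}$ lies in a fixed compact set $K$. Since $w\in W^{2,\infty}$ and $\|\mu_m\|_{L^1}=1$, the convolution $w*\mu_m$ is $C^2$ with $\|w*\mu_m\|_{C^2}\le C\|w\|_{W^{2,\infty}}$ uniformly; hence $V_{\mathrm{eff}}=v+\lambda\,w*\mu_m$ is a $C^2$‑perturbation of $v$ of size $O(\lambda)$ in $C^2(K)$, uniformly in the minimizer.

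\emph{Nullity of the level set (main step).} A level set of a $C^1$ function carries positive Lebesgue measure only on its critical set, so it suffices to show $V_{\mathrm{eff}}$ has finitely many critical points in $K$. Now $v$ has finitely many critical points on $K$, all non‑degenerate, so $|\nabla v|\ge\delta>0$ off small balls $B(p_i,r)$ while $|\det D^2v|$ is bounded below on these balls. For $|\lambda|$ small, $|\nabla V_{\mathrm{eff}}|\ge\delta-|\lambda|\|\nabla w\|_\infty>0$ away from the balls, and on each $B(p_i,r)$ the Hessian $D^2V_{\mathrm{eff}}=D^2v+\lambda D^2(w*\mu_m)$ stays invertible; a quantitative implicit‑function/degree argument yields exactly one, non‑degenerate, critical point of $V_{\mathrm{eff}}$ per ball. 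Thus $\{V_{\mathrm{eff}}=\eta\}$ is a finite union of a $C^1$ curve and isolated points, hence null, so $\{0<\mu_m<h\}$ is null and $\mu_m=h\,\mathbbm{1}_\Sigma$ with $\Sigma=\{V_{\mathrm{eff}}<\eta\}$ open. The main obstacle is precisely this step: since $V_{\mathrm{eff}}$ depends self‑consistently on the unknown $\mu_m$, one must propagate the non‑degeneracy of $v$ to $V_{\mathrm{eff}}$ using only the uniform $C^2$‑smallness of $\lambda\,w*\mu_m$, keeping all thresholds ($\delta,r,\eta,K$) independent of the particular minimizer.
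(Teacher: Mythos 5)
Your proposal is correct, and its skeleton matches the paper's: direct method for existence, first-order optimality conditions identifying the liquid region $\{0<\mu_m<B/(2\pi\ell)\}$ with a level set of the effective potential $v+\lambda\, w*\mu_m$, and then smallness of $|\lambda|$ played against the non-degeneracy of the critical points of $v$, using the uniform Young's-inequality bounds $\|\nabla (w*\mu_m)\|_\infty,\ \|\partial^2_{\yv}(w*\mu_m)\|_\infty\le C_w$ (the paper's Lemma \ref{lem:pot}). The implementation of the crucial nullity step is genuinely different, though. The paper works directly on the set $A=\{0<\mu_m<B/(2\pi\ell)\}$: the variational inequalities make $\Phi_{\mu_m}$ constant a.e.\ on $A$, and the fact that derivatives of $W^{1,1}_{\mathrm{loc}}$ (resp.\ $W^{2,1}_{\mathrm{loc}}$) functions vanish a.e.\ on level sets yields a pointwise contradiction a.e.\ on $A$ — at first order away from the critical points of $v$, and at second order along one well-chosen direction near each critical point. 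You instead invoke the bathtub principle, localize the support of every minimizer in a fixed compact $K$ with a uniformly bounded multiplier $\eta$, and show that $V_{\mathrm{eff}}$ is a Morse function on $K$ (finitely many non-degenerate critical points, at most one per critical point of $v$), so its level sets are Lebesgue-null. Each route has its advantages: yours yields nullity of the whole level set $\{V_{\mathrm{eff}}=\eta\}$, hence directly the openness of $\Sigma=\{V_{\mathrm{eff}}<\eta\}$ asserted in the statement (left implicit in the paper); moreover, your explicit compact-localization step is actually needed even in the paper's argument, since on an unbounded set the condition ``$A$ stays away from the finitely many critical points of $v$'' does not by itself force $|\nabla v|\ge c_2>0$ on $A$ under Assumption \ref{assum:external} ($|\nabla v|$ may have vanishing infimum at infinity), so boundedness of the support must be invoked first. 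Conversely, the paper's route needs only a.e.-defined derivatives ($\Phi_{\mu_m}\in W^{2,1}_{\mathrm{loc}}$ suffices), whereas your Morse argument uses that $w*\mu_m$ is genuinely $C^2$ (true here: $D^2 w\in L^\infty$ convolved with $\mu_m\in L^1$ is continuous) and, importantly, that $D^2V_{\mathrm{eff}}$ is close to the \emph{constant} matrix $D^2v(p_i)$ on each small ball — pointwise invertibility of the Hessian alone would not give injectivity of $\nabla V_{\mathrm{eff}}$ on the ball, so the ``quantitative'' qualifier in your inverse-function step is doing real work and should be spelled out (choose $r$ small by continuity of $D^2v$, then $|\lambda|$ small). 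Finally, note that only ``at most one critical point per ball'' is needed; the degree-theoretic existence claim is superfluous.
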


Theorem~\ref{thm:minimizer} is proven in Section \ref{sect:flocking}. The main idea is that on any patch of liquid phase $0 < \mu_m < B / (2\pi \ell)$, the external potential and the mean-field potential must for optimality add up to a constant. But for small enough $\lambda$ the variations of the mean-field potential cannot possibly cancel those of the external potential, because of the flocking constraint.  

\medskip

The third point of the overall strategy is

\begin{theorem}[\textbf{Upper bound to the many-body energy}]\mbox{} \label{thm:energies up}\\
Let $\mu^{\mathrm{sol}}$ be any fixed probability measure on $\mathbb{R}^2$ in the solid phase, i.e., whose only values are zero and $B/2\pi\ell$. Then there exists (a sequence of) polynomial(s) $f_\delta$, indexed by a positive parameter $\delta\underset{N\to\infty}{\to} 0$, such that
\begin{equation}  \label{eq:upper_bound}
\begin{split}
e (N,\lambda)\le \mathcal{E}_{N,\lambda}&[\Psi_{f_\delta}]\le N\mathcal{E}^\mathrm{MF}[\mu^\mathrm{sol}]\left(1+CN^{-1/4+\gamma}\right).
\end{split}
\end{equation}
for any $\gamma>0$.
\end{theorem}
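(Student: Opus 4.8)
The plan is to build the trial polynomials $f_\delta$ through the \emph{plasma analogy}, exactly in the spirit of the recovery sequences constructed for the $\lambda=0$ case in \cite{LieRouYng-16,LieRouYng-17,RouYng-17}, and then to supplement the resulting one-body estimates with a control of the \emph{two}-body density, which is the genuinely new point here. Write $\mu^{\mathrm{sol}}=\frac{B}{2\pi\ell}\1_\Sigma$. In the original variables $|\Psi_{f}|^2$ is the Gibbs measure of a two-dimensional one-component plasma at inverse temperature $2\ell$, with confining potential $\frac{B}{2}|z|^2-\frac 2\ell\log|f|$. Since $\log|f|$ is harmonic away from the zeros of $f$, a cloud of quasi-holes of normalized density $\nu$ enters the effective mean-field potential as $-\frac1\ell(\log|\cdot|\ast\nu)$, and its Laplacian depletes the plasma equilibrium density by a term $\propto\nu$ wherever the droplet is present. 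I would exploit this to reverse-engineer $f_\delta$: first approximate $\Sigma$ by a set $\Sigma_\delta$ whose boundary is regular at scale $\delta$, then place $K\sim cN$ zeros on a regular grid of scaled spacing $\sim\delta$ covering a large disk minus $\Sigma_\delta$, with local quasi-hole density tuned (each quasi-hole carrying fractional charge $1/\ell$, i.e. emptying scaled area $2\pi/B$) so that the plasma is expelled from the complement of $\Sigma_\delta$ and left at the full value $B/(2\pi\ell)$ on $\Sigma_\delta$. This fixes $f_\delta(z)=\prod_m(z-\sqrt N\,b_m)$.

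Next I would control the one-body density. By the quantitative mean-field analysis of such plasmas (as in \cite{RouSerYng-13a,RouSerYng-13b,LieRouYng-17}), the rescaled $\mu_{f_\delta}^{(1)}$ converges to the equilibrium measure of the effective potential, which by construction is $\frac{B}{2\pi\ell}\1_{\Sigma_\delta}$, with an error measured in a negative Sobolev or bounded-Lipschitz norm. Two sources of error appear: a boundary-layer error of size $O(\delta)$ from the regularization $\Sigma\rightsquigarrow\Sigma_\delta$ and from the discreteness of the grid, and a fluctuation error of size $O((\sqrt N\,\delta)^{-1})$ from the mean-field approximation resolved at scale $\delta$.

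The hard part is the two-body density. To evaluate the interaction term in \eqref{eq:qm_functional} I need $\mu_{f_\delta}^{(2)}\rightharpoonup\mu^{\mathrm{sol}}\otimes\mu^{\mathrm{sol}}$ tested against $w$; since $w\in W^{2,\infty}$ this only requires a quantitative \emph{weak} convergence, which I would obtain from the screening of the 2D plasma: its truncated two-point function decays on the microscopic scale, so at the macroscopic resolution seen by $w$ the pair density factorizes into the product of one-body densities up to an error of the same polynomial order. Equivalently, this factorization follows from concentration of the plasma empirical measure around its mean-field profile, of the type underlying the deviation estimate of Theorem~\ref{thm:exponential_bound}. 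This is the main obstacle, because the factorization must be made quantitative and uniform in the (inhomogeneous) quasi-hole configuration, rather than merely asymptotic as in the de Finetti argument used for Theorem~\ref{thm:densities}.

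Finally I would assemble the energy by inserting both density convergences into \eqref{eq:qm_functional}. The regularity of $v$ (in $C^2$) and of $w$ (in $W^{2,\infty}$) converts the negative-norm density errors into relative energy errors of the same order, giving $\cE_{N,\lambda}[\Psi_{f_\delta}]=N\cE^{\mathrm{MF}}[\mu^{\mathrm{sol}}]\bigl(1+O(\delta)+O((\sqrt N\,\delta)^{-1})\bigr)$. Choosing the free parameter as $\delta\sim N^{-1/4}$ balances the two competing errors and produces the stated rate $N^{-1/4+\gamma}$, the loss $\gamma>0$ absorbing logarithmic factors and the slack in the concentration estimate. The left inequality $e(N,\lambda)\le\cE_{N,\lambda}[\Psi_{f_\delta}]$ is then immediate, since $\Psi_{f_\delta}$ is of the form \eqref{eq:restricted_class} and hence admissible in the variational problem \eqref{eq:qh_energy}.
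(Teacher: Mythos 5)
Your proposal follows essentially the same route as the paper's proof: reverse-engineer the quasi-hole configuration so that the plasma equilibrium measure is (a discretized regularization of) $\mu^{\mathrm{sol}}$ (Proposition \ref{prop:inverse_electrostatic}, adapted from \cite[Prop.~3.1]{RouYng-17}), control the one- and two-body densities of $\Psi_{f_\delta}$ by quantitative Coulomb-gas mean-field estimates (Proposition \ref{prop:mf_upper}), and balance the construction error against the fluctuation error to obtain the rate $N^{-1/4+\gamma}$. Two remarks: the two-body factorization you flag as ``the main obstacle'' is not re-derived in the paper but imported directly from \cite[Remark~3.3]{RouSerYng-13b} and \cite[Lemma~7.4]{RouSer-14} --- quantitative concentration of the Coulomb-gas Gibbs measure around its mean-field profile, i.e.\ precisely the mechanism you invoke, already available uniformly in the quasi-hole background; and your final assembly glosses over the unbounded growth of $v$ at infinity (so that $\nabla v\notin L^2(\mathbb{R}^2)$ globally), which the paper handles with a smooth partition of unity $\chi_{\mathrm{in}},\chi_{\mathrm{out}}$ at scale $\log N$ combined with the pointwise Gaussian decay bounds \eqref{eq:pointwise_bounds} on $\mu^{(1)}_{f_\delta}$ and $\mu^{(2)}_{f_\delta}$.
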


Basically the above follows by adapting the methods of~\cite{RouYng-17}, but we have to handle in addition the two-body term of the energy, absent in the aforementioned reference. We remark that requiring $\mu^\mathrm{sol}$ to be in the solid phase is crucial. Indeed, essentially any trial state of the form~\eqref{eq:restricted_class} constructed from uncorrelated quasi-holes leads to a density of such a form.

\medskip

We will prove~\eqref{eq:lower_bound} in Section~\ref{sect:lower_bound} and \eqref{eq:upper_bound} in Section \ref{sect:upper_bound}. We explain here how combining Theorems~\ref{thm:energies low}-\ref{thm:minimizer}-\ref{thm:energies up} yields the proof of Theorem \ref{thm:main}. Theorem \ref{thm:densities} requires other ingredients, provided in Section \ref{sect:densities}.

\begin{proof}[Proof of Theorem \ref{thm:main}]
First, recall that $E (N,\lambda)\le e (N,\lambda)$ as already stated in \eqref{eq:trivial_upper_bound}. Moreover, under the assumptions of Theorem~\ref{thm:main},  there exists by Theorem \ref{thm:minimizer} a minimizer $\mu_m$ of the flocking problem in the solid phase. Hence, we are allowed to choose $\mu^\mathrm{sol}=\mu_m$ in \eqref{eq:upper_bound}, and we have
\begin{equation*}
\mathcal{E}^{\mathrm{MF}}[\mu^\mathrm{sol}]=\mathcal{E}^\mathrm{MF}[\mu_m]=E^\mathrm{flo}
\end{equation*}
By combining \eqref{eq:lower_bound} and \eqref{eq:upper_bound} we deduce
\begin{equation} \label{eq:proof_energies}
NE^\mathrm{flo}\big(1-CN^{(-3+\sqrt{5})/4+\gamma}\big)\le  E (N,\lambda)\le e (N,\lambda)\le NE^\mathrm{flo}\big(1+CN^{-1/4+\gamma}\big),
\end{equation}
which proves Theorem \ref{thm:main}.
\end{proof}

\section{Solid-phase minimizers of the flocking problem} \label{sect:flocking}

This section is devoted to the proof of Theorem \ref{thm:minimizer}. Recall that we are considering the functional
\begin{equation*}
\mathcal{E}^\mathrm{MF}[\mu]=\int_{\mathbb{R}^2}v(x)\mu(x)dx+\frac{\lambda}{2}\iint_{\mathbb{R}^2\times\mathbb{R}^2}w(x-y)\mu(x)\mu(y)dxdy
\end{equation*}
on the variational set
\begin{equation*}
\mathcal{M}=\left\{ \mu\in L^1(\mathbb{R}^2)\cap L^\infty(\mathbb{R}^2)\;|\;0\le\mu\le\frac{B}{2\pi\ell}\right\}
\end{equation*}
and the minimization problem
\begin{equation}\label{eq:Eflo again} 
E^{\mathrm{flo}}=\inf\left\{\mathcal{E}^{\mathrm{MF}}[\mu]\;|\;\mu\in\mathcal{M}, \,\int_{\mathbb{R}^2}\mu=1 \right\}.
\end{equation}
A priori, the set $L^1(\mathbb{R}^2)$ is not closed with respect to the weak topology, and this would not allow to deduce that minimizing sequences have a weak limit (up to a subsequence). On the other hand, we can see the above minimization problem as defined on probability measures $\mu$ such that
\begin{equation*}
\frac{B}{2\pi\ell}dx-\mu\ge0,\,\text{ and }\,\int_{\mathbb{R}^2}\mu=1.
\end{equation*}
The first condition actually implies that every such $\mu$ is absolutely continuous with respect to the Lebesgue measure, and that its density $\mu(x)$ satisfies $0\le\mu(x)\le B/2\pi\ell$ on $\mathbb{R}^2$. This implies that the set $\mathcal{M}$, as a subset of $L^1(\mathbb{R}^2)$, is closed under the weak topology.

\subsection{Existence of minimizers}\label{sec:exist}

This is fairly straightforward. See e.g.~\cite{SafTot-97,Serfaty-15} for background and~\cite{BurChoTop-15,FraLie-16,LieRouYng-17} for similar problems. Since $v$ and $w$ are continuous and bounded below, a monotone convergence argument shows that the map $\mu\mapsto \mathcal{E}^\mathrm{MF}[\mu]$ is lower semicontinuous. Since $v(x) \underset{|x|\to \infty} \to +\infty$, any minimizing sequence must be tight as a sequence of probability measures, and converge to an element of $\mathcal{M}$ with mass $1$. By lower semicontinuity this gives the existence of a minimizer for~\eqref{eq:Eflo again}.
%
%
%

\subsection{Solid phase}

We now prove that, for small enough coupling constants, the minimizers are in the solid phase. For any $\mu\in\mathcal{M}$, define the total potential
\begin{equation}
\Phi_\mu=v+\lambda w*\mu.
\end{equation}
Clearly, by Young's inequality, Assumption~\ref{assum:interaction} implies the 

\begin{lemma}[\textbf{Bounds on the mean-field potential}]\label{lem:pot}\mbox{}\\
For $w\in W^{2,\infty}(\mathbb{R}^2)$ and $\mu\in L^1(\mathbb{R}^2)$ we have
\begin{equation}
\begin{split}  \label{eq:bounded_convolution}
\|w*\mu\|_\infty\le\;& C_w\|\mu\|_{L^1}\\
\|\nabla w*\mu\|_\infty\le\;& C_w\|\mu\|_{L^1}\\
\left\| \partial_{\yv}^2 w*\mu\right\|_\infty\le\;& C_w\|\mu\|_{L^1}
\end{split}
\end{equation}
for any unit vector $\yv$ and for a constant $C_w>0$ that depends only on $w$.
\end{lemma}

\begin{lemma}[\textbf{Variational inequalities}]\mbox{}\label{lemma:inequalities_external}\\ 
	Let $\mu$ be a local minimizer of $\mathcal{E}^{\mathrm{MF}}_{}$ on $\mathcal{M}$. Then there exists $\gamma\in\mathbb{R}$ such that, for almost every $x$,
	\begin{equation} \label{eq:variational_external}
	\begin{split}
	\Phi_{\mu}(x)\le \gamma&\quad\text{if } \mu(x)=\frac{B}{2\pi\ell}\\
	\Phi_{\mu}(x)= \gamma&\quad\text{if } 0<\mu(x)<\frac{B}{2\pi\ell}\\
	\Phi_{\mu}(x)\ge \gamma&\quad\text{if } \mu(x)=0.
	\end{split}
	\end{equation}
\end{lemma}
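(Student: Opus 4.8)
The plan is to derive \eqref{eq:variational_external} as the first-order (Euler--Lagrange) optimality conditions associated with the constrained minimization \eqref{eq:Eflo again}, treating the mass constraint via a Lagrange multiplier $\gamma$ and the pointwise box constraint $0\le\mu\le B/(2\pi\ell)$ via the usual complementary-slackness/bathtub reasoning. The key observation is that the functional $\mathcal{E}^{\mathrm{MF}}$ is (formally) differentiable and that, since $w$ is even, its first variation at $\mu$ in a direction $h$ is
\begin{equation}\label{eq:first_variation}
D\mathcal{E}^{\mathrm{MF}}[\mu](h)=\int_{\mathbb{R}^2}\big(v+\lambda\, w*\mu\big)(x)\,h(x)\,dx=\int_{\mathbb{R}^2}\Phi_\mu(x)\,h(x)\,dx.
\end{equation}
The factor $1/2$ in front of the double integral is exactly cancelled by the two symmetric ways of perturbing the product $\mu(x)\mu(y)$, using $w(-x)=w(x)$; this is the only place the evenness of $w$ enters. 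Lemma~\ref{lem:pot} guarantees that $\Phi_\mu\in L^\infty_{\mathrm{loc}}$, so this variation is well defined for bounded compactly supported perturbations $h$.

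First I would fix admissible perturbations. Given a local minimizer $\mu\in\mathcal{M}$ with $\int\mu=1$, I consider perturbations $\mu_t=\mu+t h$ that remain in $\mathcal{M}$ for small $t\ge0$ and preserve the mass, i.e. $\int h=0$. To stay inside the box $0\le\mu_t\le B/(2\pi\ell)$, the perturbation $h$ must be nonnegative where $\mu=0$ and nonpositive where $\mu=B/(2\pi\ell)$, while it is unconstrained in sign on the set $\{0<\mu<B/(2\pi\ell)\}$. Minimality gives $D\mathcal{E}^{\mathrm{MF}}[\mu](h)=\int\Phi_\mu h\ge0$ for every such admissible $h$. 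The standard argument is then to transport a small amount of mass between two points (or small balls) $x_0,x_1$ in prescribed regions: choosing $h$ to be (approximately) $\mathbf{1}_{B(x_1,r)}-\mathbf{1}_{B(x_0,r)}$ suitably normalized, letting $r\to 0$, and invoking the Lebesgue differentiation theorem to pass from integrated to pointwise inequalities almost everywhere.

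The mechanism producing $\gamma$ is the following. If both $x_0$ and $x_1$ lie in the interior region $\{0<\mu<B/(2\pi\ell)\}$, mass can be moved in either direction, so $\int\Phi_\mu h\ge0$ for $h$ and for $-h$, forcing $\Phi_\mu(x_0)=\Phi_\mu(x_1)$; hence $\Phi_\mu$ equals a constant $\gamma$ a.e. on the intermediate set. If $x_1$ lies where $\mu(x_1)=0$ (mass can only be added there) and $x_0$ in the interior set (mass can be removed), one-sided movement gives $\Phi_\mu(x_1)\ge\gamma$; symmetrically $\mu(x_0)=B/(2\pi\ell)$ yields $\Phi_\mu(x_0)\le\gamma$. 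This reproduces the three cases of \eqref{eq:variational_external}. When the intermediate set has measure zero, $\gamma$ is instead pinned as a separating value between $\mathrm{ess\,sup}$ of $\Phi_\mu$ on the saturated set and $\mathrm{ess\,inf}$ on the empty set, which is the usual bathtub dichotomy; this degenerate case must be handled but poses no real difficulty.

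The main obstacle I anticipate is purely technical: justifying that \eqref{eq:first_variation} is the genuine derivative along $\mu_t=\mu+th$ uniformly enough to extract the inequality, and legitimizing the localization-to-a-point limit. The nonlinear (quadratic) interaction term contributes $\frac{\lambda}{2}\iint w(x-y)\big(\mu(x)h(y)+h(x)\mu(y)+t\,h(x)h(y)\big)$, whose cross term vanishes linearly in $t$ thanks to the $L^\infty$ bound $\|w*h\|_\infty\le C_w\|h\|_{L^1}$ from Lemma~\ref{lem:pot}; combined with boundedness of $\Phi_\mu$ this gives $\mathcal{E}^{\mathrm{MF}}[\mu_t]=\mathcal{E}^{\mathrm{MF}}[\mu]+t\int\Phi_\mu h+O(t^2)$, so the sign of the linear term is forced. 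The passage to almost-everywhere pointwise statements via shrinking balls then only requires that $\Phi_\mu$ be locally integrable, which Lemma~\ref{lem:pot} and the continuity of $v$ ensure. Thus the proof is a careful but standard variational argument; the content is organizational rather than deep.
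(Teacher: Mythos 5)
Your proposal is correct and follows essentially the same route as the paper: the paper likewise tests the energy along $\mu_\eps=\mu+\eps\varphi$ with $\varphi$ nonnegative on $\{\mu=0\}$, nonpositive on $\{\mu=B/(2\pi\ell)\}$, and $\int\varphi=0$, obtaining the first-order condition $\int\Phi_\mu\varphi\ge 0$ and deducing \eqref{eq:variational_external} (citing the analogous Lemma~4.2 of \cite{BurChoTop-15} for the standard details). Your write-up simply makes explicit the steps the paper leaves implicit — the $O(t^2)$ control of the quadratic term via Lemma~\ref{lem:pot}, the mass-transport/Lebesgue-differentiation extraction of the constant $\gamma$, and the degenerate case where the intermediate set is null — all of which are consistent with the paper's argument.
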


\begin{proof}
The proof is a straightforward adaptation of~\cite[Lemma 4.2]{BurChoTop-15}. The variational inequalities are obtained in the standard way by testing the energy of $\mu_\eps := \mu + \eps \varphi$ with $\eps >0$ a small number and $\varphi$ a (say smooth) function satisfying
$$ \begin{cases}
    \varphi \geq 0 \mbox{ where } \mu = 0\\
    \varphi \leq 0 \mbox{ where } \mu = \frac{B}{2\pi\ell}\\
    \int_{\R^2} \varphi = 0.
   \end{cases}
$$
To lower order in $\eps$ the condition for $\mu_\eps$ to have a larger energy than $\mu$ is
$$ \int_{\R^2} \left( v + \lambda w * \mu \right) \varphi \geq 0.$$
This being so for any $\varphi$ as above proves the claimed variational inequalities.
\end{proof}

We now complete the proof of Theorem \ref{thm:minimizer}.
\begin{proof}[Proof of Theorem \ref{thm:minimizer}, Item $(ii)$] Pick a global minimizer $\mu_\mathrm{m}$ and define
	\begin{equation*}
	A:=\Big\{x\in\mathbb{R}^2\,|\,0<\mu_\mathrm{m}(x)<\frac{B}{2\pi\ell}\Big\}.
	\end{equation*}
	We will show that $A$ has measure zero for a range of coupling constants. Assume first 		that $A$ does not accumulate to any critical point of $v$, i.e., there exists $c_1$ such that
	\begin{equation*}
	|x-x_k|> c_1>0\quad\forall x \in A,
	\end{equation*}	
	where $(x_k)_k$ is the \emph{finite} set of critical points of $v$. This implies the existence of $c_2$ such that
	\begin{equation*}
	|\nabla v(x)|> c_2>0\quad\forall x \in A.
	\end{equation*}
	Let us assume that $A$ is an open set. Then, since by \eqref{eq:variational_external} $\Phi_{\mu_\mathrm{m}}$ is constant on $A$, we can differentiate and obtain
	\begin{equation} \label{eq:equality_on_A}
	\nabla v+\lambda \nabla w*\mu_\mathrm{m}=0 \qquad\text{ on }A.
	\end{equation}
	By taking $\lambda_0'=c_2/\|\nabla w*\mu_m\|_\infty$ we see that last two formulae are in contradiction for any $|\lambda| \leq\lambda_0'$ due to
\begin{equation*}
c_2<|\nabla v(x)|=|\lambda| |\nabla w*\mu_m|\le\lambda_0'|\nabla w*\mu_m|\le \frac{c_2}{\|\nabla w*\mu_m\|_\infty}|\nabla w*\mu_m|\le c_2.
\end{equation*}	
	 This excludes the possibility of $A$ being open and non-empty, and more generally of $A$ containing an open set.
	
	We still have to rule out the case in which $A$ has positive measure without containing open sets. However, if a function belongs to $W^{1,1}_{\mathrm{loc}}$, then its gradient vanishes almost everywhere on the pre-image of any point (see, e.g., \cite[Theorem 6.19]{LieLos-01} or \cite[Proposition 4]{FraLie-16}). It is easy to see that $\Phi_{\mu_\mathrm{m}}$ belongs to $W^{1,1}_\mathrm{loc}$ since $v\in C^2$ and the term containing $w$ is controlled in $W^{1,\infty}\subset W^{1,1,}_\mathrm{loc}$ using \eqref{eq:bounded_convolution}. Hence we have that \eqref{eq:equality_on_A} still holds almost everywhere on $A$, and we encounter the same contradiction as above for $|\lambda| \leq\lambda_0'$.
	
	Consider now the case where $A$ might accumulate to  critical points $\{x_1,\dots,x_M\}$ of $v$. The latter are non-degenerate by Assumption \ref{assum:external}. Then, for $k\in\{1,\dots,N\}$, at the critical point $x_k$ there exists a direction $\yv_k$ along which the second derivative of $v$ at $x_k$ is different from zero. Since $v\in C^2$, this is true in a neighborhood of each $x_k$, i.e., there exist $c_{3,k},c_{4,k}>0$ such that
\begin{equation} \label{eq:non_degeneracy}
\Big|\frac{\partial^2}{\partial \yv_k^2}v(x)_{| x=x_k}\Big|>c_{3,k},\quad\forall x\in D (x_k,c_{4,k})
\end{equation}
for any $k=1,\dots,M$, where $D(a,R)$ is the disk of center $a$ and radius $R$.

Define the set
\begin{equation*}
A'=A\setminus\bigcup_{k=1}^M \left( D (x_k,c_{4,k}) \cap A\right),
\end{equation*}
i.e. the set $A$ from which we removed a neighborhood of each critical point of $v$ to which $A$ accumulates. On $A'$,
\begin{equation*}
|x-x_k|> c_4:=\min_k c_{4,k}>0,
\end{equation*}
which implies $|\nabla v(x)|>c_5>0$ for some $c_5$. We can then replicate the above argument and deduce that $A'$ has zero measure for $|\lambda| \leq\lambda_0''$ with $\lambda_0''=c_5/\|\nabla w*\mu_m\|_\infty$.

On $D_{x_k}(c_{4,k})\cap A$, in turn, we can differentiate \eqref{eq:variational_external} twice along the $\yv_k$ direction, thus obtaining
\begin{equation} \label{eq:second_differentiation}
\frac{\partial^2}{\partial \yv_k^2}v(x)+\lambda \frac{\partial^2}{\partial \yv_k^2}w*\mu_m=0.
\end{equation}
This formula is valid almost everywhere even if $D (x_k,c_{4,k}) \cap A$ does not contain an open set since, as above, it only requires $\Phi_{\mu_m}\in W^{2,1}_\mathrm{loc}$ (see again \cite[Proposition 4]{FraLie-16}). Notice that $\partial_{\yv_k}^2\Phi_{\mu_m}\in L^1_\mathrm{loc}$ follows because $v\in C^2$ and $\partial_{\yv_k}^2w*\mu_m\in L^\infty\subset L^{1}_\mathrm{loc}$ due to \eqref{eq:bounded_convolution}. By choosing
\begin{equation*}
\lambda_{0,k}'''=\frac{c_{3,k}}{\Big\|\dfrac{\partial^2}{\partial\yv_k^2} w*\mu_m\Big\|_\infty}
\end{equation*}
we see that \eqref{eq:non_degeneracy} and \eqref{eq:second_differentiation} are in contradiction for $|\lambda| \leq\lambda_{0,k}'''$ and hence $B_{x_k}(c_{4,k})\cap A$ must have zero measure.
The proof is concluded by taking
\begin{equation*}
\lambda_0=\min\big\{\lambda_0',\lambda_0'',\lambda_{0,1}''',\dots,\lambda_{0,M}'''\big \},
\end{equation*}
because, by what we discussed, $A$ must have zero measure for $|\lambda|\leq \lambda_0$. Notice that $\lambda_{0}',\lambda_0'',\lambda_{0,k}'''>0$, whence $\lambda_0>0$.
\end{proof}

\subsection{Slightly perturbed problems}
%
%
%

In Section \ref{sect:lower_bound} we will need to consider the minimization of $\mathcal{E}^\mathrm{MF}$ with a perturbed upper constraint. We now prove a result of continuous dependence on such perturbations. Define
\begin{equation*}
\mathcal{M}_\varepsilon=\Big\{ \mu\in L^1(\mathbb{R}^2)\cap L^\infty(\mathbb{R}^2)\;|\;0\le\mu\le(1+\varepsilon)\frac{B}{2\pi\ell},\,\int_{\mathbb{R}^2}\mu=1\Big\}
\end{equation*}
and
\begin{equation}  \label{eq:energy_perturbed_constraint}
E^{\mathrm{flo}}_{\varepsilon}=\inf\big\{\mathcal{E}^{\mathrm{MF}}[\mu]\;|\;\mu\in\mathcal{M}_\varepsilon\big\}.
\end{equation}

\begin{lemma}[\textbf{Dependence on the flocking constraint}]\mbox{}\label{lemma:dependence_upper_constraint}\\ 
	There exists a constant $C>0$ such that for any $\varepsilon > 0$ small enough we have
	\begin{equation*}
	\begin{split}
	E^\mathrm{flo}_{\varepsilon}\le E^\mathrm{flo}\le E^\mathrm{flo}_{\varepsilon}+C\varepsilon
	\end{split}
	\end{equation*}
\end{lemma}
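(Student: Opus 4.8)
The lower bound $E^\mathrm{flo}_\varepsilon \le E^\mathrm{flo}$ is immediate and requires no work: every $\mu$ admissible for $E^\mathrm{flo}$ satisfies $0 \le \mu \le \frac{B}{2\pi\ell} \le (1+\varepsilon)\frac{B}{2\pi\ell}$, hence also competes in $\mathcal{M}_\varepsilon$, and taking the infimum over the larger set $\mathcal{M}_\varepsilon$ can only lower the energy. The real content is the reverse inequality $E^\mathrm{flo} \le E^\mathrm{flo}_\varepsilon + C\varepsilon$, and the plan is to turn a (near-)minimizer of the relaxed problem into an admissible competitor for the sharp one at an $O(\varepsilon)$ cost, by a ``compress and refill'' construction.

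Concretely, I would fix a minimizer $\mu_\varepsilon \in \mathcal{M}_\varepsilon$ of $E^\mathrm{flo}_\varepsilon$ (its existence follows verbatim from the argument of Section~\ref{sec:exist}; alternatively one runs the estimates below with an arbitrary competitor and takes an infimum at the end, avoiding existence altogether). The dilation $\frac{1}{1+\varepsilon}\mu_\varepsilon$ already obeys the sharp constraint $0 \le \frac{1}{1+\varepsilon}\mu_\varepsilon \le \frac{B}{2\pi\ell}$, but carries mass $\frac{1}{1+\varepsilon}$ only. To restore unit mass I would add a nonnegative density $\nu$ of mass $\frac{\varepsilon}{1+\varepsilon}$ supported in a fixed disk $D(0,R)$ and sitting below the residual room, e.g.
$$ \nu = \frac{\varepsilon}{1+\varepsilon}\,\frac{\left(\frac{B}{2\pi\ell}-\frac{1}{1+\varepsilon}\mu_\varepsilon\right)\mathbbm{1}_{D(0,R)}}{\displaystyle\int_{D(0,R)}\left(\frac{B}{2\pi\ell}-\frac{1}{1+\varepsilon}\mu_\varepsilon\right)}. $$
Since $\int_{D(0,R)}\mu_\varepsilon \le 1$, the denominator is bounded below by $\frac{B}{2\pi\ell}\pi R^2 - 1$, which for $R$ fixed large enough exceeds the numerator's mass once $\varepsilon$ is small; hence $\nu \le \frac{B}{2\pi\ell}-\frac{1}{1+\varepsilon}\mu_\varepsilon$ pointwise and the candidate $\tilde\mu := \frac{1}{1+\varepsilon}\mu_\varepsilon + \nu$ lies in $\mathcal{M}$ with $\int\tilde\mu = 1$. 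This refill step is the crux, and the only genuinely delicate point is the room estimate certifying that $\tilde\mu$ stays below the threshold $\frac{B}{2\pi\ell}$.

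It then remains to bound $\mathcal{E}^\mathrm{MF}[\tilde\mu] - \mathcal{E}^\mathrm{MF}[\mu_\varepsilon]$ by $C\varepsilon$. For the potential term, $\int v\,\tilde\mu = \frac{1}{1+\varepsilon}\int v\,\mu_\varepsilon + \int v\,\nu \le \int v\,\mu_\varepsilon + \big(\sup_{D(0,R)} v\big)\tfrac{\varepsilon}{1+\varepsilon}$, where the first inequality uses $v \ge 0$ together with $\frac{1}{1+\varepsilon}\le 1$ (so compression only helps), and the second uses that $\nu$ lives in the fixed compact disk on which the continuous $v$ is bounded; this is $O(\varepsilon)$. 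For the interaction term I would set $\delta := \tilde\mu - \mu_\varepsilon = -\frac{\varepsilon}{1+\varepsilon}\mu_\varepsilon + \nu$, note $\int|\delta| \le 2\varepsilon$, and expand
$$ \iint w(x-y)\big(\tilde\mu\otimes\tilde\mu - \mu_\varepsilon\otimes\mu_\varepsilon\big) = \iint w\,\big(\delta\otimes\mu_\varepsilon + \mu_\varepsilon\otimes\delta + \delta\otimes\delta\big), $$
each summand being controlled by $\|w\|_\infty\,\int|\delta|\,(\int\mu_\varepsilon + \int|\delta|) \le C_w\,\varepsilon$, thanks to $w\in L^\infty$ (Assumption~\ref{assum:interaction}); this is again $O(\varepsilon)$, uniformly in the signs of $\lambda$ and $w$ since we bound absolute values.

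Collecting the two estimates gives $\mathcal{E}^\mathrm{MF}[\tilde\mu] \le E^\mathrm{flo}_\varepsilon + C\varepsilon$, and since $\tilde\mu$ is admissible for $E^\mathrm{flo}$ we conclude $E^\mathrm{flo} \le \mathcal{E}^\mathrm{MF}[\tilde\mu] \le E^\mathrm{flo}_\varepsilon + C\varepsilon$. I expect the room estimate of the second paragraph to be the only point needing care; once the refill $\nu$ is placed in a fixed compact region, the energy bounds are routine consequences of the boundedness of $w$ and of the positivity and local boundedness of $v$. Note in particular that this route sidesteps any growth control on $\nabla v$ that a pure spatial-dilation argument would require.
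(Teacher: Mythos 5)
Your proof is correct, but it takes a genuinely different route from the paper's. The paper performs a \emph{local truncation}: it identifies the violation set $\Omega_\varepsilon=\{\mu_\varepsilon>B/2\pi\ell\}$, shows via the mass constraint that $|\Omega_\varepsilon|\le 2\pi\ell/B$ and hence that the excess $f_\varepsilon=\mu_\varepsilon-B/2\pi\ell$ carries total mass at most $\varepsilon$, chops it off, and redistributes that mass uniformly over a compact set $K$ on which $\mu_\varepsilon$ is asserted (with a one-line justification) to leave enough room. You instead perform a \emph{global compression}: $\frac{1}{1+\varepsilon}\mu_\varepsilon$ satisfies the sharp constraint automatically, and the missing mass $\frac{\varepsilon}{1+\varepsilon}$ is refilled proportionally to the available room $\frac{B}{2\pi\ell}-\frac{1}{1+\varepsilon}\mu_\varepsilon$ on a fixed disk $D(0,R)$. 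Your approach buys two small advantages: the admissibility of the trial density is immediate (no need to analyze where the constraint is violated), and your room estimate is fully quantitative --- the bound $\int_{D(0,R)}\bigl(\frac{B}{2\pi\ell}-\frac{1}{1+\varepsilon}\mu_\varepsilon\bigr)\ge \frac{B}{2\pi\ell}\pi R^2-1$ gives an explicit choice of $R$, whereas the paper's existence claim for $K$ is left to the reader. The paper's truncation is in turn more economical in that it only moves the genuinely excess mass (supported on $\Omega_\varepsilon$), rather than perturbing $\mu_\varepsilon$ everywhere; the energy bookkeeping is essentially identical in both arguments, resting on $v\ge 0$ and locally bounded, $w\in L^\infty$, and $O(\varepsilon)$ total mass displacement, and both arguments rely on (or, as you note, can dispense with) the existence of a minimizer for the relaxed problem. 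One point worth making explicit in your write-up: the pointwise bound $\nu\le \frac{B}{2\pi\ell}-\frac{1}{1+\varepsilon}\mu_\varepsilon$ holds because $\nu$ is a multiple $c$ of the room restricted to $D(0,R)$ with $c=\frac{\varepsilon/(1+\varepsilon)}{\int_{D(0,R)}(\text{room})}\le 1$ once the denominator exceeds $1$, which your choice of $R$ guarantees for all $\varepsilon$, not merely small $\varepsilon$.
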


\begin{proof}
	The inequality
	\begin{equation*}
	E_{\varepsilon}^{\mathrm{flo}}\le E^\mathrm{flo}
	\end{equation*}
	is trivial in view of the definition of the variational sets.
	
	To prove the opposite inequality, let us fix a minimizer $\mu_\varepsilon$ for 				\eqref{eq:energy_perturbed_constraint}. We will construct a suitable trial function for 		the problem with $\varepsilon=0$ by starting from $\mu_\varepsilon$ and removing all the 	mass that exceeds the upper constraint $B/2\pi\ell$. Define
	\begin{equation*}
	\Omega_\varepsilon=\Big\{\mu_\varepsilon>\frac{B}{2\pi\ell}\Big\}
	\end{equation*}
	Notice that
	\begin{equation*}
	1=\int_{\mathbb{R}^2}\mu_\varepsilon\ge \int_{\Omega_\varepsilon}\mu_\varepsilon>\frac{B}{2\pi\ell}|\Omega_\varepsilon|
	\end{equation*}
	and therefore, by defining
	\begin{equation*}
	f_\varepsilon=\mu_\varepsilon-\frac{B}{2\pi\ell},
	\end{equation*}
	we have
	\begin{equation*}
	\int_{\Omega_\varepsilon}f_\varepsilon\le|\Omega_\varepsilon|\Big((1+\varepsilon)\frac{B}{2\pi\ell}-\frac{B}{2\pi\ell}\Big)\le \varepsilon.
	\end{equation*}
	Let $K$ be a measurable compact set such that 
	$$ \mu_\eps \leq \frac{B}{2\pi \ell} - \frac{1}{|K|} \int_{\Omega_{\eps}} f_\eps$$
	on $K$. Clearly such a set must exist, in view of the normalization of $\mu_\eps$. Define
	\begin{equation*}
	\widetilde\mu=\mu_\varepsilon-\mathbbm{1}_{\Omega_\varepsilon}f_\varepsilon+\frac{\mathbbm{1}_K}{|K|}\int_{\Omega_\varepsilon}f_\varepsilon.
	\end{equation*}
	By construction it satisfies
	\begin{equation*}
	\int\widetilde\mu=1-\int_{\Omega_\varepsilon}f_\varepsilon+\int_{\Omega_\varepsilon}f_\varepsilon=1
	\end{equation*}
	and
	\begin{equation*}
	\widetilde\mu=\begin{cases}
	\mu_\varepsilon+\frac{\mathbbm{1}_K}{|K|}\int_{\Omega_\varepsilon}f_\varepsilon\le \frac{B}{2\pi\ell}\qquad&\text{on }\mathbb{R}^2\setminus \Omega_\varepsilon\\
	\frac{B}{2\pi\ell}\qquad&\text{on } \Omega_\varepsilon
	\end{cases}
	\end{equation*}
	It is thus a suitable trial function for the minimization problem with $\varepsilon=0$.
Hence
\begin{align*}
E^\mathrm{flo}&\le\mathcal{E}^\mathrm{MF}[\widetilde\mu]=\mathcal{E}^\mathrm{MF}[\mu_\varepsilon]-\int_{\Omega_\varepsilon}vf_\varepsilon+\Big(\int_{\Omega_\varepsilon}f_\varepsilon\Big)\frac{1}{|K|}\int_K v\\
&+\lambda\iint_{\mathbb{R}^2\times\mathbb{R}^2}w(x-y)\Big[-\mu_\varepsilon(x)f_\varepsilon(y)\mathbbm{1}_{\Omega_\varepsilon}(y) +\Big(\int_{\Omega_\varepsilon}f_\varepsilon\Big)\mu_\varepsilon(x)\frac{\mathbbm{1}_K(y)}{|K|}\\
&-\Big(\int_{\Omega_\varepsilon}f_\varepsilon\Big)f_\varepsilon(x)\mathbbm{1}_{\Omega_\varepsilon}(x)\frac{\mathbbm{1}_K(y)}{|K|} +2f_\varepsilon(x)\mathbbm{1}_{\Omega_\varepsilon}(x)f_\varepsilon(y)\mathbbm{1}_{\Omega_\varepsilon}(y)+2\Big(\int_{\Omega_\varepsilon}f_\varepsilon\Big)^2\frac{\mathbbm{1}_K(x)\mathbbm{1}_K(y)}{|K|^2}\Big]dxdy.
\end{align*}
It is easy to see, using \eqref{eq:bounded_convolution} for the $w$-terms, that every summand except for $\mathcal{E}^\mathrm{MF}[\mu_\varepsilon]$ can be estimated from above by
\begin{equation*}
C\int_{\Omega_\varepsilon}f_\varepsilon\le C\varepsilon.
\end{equation*}
This implies
\begin{equation*}
E^\mathrm{flo}\le\mathcal{E}^\mathrm{MF}[\mu_\varepsilon]+C\varepsilon=E^\mathrm{flo}_\varepsilon+C\varepsilon,
\end{equation*}
which concludes the proof.
\end{proof}

For the minimization problem with perturbed upper constraint, we will also need to consider a functional with an external potential which is modified at large distances. For any $L>0$ let us denote by $\mathcal{S}_L$ the square of side $2L$ centered at the origin, 
\begin{equation*}
\mathcal{S}_L=[-L,L]^{2}
\end{equation*}
By our assumption \eqref{eq:growth_trap}, if $L$ is large enough, we can certainly construct a new potential $U_L$ of the type
\begin{equation*}
U_L(x)=\begin{cases}
v(x)\quad&x\in\mathcal{S}_L\\
|x|\quad&x\notin\mathcal{S}_{L+1}\\
C^1(\mathbb{R}^2)\text{-interpolation between }v(x)\text{ and }|x|\quad&x\in\mathcal{S}_{L+1}\setminus\mathcal{S}_L
\end{cases}
\end{equation*}
with the properties
\begin{align}
v\ge&\; U_L \label{eq:v_vs_U_L}\\ 
\label{eq:gradient_U_L} \|\nabla U_L\|_\infty \le&\; C_L\\
\label{eq:growth_U_L} U_L(x)\ge |x|\ge&\; L,\qquad\text{for }x\notin \mathcal{S}_L.
\end{align}

Let us define, for any probability measure $\mu$ such that the integrals make sense,
\begin{equation*}
\mathcal{E}^\mathrm{MF}_L[\mu]=\int U_Ld\mu+\frac{\lambda}{2}\int w(x-y)d\mu(x)d\mu(y).
\end{equation*}
We have the following result, proving that such a modification of the external potential does not play a role. Neither does  the removal of the upper constraint at large distances.

\begin{lemma}[\textbf{Perturbed problem at large distances}]\label{lemma:perturbed_L}\mbox{}\\
For $L$ large enough we have
\begin{equation} \label{eq:large_distance_perturbed_problem}
\inf\Big\{\mathcal{E}_L^\mathrm{MF}[\mu]\;|\;\mu\in\mathcal{P}(\mathbb{R}^2),\; (1+\varepsilon)\frac{B}{2\pi\ell}dx-\mu\ge0\text{ on }\mathcal{S}_L\Big\}=E^{\mathrm{flo}}_\varepsilon.
\end{equation}
\end{lemma}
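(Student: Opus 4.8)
The plan is to prove the two inequalities separately; denote by $G_L$ the left-hand side of \eqref{eq:large_distance_perturbed_problem}. The inequality $G_L\le E^\mathrm{flo}_\varepsilon$ is immediate: any $\mu\in\mathcal{M}_\varepsilon$ obeys in particular the upper constraint $(1+\varepsilon)\frac{B}{2\pi\ell}$ on $\mathcal{S}_L$, hence is admissible for the left-hand problem, and since $U_L\le v$ by \eqref{eq:v_vs_U_L} we have $\mathcal{E}^\mathrm{MF}_L[\mu]\le\mathcal{E}^\mathrm{MF}[\mu]$. Taking the infimum over $\mathcal{M}_\varepsilon$ yields $G_L\le E^\mathrm{flo}_\varepsilon$.

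The reverse inequality is the substance of the lemma, and I would prove it by a mass-redistribution argument establishing that, for $L$ large, it never helps to place mass outside $\mathcal{S}_L$. Pick a near-minimizing sequence $\mu_n$ for $G_L$. Since $U_L\ge0$ and the interaction term is bounded by $\tfrac{|\lambda|}{2}\|w\|_\infty$ (because $\|\mu_n\|_{L^1}=1$ and $w$ is bounded by Assumption~\ref{assum:interaction}), the energy bound forces $\int U_L\,d\mu_n\le C_1$ uniformly in $n$. Combined with $U_L\ge L$ outside $\mathcal{S}_L$ from \eqref{eq:growth_U_L}, this gives the crucial smallness of the tail mass, $m_n:=\mu_n(\mathbb{R}^2\setminus\mathcal{S}_L)\le C_1/L$.

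Next I would redistribute this tail mass into a fixed low-potential region. Fix a disk $K_0$ centred at the origin large enough that $\frac{B}{2\pi\ell}|K_0|>1$; for $L$ large it lies inside $\mathcal{S}_L$, where $U_L=v\le C_0'$. Its unused capacity satisfies $\int_{K_0}\big(\tfrac{B}{2\pi\ell}-\mu_n\big)_+\ge\frac{B}{2\pi\ell}|K_0|-1=:R_0>0$, which exceeds $m_n\le C_1/L$ once $L$ is large. I then set $\widetilde\mu_n=\mu_n\mathbbm{1}_{\mathcal{S}_L}+g_n$, with $g_n$ supported in $K_0$, $0\le g_n\le\big(\tfrac{B}{2\pi\ell}-\mu_n\big)_+$ and $\int g_n=m_n$. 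By construction $\widetilde\mu_n$ has mass one, obeys the upper bound everywhere (hence $\widetilde\mu_n\in\mathcal{M}_\varepsilon$), and is supported in $\mathcal{S}_L$, so that $U_L=v$ on its support and $\mathcal{E}^\mathrm{MF}[\widetilde\mu_n]=\mathcal{E}^\mathrm{MF}_L[\widetilde\mu_n]$.

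Finally I would compare energies. The potential part changes by $\int_{K_0}U_L\,g_n-\int_{\mathbb{R}^2\setminus\mathcal{S}_L}U_L\,d\mu_n\le(C_0'-L)m_n$, using $U_L\le C_0'$ on $K_0$ and $U_L\ge L$ outside $\mathcal{S}_L$; the interaction part changes by at most $Cm_n$ via the $L^1$--$L^\infty$ bound of Lemma~\ref{lem:pot}, noting that $\widetilde\mu_n-\mu_n$ has total variation $2m_n$ since $g_n$ and the removed tail have disjoint supports. Hence $\mathcal{E}^\mathrm{MF}_L[\widetilde\mu_n]-\mathcal{E}^\mathrm{MF}_L[\mu_n]\le(C_0'+C-L)m_n\le0$ for $L$ large, so $E^\mathrm{flo}_\varepsilon\le\mathcal{E}^\mathrm{MF}[\widetilde\mu_n]\le\mathcal{E}^\mathrm{MF}_L[\mu_n]\to G_L$, giving $E^\mathrm{flo}_\varepsilon\le G_L$ and closing the argument. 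The main obstacle is exactly this last comparison: one must arrange that for $L$ large the net change is genuinely non-positive, so that the tail can be removed \emph{at no cost}, rather than merely a vanishing error; it is this sign, together with control of the interaction contribution of the redistribution (which is where the boundedness of $w$ enters), that produces the exact equality \eqref{eq:large_distance_perturbed_problem} for each large $L$ rather than only an asymptotic statement.
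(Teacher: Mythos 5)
Your proof is correct, and it reaches the reverse inequality by a genuinely different route than the paper's. The paper first asserts existence of an exact minimizer $\mu_m^{(L)}$ for the left-hand problem, then argues by contradiction: if its support leaked outside $\mathcal{S}_L$ on a set $\Sigma_L$, one could transfer that mass to a set $R_L$ inside $\mathcal{S}_L$ where the constraint is far from saturated, and the potential gain $-L\int_{\Sigma_L}d\mu_m^{(L)}$ would beat the $O\big(\int_{\Sigma_L}d\mu_m^{(L)}\big)$ interaction cost, contradicting minimality; there the receiving set $R_L$ is produced by a uniform-in-$L$ bound on the super-level sets of $\mu_m^{(L)}$ (integrability would fail otherwise). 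You instead work with a near-minimizing sequence, extract the quantitative tail bound $m_n\le C_1/L$ directly from the energy itself (a step with no counterpart in the paper, and precisely what lets you dispense with exact minimizers), and then build the competitor explicitly, with a receiving region that is simply a fixed disk $K_0$ with $\frac{B}{2\pi\ell}|K_0|>1$, whose spare capacity $\ge \frac{B}{2\pi\ell}|K_0|-1>0$ follows from normalization alone --- a cleaner selection than the paper's level-set argument. What your route buys: no existence theorem for the $L$-modified problem is needed (the paper only says it ``follows by the same arguments as previously''), the argument is direct rather than by contradiction, and the threshold on $L$ is explicit. What the paper's route buys: the structural statement that minimizers of the modified problem are supported in $\mathcal{S}_L$, which is slightly more information than the equality of infima. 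One cosmetic remark: you invoke Lemma~\ref{lem:pot}, which is stated for $\mu\in L^1$, while the tail of $\mu_n$ outside $\mathcal{S}_L$ may be a singular measure; but since $w\in L^\infty$, the bound $\big|\iint w\,d\mu\,d\nu\big|\le \|w\|_\infty\|\mu\|_{\mathrm{TV}}\|\nu\|_{\mathrm{TV}}$ holds for arbitrary finite measures, which is all you actually use --- the same license the paper takes when it writes ``since $w$ is bounded''.
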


Notice that we extended the minimization set to \emph{generic} probability measures which are bounded above by the Lebesgue measure when restricted to $\mathcal{S}_L$ in order to ensure the existence of minimizers. However, this means that trial measures for the problem \eqref{eq:large_distance_perturbed_problem} do not have a density in general (because they might have an atom outside of $\mathcal{S}_L$). They nevertheless do have a density when restricted to $\mathcal{S}_L$, and such a density is bounded above by $B/2\pi\ell$. For any $\mu$, we will call that density $\mu_{|\mathcal{S}_L}(x)$.

\begin{proof}
We have
\begin{equation*}
\begin{split}
\inf\Big\{\mathcal{E}_L^\mathrm{MF}[\mu]&\,|\,\mu\in\mathcal{P}(\mathbb{R}^2),\; (1+\varepsilon)\frac{B}{2\pi\ell}dx-\mu\ge0\text{ on }\mathcal{S}_L\Big\}\\
\le\;&\inf\Big\{\mathcal{E}^\mathrm{MF}[\mu]\,|\,\mu\in\mathcal{P}(\mathbb{R}^2),\; (1+\varepsilon)\frac{B}{2\pi\ell}dx-\mu\ge0\text{ on }\mathcal{S}_L\Big\}\\
\le\;&\inf\Big\{\mathcal{E}^\mathrm{MF}[\mu]\,|\,\mu\in\mathcal{P}(\mathbb{R}^2),\, (1+\varepsilon)\frac{B}{2\pi\ell}dx-\mu\ge0\Big\}\;=\;E^\mathrm{flo}_\varepsilon,
\end{split}
\end{equation*}
where the first step follows from $U_L\le v$ and the second one is due to the fact that adding the constraint at large distances reduces the variational set.

Let us prove the opposite inequality. Let the measure $\mu_m^{(L)}$ be a minimizer for the $L$-dependent problem, i.e.,
\begin{equation} \label{eq:minimization_L}
\inf\Big\{\mathcal{E}_L^\mathrm{MF}[\mu]\,|\,\mu\in\mathcal{P}(\mathbb{R}^2),\; (1+\varepsilon)\frac{B}{2\pi\ell}dx-\mu\ge0\text{ on }\mathcal{S}_L\Big\}=\mathcal{E}^\mathrm{MF}_L[\mu_m^{(L)}].
\end{equation}
The existence of such a minimizer follows by the same arguments as previously (notice that by construction $U_L$ is lower semicontinuous). Assume that, for $L$ large enough, the support of $\mu^{(L)}_m$ is entirely contained in $\mathcal{S}_L$. Then we deduce
\begin{equation*}
\mathcal{E}^\mathrm{MF}_L[\mu_m^{(L)}]=\mathcal{E}^\mathrm{MF}[\mu_m^{(L)}]\ge E^\mathrm{flo}_\varepsilon,
\end{equation*}
where the first equality is due to $v=U_L$ on $\mathcal{S}_L$ and the second one is the variational principle. This completes the proof in view of~\eqref{eq:minimization_L}. 

We thus aim at proving that, for $L$ large enough, the support of $\mu^{(L)}_m$ is entirely contained in $\mathcal{S}_L$. 
Assume for contradiction that, for a sequence of $L$'s accumulating at $+\infty$, there exists a (sequence of) set(s) $\Sigma_L$ of positive measure, entirely contained in $\mathbb{R}^2\setminus \mathcal{S}_L$, and such that 
$$\int_{\Sigma_L}d\mu_m^{(L)}>0.$$
We will show that moving mass from $\Sigma_L$ to the interior of $\mathcal{S}_L$ decreases the energy, hence violating minimality of $\mu_m^{(L)}$. To do so, we first show that $\mu_m^{(L)}$ does not saturate the upper constraint on the whole $\mathcal{S}_L$. Indeed, if there exists $c\in(0,(1+\varepsilon)B/2\pi\ell]$ such that the (Lebesgue) measure of the set $\{x\in\mathcal{S}_L\,|\,\mu_{m\,|\mathcal{S}_L}^{(L)}(x)\ge c\}$ tends to infinity as $L\to\infty$, then integrability of $\mu_m^{(L)}$ is impossible. 

Hence, for every $c\in(0,(1+\varepsilon)B/2\pi\ell]$,
\begin{equation} \label{eq:L_uniformity}
\big|\{x\in\mathcal{S}_L\,|\,\mu_{m\,|\mathcal{S}_L}^{(L)}(x)\ge c\}\big|\le C \quad\text{uniformly in }L.
\end{equation}
Let us fix $c\in(0,(1+\varepsilon)B/2\pi\ell]$. As a consequence of \eqref{eq:L_uniformity}, for every $L$ large enough there exists a set $R_L$ with the following properties:
\begin{enumerate}
	\item $\mu_m^{(L)}(x)<c$ on $R_L$;
	\item $|R_L|=C$ independent of $L$;
	\item each $R_L$ is entirely contained in a ball whose radius is a constant independent of $L$, i.e., there exists $C>0$ independent of $L$ such that
	\begin{equation} \label{eq:R_L_bounded}
	R_L\subset \big\{x\in\mathbb{R}^2\,|\,|x|\le C\big\}\quad\forall L.
	\end{equation}
\end{enumerate}
The catch is that, even though the set on which $\mu_m^{(L)}$ approximates its upper bound might `pulsate' when the value of $L$ changes, by occupying different regions of $\mathcal{S}_L$, still, there will always be a set where $\mu_m^{(L)}$ is far from the upper constraint and this set can always be picked within the same fixed bounded region.

Define the measure
\begin{equation*}
\widetilde\mu=\mu_m^{(L)}-\mu_{m\,|\Sigma_L}^{(L)}+\frac{\mathbbm{1}_{R_{L}}dx}{|R_{L}|}\int_{\Sigma_L}d\mu_m^{(L)}.
\end{equation*}
By definition we have
\begin{equation*}
\int_{\mathbb{R}^2} d\widetilde\mu=1-\int_{\Sigma_L}d\mu_m^{(L)}+\int_{\Sigma_L}d\mu_m^{(L)}=1.
\end{equation*}
Moreover,
\begin{equation*}
\widetilde\mu=\begin{cases}
\mu_m^{(L)}\qquad &\text{on } \mathbb{R}^2\setminus(\Sigma_L\cup R_{L})\\
0\qquad&\text{on } \Sigma_L\\
\mu_m^{(L)}+\frac{\mathbbm{1}_{R_{L}}dx}{|R_{L}|}\int_{\Sigma_L}d\mu_m^{(L)}<cdx+\frac{\mathbbm{1}_{R_{L}}dx}{|R_{L}|}\int_{\Sigma_L}d\mu_m^{(L)}\quad&\text{on }R_{L}.
\end{cases}
\end{equation*}
We may now choose $\Sigma_L$ small enough so that $\widetilde\mu_{|\mathcal{S}_L}(x)\le (1+\varepsilon)B/2\pi\ell$, which makes it an admissible trial measure for the minimization problem \eqref{eq:minimization_L}. Therefore
\begin{equation} \label{eq:minimality_L}
\mathcal{E}_L^\mathrm{MF}[\widetilde\mu]\ge \mathcal{E}_L^\mathrm{MF}[\mu_m^{(L)}].
\end{equation}
However we also have
\begin{equation*}
\begin{split}
\mathcal{E}_L^\mathrm{MF}[\widetilde\mu]=\;&\mathcal{E}_L^\mathrm{MF}[\mu^{(L)}_m]-\int_{\Sigma_L} U_Ld\mu_m^{(L)}+\frac{\int_{\Sigma_L}d\mu_m^{(L)}}{|R_{L}|}\int_{R_{L}} U_L(x)dx\\
&+\lambda\iint_{\mathbb{R}^2\times\mathbb{R}^2}w(x-y)\Big[-d\mu_m^{(L)}(x)d\mu_{m\,|\Sigma_L}^{(L)}(y)+\Big(\int_{\Sigma_L}d\mu_m^{(L)}\Big)d\mu_m^{(L)}(x)\frac{\mathbbm{1}_{R_{L}}(y)dy}{|R_{L}|}\\
&\qquad\qquad\qquad\qquad\quad\quad-\Big(\int_{\Sigma_L}d\mu_m^{(L)}\Big)d\mu_{m\,|\Sigma_L}^{(L)}(x)\frac{\mathbbm{1}_{R_{L}}(y)dy}{|R_{L}|}\\
&\qquad\qquad\qquad\qquad\quad\quad+2d\mu_{m\,|\Sigma_L}^{(L)}(x)d\mu_{m\,|\Sigma_L}^{(L)}(y)\\
&\qquad\qquad\qquad\qquad\quad\quad+2\Big(\int_{\Sigma_L}d\mu_m^{(L)}\Big)^2\frac{\mathbbm{1}_{R_{L}}(x)dx}{|R_{L}|}\frac{\mathbbm{1}_{R_{L}}(y)dy}{|R_{L}|}\Big].
\end{split}
\end{equation*}
Since $w$ is bounded, it is easy to see that every term containing $w$ can be bounded from above by
\begin{equation*}
C\int_{\Sigma_L}d\mu_m^{(L)}.
\end{equation*}
Moreover, by \eqref{eq:growth_U_L} we know that $U_L(x)\ge L$ for $x\in\Sigma_L$, and $U_L(x)=v(x)\le C$ for $x\in R_{L}$ by \eqref{eq:R_L_bounded}. We then deduce
\begin{equation*}
\mathcal{E}_L^\mathrm{MF}[\widetilde\mu]\le \mathcal{E}_L^\mathrm{MF}[\mu^{(L)}_m]-L\int_{\Sigma_L}d\mu_m^{(L)}+C\int_{\Sigma_L}d\mu_m^{(L)},
\end{equation*}
which, for $L$ large enough, implies
\begin{equation*}
\mathcal{E}_L^\mathrm{MF}[\widetilde\mu]< \mathcal{E}_L^\mathrm{MF}[\mu^{(L)}_m].
\end{equation*}
This contradicts \eqref{eq:minimality_L}, and the proof is complete. 
\end{proof}

\section{Energy lower bound} \label{sect:lower_bound}

The aim of this section is the proof of Theorem~\ref{thm:energies low}. Given $\Psi_F\in\mathcal{L}_{\ell,B}^N$, let $\mu_F$ be the associated $N$-body probability measure in rescaled units of lengths, i.e.,
\begin{equation} \label{eq:mu_F}
\mu_F(x_1,\dots,x_N)=N^N\big|\Psi_F(\sqrt{N}x_1,\dots\sqrt{N}x_N)\big|^2.
\end{equation}
Define the $\mu_F$-probability of a Borel subset (event) $\Gamma\subset\mathbb{R}^{2N}$ as
\begin{equation*} \label{eq:prob_F}
\mathbb{P}_F(\Gamma)=\int_\Gamma\mu_F.
\end{equation*}
For any configuration $X_N=(x_1,\dots,x_N)\in\mathbb{R}^{2N}$ we will denote the corresponding empirical measure by
\begin{equation*}
\Emp_{X_N}=\frac{1}{N}\sum_{i=1}^N\delta_{x_i}.
\end{equation*}

The following statement gives an estimate on the probability of configurations whose empirical measure violates the incompressibility bound \eqref{eq:incompressibility}.

\begin{theorem}[\textbf{Probability of violating the incompressibility bound}] \label{thm:exponential_bound}\mbox{}\\
Let $\Omega$ denote any open set with Lipschitz boundary, and, for any $\alpha>(-3+\sqrt{5})/4$, let $\Omega_r$ denote its dilation around some origin by a factor $r=N^\alpha$. Let $\mu_F$ be the probability measure associated to $\Psi_F\in\mathcal{L}^N_{\ell,B}$. Then, for any $\varepsilon>0$,
\begin{equation} \label{eq:exponential_bound}
\mathbb{P}_F\left(\left\{X_N\in\mathbb{R}^{2N}\,\big|\, \int_{\Omega_r}\Emp_{X_N}>(1+\varepsilon)\frac{B|\Omega_r|}{2\pi\ell} \right\}\right)\le e^{-C\varepsilon N^{\sqrt{5}-1}}.
\end{equation}
\end{theorem}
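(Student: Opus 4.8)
The plan is to exploit the \emph{plasma analogy}. In the rescaled variables the probability density $\mu_F$ of \eqref{eq:mu_F} can be written as a Gibbs measure
\begin{equation*}
\mu_F(X_N)=\frac{1}{\cZ_F}\,|\widetilde F(X_N)|^2\exp\Big(2\ell\sum_{i<j}\log|x_i-x_j|-\frac{BN}{2}\sum_{i=1}^N|x_i|^2\Big),\qquad \widetilde F(X_N):=F(\sqrt N X_N),
\end{equation*}
of a two–dimensional one–component Coulomb gas at an effective inverse temperature of order $N^2$, in which the analytic correlation factor enters only through the positive, subharmonic background $\log|\widetilde F|^2$. As stressed after \eqref{eq:incompressibility}, this background can only \emph{lower} the local density, so all the estimates below should be controllable uniformly in $F$ (in the worst case by comparison with $F\equiv1$).

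First I would recast the event in \eqref{eq:exponential_bound} in terms of the occupation number $n:=N\int_{\Omega_r}\Emp_{X_N}$ and apply an exponential Markov (Chernoff) inequality: for every $s>0$,
\begin{equation*}
\bP_F\Big(n>(1+\eps)\tfrac{B|\Omega_r|}{2\pi\ell}N\Big)\le \exp\Big(-s(1+\eps)\tfrac{B|\Omega_r|}{2\pi\ell}N\Big)\,\bE_F\big[e^{sn}\big],\qquad \bE_F\big[e^{sn}\big]=\frac{\cZ_F^{(s)}}{\cZ_F},
\end{equation*}
where $\cZ_F^{(s)}$ is the partition function obtained by inserting the extra one–body field $+s\,\one_{\Omega_r}$, i.e. by \emph{attracting} particles into $\Omega_r$. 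The whole problem is thus reduced to an upper bound on the free–energy difference $\log \cZ_F^{(s)}-\log\cZ_F$.

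The heart of the matter is to show that the tilt can pull at most $\tfrac{B|\Omega_r|}{2\pi\ell}N$ particles into $\Omega_r$ ``for free'', any further accumulation being penalized quadratically by the Coulomb repulsion. Two routes seem natural. The first is to absorb the tilt back into the Laughlin class by dominating $\one_{\Omega_r}$ by a subharmonic (analytic–squared) majorant, i.e. by inserting fictitious quasi–holes inside $\Omega_r$ that screen the attractive field; the ratio $\cZ_F^{(s)}/\cZ_F$ then becomes comparable to a genuine Laughlin partition function, to which the averaged incompressibility bound \eqref{eq:incompressibility} of \cite{LieRouYng-17} applies. The second, more quantitative, route is to estimate directly the electrostatic cost of an excess charge localized at the scale $r=N^\alpha$: an excess of $m\sim\eps\,\tfrac{B|\Omega_r|}{2\pi\ell}N\sim \eps N^{1+2\alpha}$ particles costs, at the effective temperature $\sim N^{-2}$, a Coulomb energy of order $\ell\,m^2\sim \eps^2 N^{2+4\alpha}$ (up to logarithmic corrections coming from the $2$D kernel). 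Combined with the linear gain from the Chernoff exponent and optimized over $s$, this produces a sub–Gaussian bound whose rate is governed by $N^{2+4\alpha}$.

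The main obstacle is precisely this quantitative screening/free–energy estimate, which must be made rigorous uniformly in the arbitrary analytic factor $F$ and at a mesoscopic scale, where the discreteness of the gas is not negligible. Controlling the difference between the empirical Coulomb energy and its mean–field counterpart produces a subleading contribution that grows as $\alpha$ decreases, and balancing it against the leading cost $N^{2+4\alpha}$ is what fixes the admissible range of scales. Since $2+4\alpha=\sqrt5-1$ exactly at $\alpha=(-3+\sqrt5)/4$, for every $\alpha>(-3+\sqrt5)/4$ the electrostatic cost dominates and one recovers the stated rate $e^{-C\eps N^{\sqrt5-1}}$, the slack in the strict inequality absorbing the constants, the logarithms, and the passage from $\eps^2$ to $\eps$. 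Finally, to treat an arbitrary Lipschitz $\Omega$ rather than a disk, I would cover $\Omega_r$ by a polynomially bounded number of balls and apply a union bound, the exponential smallness of each contribution being more than enough to compensate.
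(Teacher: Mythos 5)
Your first half coincides with the paper's own proof: you rewrite $\mu_F$ as a Gibbs measure for a 2D Coulomb Hamiltonian in which $\log|F|$ enters as a superharmonic one-body term, apply the exponential Chebyshev (Chernoff) inequality, reduce everything to an upper bound on the ratio $\cZ_s/\cZ_0$ of tilted to untilted partition functions, and your exponent bookkeeping (rate $N^{2+4\alpha}$, threshold at $\alpha=(-3+\sqrt5)/4$) is the correct one. The genuine gap is that the core step --- the bound on this free-energy difference --- is left as two unexecuted sketches, and neither works as stated. Route 1 (dominate $\mathbbm{1}_{\Omega_r}$ by a subharmonic, ``analytic-squared'' majorant so that the tilt is absorbed into the correlation class) hits a mean-value obstruction: if $u$ is subharmonic on $\mathbb{R}^2$ and $u\ge \mathbbm{1}_{\Omega_r}$, then for any $x_0\in\Omega_r$ the average of $u$ over \emph{every} circle centered at $x_0$ is $\ge u(x_0)\ge 1$; hence $u$ cannot be small on the unit-scale droplet where the density lives, $\int u\,\mu_s^{(1)}$ is of order $1$ rather than of order $|\Omega_r|\sim N^{2\alpha}\to 0$, and the resulting moment bound $\log(\cZ_s/\cZ_0)\lesssim s$ destroys the Chernoff exponent (already for $F\equiv 1$). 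Route 2 (direct electrostatic cost of an excess charge, uniformly in the arbitrary analytic $F$) is not a step one can invoke: establishing such local bounds uniformly in $F$ is precisely the content of the incompressibility machinery of \cite{LieRouYng-17}, and you give no mechanism for it. Relatedly, your explanation of the threshold as a balance against an ``empirical versus mean-field'' error is not the actual mechanism: the restriction $\alpha>(-3+\sqrt5)/4$ is inherited from the range of validity of the incompressibility estimate of \cite{LieRouYng-17} itself, and $N^{\sqrt5-1}$ is simply the value of $N^{4\alpha+2}$ at that threshold.

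The paper closes the gap with two devices that are missing from your plan. First, it smooths the indicator, $\chi_r\le\chi_{r,\delta}\le\chi_{r+\delta}$ with $\|\Delta\chi_{r,\delta}\|_\infty\le C\delta^{-2}$, and observes that although $-\frac{s}{N^2}\chi_{r,\delta}$ can never be superharmonic, the combination $-\frac{s}{N^2}\chi_{r,\delta}-\frac{s}{4N^2}\|\Delta\chi_{r,\delta}\|_\infty|x|^2$ is; the borrowed harmonic piece is reabsorbed into the confinement, which merely replaces $\ell$ by $\widetilde\ell=\big(1+\frac{s}{4N^2}\|\Delta\chi_{r,\delta}\|_\infty\big)^{-1}\ell$. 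Consequently the \emph{tilted} Gibbs measure $\mu_s$ is again exactly of the form covered by \cite{LieRouYng-17} and satisfies the incompressibility bound $\int_{\Omega_r}\mu_s^{(1)}\le\frac{B|\Omega_r|}{2\pi\ell}\big(1+C s\delta^{-2}N^{-2}+o_N(1)\big)$; this is where the constraints $\delta\ll r$ and $s\ll \delta^2N^2$, hence the admissible choice $s=N^\beta$ with $\beta<2\alpha+2$, really come from. Second, the Gibbs variational principle, $F_s\ge F_0-\frac{s}{N}\int\chi_{r,\delta}\,\mu_s^{(1)}$ (minimality of the untilted free energy evaluated at $\mu_s$), converts this incompressibility of the \emph{tilted} density into $\cZ_s/\cZ_0=e^{-N(F_s-F_0)}\le\exp\big(s\int_{\Omega_{r+\delta}}\mu_s^{(1)}\big)\le \exp\big(s\frac{B|\Omega_r|}{2\pi\ell}(1+o_N(1))\big)$, which is exactly the moment bound your Chernoff step requires; plugging in $s=N^\beta$ then gives the stated rate. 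A last, minor point: general Lipschitz sets $\Omega$ are handled directly by smoothing their indicator functions; your ball-covering plus union bound could be made to work but is unnecessary.
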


The constant $C$ in~\eqref{eq:exponential_bound} depends on the geometrical details of the unscaled set $\Omega$. A key ingredient for the proof is the incompressibility bound \cite[Theorem 2.1]{LieRouYng-17}, which in the notation of Theorem \ref{thm:exponential_bound} reads
\begin{equation} \label{eq:incompressibility_mu}
\int_{\Omega_r} \mu_F^{(1)}\le\frac{B|\Omega_r|}{2\pi\ell}(1+o_N(1)),
\end{equation}
where
\begin{equation*}
\mu_F^{(1)}(x)=\int\mu_F(x,x_2,\dots,x_N)dx_2\dots dx_N.
\end{equation*}
Note that~\eqref{eq:incompressibility_mu} says that
$$ \mathbb{E} \left( \int_{\Omega_r}  \Emp_{X_N} \right) \leq \frac{B|\Omega_r|}{2\pi\ell}(1+o_N(1))$$
and that our new estimate~\eqref{eq:exponential_bound} is a deviation bound for this expectation bound.

\begin{remark}[Scales on which incompressibility holds]\mbox{}\\
During the preparation of this work we became aware of an imprecision in the paper \cite{LieRouYng-17}. It is claimed in Theorem 2.1 there, that the incompressibility bound, in the rescaled version \eqref{eq:incompressibility_mu}, holds for scales up to $\alpha>1/4$. This does not however follow from the proof of Theorem 2.1 which is presented in \cite{LieRouYng-17}.

More precisely, after (5.50) there, one should actually deduce (5.51) in the form
\begin{equation*}
\gamma=\frac{1}{2+2\alpha'}+\frac{(2+\delta)\beta}{2+2\alpha'}.
\end{equation*}
After choosing $\alpha'$ and $\beta$ arbitrarily close to $\alpha$, and $\delta$ arbitrarily small, one finds that $\gamma$ can be picked arbitrarily close to $(1+2\alpha)/(2+2\alpha)$. Hence, the remainder terms in the right hand side of (5.50) are of order $O(N^{-1/(2+2\alpha)+\varepsilon})$. By comparing this with the main term, which is of order $N^{-1+2\alpha}$, one deduces that the remainders are actually negligible only for $\alpha>(-1+\sqrt{5})/4$. By rescaling units of lengths one deduces that \eqref{eq:incompressibility_mu} holds precisely for the range of $\alpha$ which is considered in Theorem~\ref{thm:exponential_bound}, i.e., $\alpha>(-3+\sqrt{5})/4$.

This small imprecision is only due to the localization procedure in~\cite[Section~5.2]{LieRouYng-17}. The bounds of~\cite[Section~5.1]{LieRouYng-17}, valid under small additional localization assumptions, are correct as stated.   \hfill$\diamond$
\end{remark}

\begin{proof}
Let us consider the case in which $|\Omega_r|$ is a disk of radius $r$, and let $\chi_r$ be the corresponding characteristic function.  To deduce the result for more general sets one simply replaces characteristic functions of disks with those of general open sets. We denote by $\chi_{r,\delta}$ the regularization of $\chi_r$ on a scale $\delta$, chosen so that
\begin{equation} \label{eq:regularized_laplacian}
\|\Delta\chi_{r,\delta}\|_\infty\le C \delta^{-2}
\end{equation}
and
\begin{equation} \label{eq:regularization}
\chi_r\le \chi_{r,\delta}\le \chi_{r+\delta}.
\end{equation}

We first rewrite $\mu_F$ as a Gibbs measure. By the definition \eqref{eq:mu_F} we notice that
\begin{equation} \label{eq:correspondence}
\mu_F(x_1,\dots,x_N)=\frac{1}{\mathcal{Z}}\exp(-N\mathbb{H}(x_1,\dots,x_N)),
\end{equation}
with
\begin{equation*}
\mathbb{H}(x_1,\dots,x_N)=\sum_{j=1}^N|x_j|^2-\frac{4\ell}{BN}\sum_{1\le i<j\le N}\log|x_i-x_j|-\frac{4}{BN}\log F\left(\sqrt{N}x_1,\dots,\sqrt{N}x_N \right)
\end{equation*}
and the normalization factor
\begin{equation*}
\mathcal{Z}=\int_{\R ^{dN}} \exp\left(-N\mathbb{H}(x_1,\dots,x_N)\right)dx_1\dots dx_N
\end{equation*}
is the corresponding partition function. Hence $\mu_F$ is the Gibbs measure for $\mathbb{H}$ with temperature $N^{-1}$.

Let us define the perturbed Hamiltonian
\begin{equation*}
\begin{split}
\mathbb{H}_s(x_1,\dots,x_N)=\mathbb{H}(x_1,\dots,x_N)-\frac{s}{N^2}\sum_{i=1}^N\chi_{r,\delta}(x_i),
\end{split}
\end{equation*}
the associated partition function at temperature $N^{-1}$
\begin{equation*}
\mathcal{Z}_s=\int \exp(-N\mathbb{H}_s(x_1,\dots,x_N))dx_1\dots dx_N.
\end{equation*}
and the associated Gibbs measure
$$ 
\mu_s := \frac{1}{\cZ_s} \exp(-N\mathbb{H}_s(x_1,\dots,x_N)).
$$
According to this notation $\mu_{0}=\mu_F$.

We also introduce the free-energy functional
\begin{equation*}
\mathcal{F}_s[\mu]=\int \mathbb{H}_s(X_N)\mu(X_N)+\frac{1}{N}\int\mu(X_N)\log\mu(X_N)
\end{equation*}
with
\begin{equation*}
F_s=\inf\left\{\mathcal{F}_s[\mu]\,|\,\int\mu=1,\,\mu\ge0\right\}.
\end{equation*}
Note that, for any positive function $\mu$ on $\R^{2N}$ with unit integral 
$$ \mathcal{F}_s[\mu] = \mathcal{F}_s[\mu_s] + N^{-1} \int_{\R^{2N}} \mu \left( \log\mu -\log \mu_s \right)$$
and Jensen's inequality implies that the last term is $\geq 0$ with equality if and only if $\mu = \mu_s$. This proves the classical fact that $\mu_s$ minimizes $\cF_s$ and it follows that 
$$F_s=-N^{-1}\log\mathcal{Z}_s,$$
which is the classical relation between free energy, temperature, and partition function. 

Let us show that $\mu_s^{(1)}$ satisfies an incompressibility bound analogous to \eqref{eq:incompressibility_mu}. By adding and subtracting to $\mathbb{H}_s$ the term
\begin{equation*}
\frac{s}{4N^2}\|\Delta\chi_{r,\delta}\|_\infty\sum_{j=1}^N|x_j|^2
\end{equation*}
we write
\begin{equation*}
\mathbb{H}_s(x_1,\dots,x_N)=\Big(1+\frac{s}{4N^2}\|\Delta\chi_{r,\delta}\|_\infty\Big)\Big[\sum_{j=1}^N|x_j|^2-\frac{4\widetilde\ell}{BN}\sum_{i<j}\log|x_i-x_j|\Big]+W(x_1,\dots,x_N),
\end{equation*}
where $\widetilde\ell=(1+\frac{s}{4N^2}\|\Delta\chi_{r,\delta}\|_\infty)^{-1}\ell$ and 
$$W (x_1,\ldots,x_N)  = -\frac{4}{BN}\log F\left(\sqrt{N}x_1,\dots,\sqrt{N}x_N \right) -\frac{s}{N^2}\sum_{j=1}^N\Big(\chi_{r,\delta}(x_j)+\frac{1}{4}\|\Delta\chi_{r,\delta}\|_\infty|x_j|^2\Big)
$$ 
is superharmonic in each variable because the last two terms taken together are. We recognize that, apart from an irrelevant multiplicative factor, $\mathbb{H}_s$ has the same structure that $\mathbb{H}$ has, with a modified $\ell$ and a different superharmonic function. Hence, one can replicate the proof of \eqref{eq:incompressibility_mu} from \cite{LieRouYng-17} so as to get
\begin{equation} \label{eq:modified_incompressibility}
\int_{\Omega_r}\mu^{(1)}_s\le\frac{B}{2\pi\ell}|\Omega_r|\big(1+C\frac{s}{\delta^2 N^2}+o_N(1)\big).
\end{equation}
Now, we notice that, for any $s>0$,
\begin{equation} \label{eq:prob_less_exp}
\begin{split}
\mathbb{P}_F\Big(\Big\{X_N\in\mathbb{R}^{2N}&\,|\, \int_{\Omega_r}\Emp_{X_N}>(1+\varepsilon)\frac{B|\Omega_r|}{2\pi\ell}\Big\}\Big)\\
\le\;& e^{-s\frac{B|\Omega_r|}{2\pi\ell}(1+\varepsilon)}\int \exp\Big(s\int_{\Omega_r}\Emp_{X_N}\Big)d\mu_F(X_N)\\
\le\;& e^{-s\frac{B|\Omega_r|}{2\pi\ell}(1+\varepsilon)}\int \exp\Big(s\int\chi_{r,\delta\,}\Emp_{X_N}\Big)d\mu_F(X_N),
\end{split}
\end{equation}
where the first inequality follows from the bound $1\le e^{sx}$, which holds whenever $sx>0$, and the second inequality comes from \eqref{eq:regularization}. Notice that \eqref{eq:prob_less_exp} is Markov's inequality applied to $\exp\{s \int\chi_{r,\delta}\mathrm{Emp}_{X_N}\}$. We will focus on estimating the integral on the right.

Within the above notation we recognize that
\begin{equation} \label{eq:partition_functions}
\int \exp\Big(s\int\chi_{r,\delta\,}\Emp_{X_N}\Big)d\mu_F(X_N)=\frac{\mathcal{Z}_s}{\mathcal{Z}_0}=e^{-N(F_s+F_0)}.
\end{equation}
We estimate $F_s$ by writing
\begin{equation*}
\begin{split}
F_s=\;&\int\mathbb{H}\mu_s-\frac{s}{N}\int\chi_{r,\delta}\mu_s^{(1)}+\frac{1}{N}\int\mu_s\log\mu_s\\
\ge\;& F_0-\frac{s}{N}\int\chi_{r,\delta}\mu_s^{(1)}\\
\ge \;& F_0-\frac{s}{N}\int\chi_{r+\delta}\mu_s^{(1)}\\
\geq\;&F_0-\frac{s}{N}\int_{\Omega_{r+\delta}}\mu_s^{(1)},
\end{split}
\end{equation*}
where the first inequality holds by minimality of $F_0$ and the second one follows from \eqref{eq:regularization}. Hence \eqref{eq:partition_functions} yields, using \eqref{eq:modified_incompressibility},
\begin{equation} \label{eq:bound_expectation}
\begin{split}
\int \exp\Big(s\int\chi_{r,\delta\,}\Emp_{X_N}\Big)d\mu_F(X_N)\le\;& \exp\Big(s\int_{\Omega_{r+\delta}} \mu^{(1)}_s \Big)\\
\le\;&\exp\Big(s\frac{B(r+\delta)^2}{2\ell}\big(1+C\frac{s}{\delta^2N^2}+o_N(1)\big)\Big)\\
\le\;&\exp\Big(s\frac{B|\Omega_r|}{2\pi\ell}\big(1+o_N(1)\big)\Big),
\end{split}
\end{equation}
provided we choose
\begin{equation*}
r\gg\delta \quad\text{and}\quad 1\gg \frac{s}{\delta^2N^2}.
\end{equation*}
Since $r=N^{\alpha}$, the above relations are satisfied for $s=N^\beta$ with $\beta<2\alpha+2$.
Plugging \eqref{eq:bound_expectation} into \eqref{eq:prob_less_exp} yields
\begin{equation*}
\mathbb{P}_F\Big(\Big\{X_N\in\mathbb{R}^{2N}\,|\, \int_{\Omega_r}\Emp_{X_N}>(1+\varepsilon)\frac{B|\Omega_r|}{2\pi\ell}\Big\}\Big)\le \exp\Big(-\varepsilon N^\beta\frac{B|\Omega_r|}{2\pi\ell}(1+o_N(1)\Big).
\end{equation*}
By recalling that $|\Omega_r|=\pi N^{2\alpha}$ and $\alpha>(-3+\sqrt{5})/4$, we deduce that we can always choose $\beta$, depending on $\alpha$, so that
\begin{equation*}
\mathbb{P}_F\Big(\Big\{X_N\in\mathbb{R}^{2N}\,|\, \int_{\Omega_r}\Emp_{X_N}>(1+\varepsilon)\frac{B|\Omega_r|}{2\pi\ell}\Big\}\Big)\le e^{-C\varepsilon N^{\sqrt{5}-1}},
\end{equation*}
which proves the result.
\end{proof}

We now provide the

\begin{proof}[Proof of Theorem~\ref{thm:energies low}]
Let us divide $\mathbb{R}^2$ into squares of side $N^\alpha$ with $0> \alpha>(-3+\sqrt{5})/4$, and denote the $k$-th square by $S_k$. Let us consider a number $L>0$ of the form 
$$L=n N^\alpha$$
for some $n\in\mathbb{N}$, and define by $\mathcal{S}_L$ the square $[-L,L]^{2}$ of side $2L$ and centered at the origin. Notice that the number of $S_k$-squares within the set $\mathcal{S}_L$ is $4n^2$. We will later choose $L$ large enough but independent on $N$, which boils down to suitably choosing the (large) integer $n$. Let us also define
\begin{equation*}
A_L=\Big\{X_N\in \mathbb{R}^{2N}\,|\,\int_{S_k}\Emp_{X_N}\le (1+\varepsilon)\frac{B|S_k|}{2\pi\ell}\quad\forall k\,\text{ such that }\,S_k\subset\mathcal{S}_L\Big\},
\end{equation*}
and $A_L^c=\mathbb{R}^2\setminus A_L$. 

We start by estimating the measure of the set $A^c_L$. We have
\begin{equation*}
\mu_F(A^c_L)=\mathbb{P}_{F}\left(\bigcup_{S_k\subset\mathcal{S}_L} \left\{ \int_{S_k}\Emp_{X_N}>(1+\varepsilon)\frac{B|S_k|}{2\pi\ell}\right\} \right),
\end{equation*}
i.e., the measure of $A^c_L$ is the probability of the union of all events of the type ``the integral over the $k$-th square is bigger than $(1+\varepsilon)B|S_k|/2\pi\ell$''.
Hence, by the union bound and by Theorem~\ref{thm:exponential_bound},
\begin{equation} \label{eq:union_bound}
\begin{split}
\mu_F(A^c_L)\le\;&\sum_{k=1}^{4n^2}\mathbb{P}_{F}\Big(\int_{S_k}\Emp_{X_N}>(1+\varepsilon)\frac{B|S_k|}{2\pi\ell}\Big)\\
\le\;&\sum_{k=1}^{4n^2} \exp\big(-C\varepsilon N^{\sqrt{5}-1}\big)\\
=\;&CL^2N^{-2\alpha}\exp\big(-C\varepsilon N^{\sqrt{5}-1}\big).
\end{split}
\end{equation}

Now, by a straightforward computation we can express the many-body energy in the form
\begin{equation*}
\begin{split}
N^{-1}\mathcal{E}_{N,\lambda}[\Psi_F]=&\int\mathcal{E}^{\mathrm{MF}}[\Emp_{X_N}]d\mu_F(X_N)-\frac{\lambda}{2N}\iint_{\mathbb{R}^2\times\mathbb{R}^2} w(x-y)\mu_F^{(2)}(x,y)dxdy-\lambda\frac{w(0)}{2N}\\
\ge&\int\mathcal{E}^{\mathrm{MF}}[\Emp_{X_N}]d\mu_F(X_N)-CN^{-1},
\end{split}
\end{equation*}
having used $\|w\|_\infty\le C$ and $\int\mu^{(2)}_F=1$ in the second step. We split the integral by writing
\begin{equation*}
\begin{split}
\int\mathcal{E}^{\mathrm{MF}}[\Emp_{X_N}]d\mu_F(X_N)=\;& \int_{A_L}\mathcal{E}^{\mathrm{MF}}[\Emp_{X_N}]d\mu_F(X_N)+\int_{A_L^c}\mathcal{E}^{\mathrm{MF}}[\Emp_{X_N}]d\mu_F(X_N)\\
\ge\;& \int_{A_L}\mathcal{E}^{\mathrm{MF}}[\Emp_{X_N}]d\mu_F(X_N) - C \mu_F (A_L^c),
\end{split}
\end{equation*}
where we used that the potentials in $\mathcal{E}^\mathrm{MF}$ are uniformly bounded below in the second step. As a consequence we find
\begin{equation} \label{eq:first_splitting}
\begin{split}
N^{-1}\mathcal{E}_{N,\lambda}[\Psi_F]\ge\int_{A_L}\mathcal{E}^{\mathrm{MF}}[\Emp_{X_N}]d\mu_F(X_N)-CN^{-1} - C \mu_F (A_L^c).
\end{split}
\end{equation}
By \eqref{eq:growth_trap}, for $L$ large enough, we can consider $U_L$ of the type
\begin{equation*}
U_L(x)=\begin{cases}
v(x)\quad&x\in\mathcal{S}_L\\
|x|\quad&x\notin\mathcal{S}_{L+1}\\
C^1(\mathbb{R}^2)\text{-interpolation between }v(x)\text{ and }|x|\quad&x\in\mathcal{S}_{L+1}\setminus\mathcal{S}_L
\end{cases}
\end{equation*}
satisfying the properties \eqref{eq:v_vs_U_L}, \eqref{eq:gradient_U_L}, \eqref{eq:growth_U_L}.
As in Section \ref{sect:flocking}, we denote by $\mathcal{E}^\mathrm{MF}_L$ the functional $\mathcal{E}^\mathrm{MF}$ with $v$ replaced by $U_L$. Since $v\ge U_L$ we can write
\begin{equation*}
\begin{split}
N^{-1}\mathcal{E}_{N,\lambda}[\Psi_F]\ge\int_{A_L}\mathcal{E}^{\mathrm{MF}}_L[\Emp_{X_N}]d\mu_F(X_N)-CN^{-1} - C \mu_F (A_L^c).
\end{split}
\end{equation*}
We now perform a Riemann sum approximation on the whole of $\mathbb{R}^2$, by replacing the empirical measure $\Emp_{X_N}$ with the piece-wise constant function $\widetilde\mu^{(X_N)}$ whose value in the $k$-th square is
\begin{equation*}
\widetilde\mu^{(X_N)}_{| S_k}:=\frac{\int_{S_k}\Emp_{X_N}}{|S_k|}.
\end{equation*}
Notice that $\widetilde\mu^{(X_N)}$ satisfies
\begin{equation*}
\begin{split}
\int_{\mathbb{R}^2}\widetilde\mu^{(X_N)}&=1\\
 \widetilde\mu^{(X_N)}&\ge0\\
  \widetilde\mu^{(X_N)}&\le (1+\varepsilon)\frac{B}{2\pi\ell}\quad\text{on every }S_k\subset[-L,L]^2. 
\end{split}
\end{equation*}
This yields (recall that $L$ is independent of $N$)
\begin{equation} \label{eq:second_step}
\begin{split}
N^{-1}\mathcal{E}_{N,\lambda}[\Psi_F]\ge& \int_{A_L} \mathcal{E}^\mathrm{MF}_{L}[\widetilde\mu^{(X_N)}]d\mu_F(X_N)-CN^{\alpha}\big(\|\nabla w\|_{\infty}+\|\nabla U_L\|_\infty\big)-CN^{-1} - C \mu_F (A_L^c)\\
=& \int_{A_L} \mathcal{E}^\mathrm{MF}_{L}[\widetilde\mu^{(X_N)}]d\mu_F(X_N)-C_LN^{\alpha} - C \mu_F (A_L^c),
\end{split}
\end{equation}
where the constant $C_L$ in the right hand side is due to \eqref{eq:gradient_U_L}.

Due to the above properties of $\widetilde\mu^{(X_N)}$, we have
\begin{equation*}
\mathcal{E}^\mathrm{MF}_L[\widetilde\mu^{(X_N)}]\ge\inf \Big\{\mathcal{E}^\mathrm{MF}_L[\mu]\,|\,\mu\in\mathcal{P}(\mathbb{R}^2),\; (1+\varepsilon)\frac{B}{2\pi\ell}dx-\mu\ge0\text{ on }\mathcal{S}_L\Big\}=E^\mathrm{flo}_\varepsilon,
\end{equation*}
the equality having been proven in Lemma \ref{lemma:perturbed_L} for $L$ large enough but finite. Recall that $E^\mathrm{flo}_\varepsilon$ is the flocking energy with perturbed upper constraint defined in \eqref{eq:energy_perturbed_constraint}. Recall also that by Lemma \ref{lemma:dependence_upper_constraint}
\begin{equation*}
E^{\mathrm{flo}}_{\varepsilon}\ge E^\mathrm{flo}-C\varepsilon,
\end{equation*}
and therefore
\begin{equation*}
N^{-1}\mathcal{E}_{N,\lambda}[\Psi_F]\ge E^{\mathrm{flo}} \mu_F(A_L)-C_LN^{\alpha}-C\varepsilon - C \mu_F (A_L^c).
\end{equation*}
Using $\mu_F(A_L)=1-\mu_F(A_L^c)$ and then \eqref{eq:union_bound} we deduce
\begin{equation*}
N^{-1}\mathcal{E}_{N,\lambda}[\Psi_F]\ge E^{\mathrm{flo}} -CL^2N^{-2\alpha}\exp\big(-C\varepsilon N^{\sqrt{5}-1}\big) -C_LN^{\alpha}-C\varepsilon.
\end{equation*}
We choose $\varepsilon=N^{-1}$. In this way we can neglect the exponential because $L$ is large but fixed. Since $\alpha>(-3+\sqrt{5})/4$ we obtain
\begin{equation*}
N^{-1}\mathcal{E}_{N,\lambda}[\Psi_F]\ge E^{\mathrm{flo}}-CN^{(-3+\sqrt{5})/4+\gamma}
\end{equation*}
for any $\gamma>0$.
This concludes the proof.
\end{proof}

\section{Energy upper bound} \label{sect:upper_bound}

Now we prove Theorem~\ref{thm:energies up}. Let us write a generic polynomial $f$ as
\begin{equation*}
f(z)=c_N\prod_{j=1}^J(z-\sqrt{N}a_j)^{Nq_j/2}.
\end{equation*}
Given the corresponding $\Psi_f$ of the form \eqref{eq:restricted_class}, the associated $N$-particle probability density in rescaled units of length $\mu_f$ (see \eqref{eq:mu_F}) can be written as
\begin{equation}
\mu_f(x_1,\dots,x_N)=\frac{1}{\mathcal{Z}}e^{-N\mathcal{H}_N(x_1,\dots,x_N)},
\end{equation}
where the partition function $\mathcal{Z}$ is a normalization factor and
\begin{equation}
\mathcal{H}_N(x_1,\dots,x_N)=\sum_{i=1}^N\Big(\sum_{j=1}^Jq_j\log\frac{1}{|x_i-a_j|}+\frac{B}{2}|x_i|^2\Big)+\frac{2\ell}{N}\sum_{k<\ell}\log\frac{1}{|x_k-x_\ell|}.
\end{equation}
Hence $\mu_f$ is the Gibbs state for the Hamiltonian $\mathcal{H}_N$ at temperature $T=1/N$. $\mathcal{H}_N$ is the energy of $N$ two-dimensional particles interacting among themselves and with $J$ fixed point particles $a_j$ through the 2D-Coulomb kernel $-\log|\cdot|$, and trapped by an external harmonic potential. The charges $q_j$ will be eventually chosen to be equal to $2/N$, but the above notation will allow the reader to draw a direct comparison with the results of \cite[Section~3]{RouYng-17}.

The mean-field functional associated to the electrostatic Hamiltonian $\mathcal{H}_N$ is, for a given probability measure $\mu$,
\begin{equation} \label{eq:electrostatic_energy}
\mathcal{E}^\mathrm{el}_f[\mu]=\int_{\mathbb{R}^2}\left(\sum_{j=1}^Jq_j\log\frac{1}{|x-a_j|}+\frac{B}{2}|x|^2\right)\mu(x)dx+\ell\iint_{\mathbb{R}^2\times\mathbb{R}^2}\mu(x)\log\frac{1}{|x-y|}\mu(y)dxdy.
\end{equation}
As discussed in \cite[Sect. 3-4]{RouYng-17} and \cite{RouSerYng-13b}, as $N\to\infty$, there exists a unique normalized minimizer $\mu_f^{\mathrm{el}}$ satisfying
\begin{equation} \label{eq:electric_problem}
E^\mathrm{el}_f:=\inf\Big\{\mathcal{E}^\mathrm{el}[\mu]\;|\;\mu\ge0,\,\int\mu=1\Big\}=\mathcal{E}^\mathrm{el}_f[\mu_f^\mathrm{el}].
\end{equation}
Moreover, $\mu_f^{\mathrm{el}}$ only assumes the values zero and $B/2\pi\ell$. 

Let us define the 2D electrostatic energy
\begin{equation*}
D(\sigma,\sigma)=\frac{1}{2}\iint_{\mathbb{R}^2\times\mathbb{R}^2}\sigma(x)\log\frac{1}{|x-y|}\sigma(y)dxdy
\end{equation*}
for all $\sigma$ for which the integral is well defined (for example $\sigma$ such that $\int\left|\log|x|\right| \left|\sigma(x)\right|dx<+\infty$)

\begin{lemma}[\textbf{Bounds using the electrostatic energy}]\label{lemma:Coulomb_distance}\mbox{}\\
Let $\mu_1$ and $\mu_2$ be probability densities such that $D(\mu_1-\mu_2,\mu_1-\mu_2)<+\infty$. 
Then, for any test-function $\chi_1(x)$,
\begin{equation}  \label{eq:Coulomb_distance1}
\Big|\int_{\mathbb{R}^2}\big(\mu_1(x)-\mu_2(x))\chi_1(x)dx\Big|\le \|\nabla\chi_1\|_{L^2}D(\mu_1-\mu_2,\mu_1-\mu_2)^{1/2},
\end{equation}
and, for any test-function $\chi_2(x,y)$,
\begin{equation} \label{eq:Coulomb_distance2}
\begin{split}
\Big|\iint_{\mathbb{R}^2\times\mathbb{R}^2}\big(\mu_1(x)\mu_1(y)-\mu_2(x)\mu_2(y)\big)\chi_2(x,y)dxdy\Big|\le C&D(\mu_1-\mu_2,\mu_1-\mu_2)^{1/2}\\&\times\sup_y\|\nabla\chi_2(\cdot,y)\|_{L^2}
\end{split}
\end{equation}
\end{lemma}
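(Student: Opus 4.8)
The plan is to recognize $D(\cdot,\cdot)^{1/2}$ as (a constant multiple of) the $\dot H^{-1}(\R^2)$ norm, and both left-hand sides as dual pairings of $\sigma := \mu_1 - \mu_2$ against functions in $\dot H^1$. Note at the outset that $\sigma$ is \emph{neutral}, $\int_{\R^2}\sigma = 0$, since $\mu_1$ and $\mu_2$ are both probability densities, and that the hypothesis $D(\sigma,\sigma)<+\infty$ is precisely the statement $\sigma\in\dot H^{-1}(\R^2)$.

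First I would introduce the Coulomb potential generated by $\sigma$,
$$ h_\sigma(x) = \int_{\R^2}\log\frac{1}{|x-y|}\,\sigma(y)\,dy, $$
which solves $-\Delta h_\sigma = 2\pi\sigma$ distributionally, since $-\Delta\log\frac{1}{|\cdot|} = 2\pi\delta_0$ in $\R^2$. The core identity is then
$$ D(\sigma,\sigma) = \frac{1}{2}\int_{\R^2}\sigma\,h_\sigma = \frac{1}{4\pi}\int_{\R^2}|\nabla h_\sigma|^2, $$
obtained by integration by parts (equivalently, in Fourier variables, $D(\sigma,\sigma) = \frac{1}{4\pi}\int_{\R^2}|\widehat\sigma(\xi)|^2|\xi|^{-2}\,d\xi$). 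With this in hand the first bound is immediate: writing $\int\sigma\chi_1 = \frac{1}{2\pi}\int\nabla h_\sigma\cdot\nabla\chi_1$ and applying Cauchy--Schwarz gives $|\int\sigma\chi_1| \le \frac{1}{2\pi}\|\nabla h_\sigma\|_{L^2}\|\nabla\chi_1\|_{L^2} = \pi^{-1/2}D(\sigma,\sigma)^{1/2}\|\nabla\chi_1\|_{L^2}$, and since $\pi^{-1/2}<1$ this yields \eqref{eq:Coulomb_distance1}.

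For the second bound I would telescope, $\mu_1(x)\mu_1(y)-\mu_2(x)\mu_2(y) = \sigma(x)\mu_1(y) + \mu_2(x)\sigma(y)$. Inserting this and integrating the first term first in $x$ (with $y$ frozen) and the second term first in $y$ (with $x$ frozen), each inner integral is exactly of the form controlled by \eqref{eq:Coulomb_distance1}: for fixed $y$, $|\int\sigma(x)\chi_2(x,y)\,dx| \le \|\nabla\chi_2(\cdot,y)\|_{L^2}D(\sigma,\sigma)^{1/2}$, and symmetrically for the other term (the two gradient factors coinciding for the translation-invariant kernels $\chi_2(x,y)=w(x-y)$ relevant in the application). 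The remaining integrations against $\mu_1$ and $\mu_2$ cost nothing, because these are probability measures with $\int\mu_i=1$; bounding the frozen-variable gradient by its supremum and collecting the two telescoped terms into the constant $C$ gives \eqref{eq:Coulomb_distance2}.

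The one genuinely delicate point — and where neutrality of $\sigma$ is indispensable — is justifying the Dirichlet-energy identity and the integrations by parts, i.e. that $h_\sigma\in\dot H^1(\R^2)$ with vanishing boundary contributions. In $\R^2$ the logarithmic kernel decays too slowly for this to hold for an arbitrary charge; but because $\int\sigma=0$ the monopole term cancels, so that $h_\sigma(x)=O(|x|^{-1})$ and $\nabla h_\sigma(x)=O(|x|^{-2})$ at infinity, which is square integrable in two dimensions. This is exactly what renders $D(\sigma,\sigma)$ finite and positive definite on the mean-zero subspace and legitimizes every manipulation above. The cleanest rigorous route is to run the whole argument on the Fourier side, where all decay issues reduce to the integrability of $|\widehat\sigma(\xi)|^2|\xi|^{-2}$ near $\xi=0$, again guaranteed by $\widehat\sigma(0)=\int\sigma=0$.
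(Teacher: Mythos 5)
Your proof is correct. For \eqref{eq:Coulomb_distance1} it coincides with the paper's own argument: the paper also introduces the potential $V=-\frac{1}{2\pi}\log|\cdot|\star(\mu_1-\mu_2)$, uses $-\Delta V=\mu_1-\mu_2$, integrates by parts and applies Cauchy--Schwarz, handling rigor by first proving the bound for smooth compactly supported densities and concluding by density (your neutrality/decay and Fourier-side discussion serves the same purpose). For \eqref{eq:Coulomb_distance2}, writing $\sigma:=\mu_1-\mu_2$, you take a genuinely different though equally effective route: you telescope $\mu_1(x)\mu_1(y)-\mu_2(x)\mu_2(y)=\sigma(x)\mu_1(y)+\mu_2(x)\sigma(y)$ and apply the one-body bound twice with the other variable frozen, paying only $\int\mu_i=1$; the paper instead first reduces to symmetric $\chi_2(x,y)=\chi_2(y,x)$ (the antisymmetric part integrates to zero against the symmetric function $\mu_1(x)\mu_1(y)-\mu_2(x)\mu_2(y)$), factorizes the difference as $\big(\mu_1(x)+\mu_2(x)\big)\big(\mu_1(y)-\mu_2(y)\big)$, applies \eqref{eq:Coulomb_distance1} once with the test function $y\mapsto\int\chi_2(x,y)\big(\mu_1(x)+\mu_2(x)\big)dx$, and finishes with Minkowski's integral inequality. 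The one place where your version is weaker is exactly the point you flag in parentheses: your second telescoped term is naturally controlled by $\sup_x\|\nabla_y\chi_2(x,\cdot)\|_{L^2}$, the gradient in the \emph{second} variable, which is not the quantity appearing in \eqref{eq:Coulomb_distance2}, and you repair this only for the translation-invariant kernels $w(x-y)$ of the application. This is not a substantive gap, but the clean fix is the paper's opening symmetrization, and some such reduction is genuinely needed: for $\chi_2$ depending only on $y$ the right-hand side of \eqref{eq:Coulomb_distance2} vanishes while the left-hand side equals $\big|\int\sigma\chi_2\big|\neq 0$ in general, so the lemma is implicitly a statement about symmetric test functions. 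Prepending the paper's observation to your telescoping (for symmetric $\chi_2$ one has $\nabla_y\chi_2(x,y)=(\nabla_x\chi_2)(y,x)$, so the two sup-norms coincide) makes your argument yield the stated bound with $C=2$.
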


\begin{proof}
We will prove the statement for smooth and compactly supported $\mu_1(x)$ and $\mu_2(x)$. The general result is then obtained by a density argument. Note that 
$$ D (\mu_1-\mu_2,\mu_1-\mu_2) \geq 0$$
when $\int_{\R^2} \mu_1 = \int_{\R^2} \mu_2$, see~\cite[Chapter I, Lemma~1.8]{SafTot-97}.
Assume for the moment that $\chi_1$ is differentiable and denote 
$$ V = -\frac{1}{2\pi}\log |\, . \,| \star \left( \mu_1 - \mu_2\right). $$ 
Then, since 
$$ -\Delta V = \mu_1 - \mu_2$$
\begin{equation*}
\int_{\mathbb{R}^2}\big(\mu_1(x)-\mu_2(x))\chi_1(x)dx=\int {\nabla\chi_1}(x)\cdot {\nabla V}(x)dx,
\end{equation*}
and by Cauchy-Schwarz
\begin{equation*}
\begin{split}
\Big|\int_{\mathbb{R}^2}\big(\mu_1(x)-\mu_2(x))\chi_1(x)dx\Big|\le\;& \|\nabla \chi_1\|_{L^2}\big\|\nabla V \big\|_{L^2}\\
=\;&C\|\nabla\chi_1\|_{L^2}\Big[\int_{{\mathbb{R}^2}}\Big(\nabla_x\int_{\mathbb{R}^2}\frac{1}{|x-y|}(\mu_1(y)-\mu_2(y))dy\Big)\\
&\qquad\qquad\qquad\qquad\times\Big(\nabla_x\int_{\mathbb{R}^2}\frac{1}{|x-z|}(\mu_1(z)-\mu_2(z))dz\Big)\Big]^{1/2}\\
=\;&C\|\nabla \chi_1\|_{L^2}\Big[\int_{{\mathbb{R}^2}}\int_{\mathbb{R}^2}\frac{1}{|x-y|}(\mu_1(y)-\mu_2(y))dy\\
&\qquad\qquad\qquad\qquad\times\Big(\Delta_x\int_{\mathbb{R}^2}\frac{1}{|x-z|}(\mu_1(z)-\mu_2(z))dz\Big)\Big]^{1/2}\\
=\;&\|\nabla\chi_1\|_{L^2} D(\mu_1-\mu_2,\mu_1-\mu_2)^{1/2}.
\end{split}
\end{equation*}
The result is then extended by density to test functions with $\nabla\chi_1\in L^2$.

To prove \eqref{eq:Coulomb_distance2} we can safely assume $\chi_2(x,y)=\chi_2(y,x)$, because the contribution of the antisymmetric part would integrate to zero. We have
\begin{equation*}
\begin{split}
\iint_{\mathbb{R}^2\times\mathbb{R}^2}&\big(\mu_{1}(x)\mu_1(y)-\mu_2(x)\mu_2(y)\big)\chi_2(x,y)dxdy\\
=&\iint_{\mathbb{R}^2\times\mathbb{R}^2}\big(\mu_1(x)+\mu_2(x)\big)\big(\mu_1(y)-\mu_2(y)\big)\chi_2(x,y)dxdy,
\end{split}
\end{equation*}
and hence, by \eqref{eq:Coulomb_distance1},
\begin{equation*}
\begin{split}
\Big|\iint_{\mathbb{R}^2\times\mathbb{R}^2}&\big(\mu_{1}(x)\mu_1(y)-\mu_2(x)\mu_2(y)\big)\chi_2(x,y)dxdy\Big|\\
&\le D(\mu_1-\mu_2,\mu_1-\mu_2)^{1/2}\;\Big\|\nabla\int\chi_2(\cdot,x)\big(\mu_1(x)+\mu_2(x)\big)dx\Big\|_{L^2}.
\end{split}
\end{equation*}
The Cauchy-Schwarz inequality together with $\int\mu_1=\int\mu_2=1$ yields
\begin{equation*}
\Big\|\nabla\int\chi_2(\cdot,x)\big(\mu_1(x)+\mu_2(x)\big)dx\Big\|_{L^2} \le C\sup_x\|\nabla\chi_2(\cdot,x)\|_{L^2},
\end{equation*}
which concludes the proof.
\end{proof}

The first main result of this section is the following proposition, showing that $\mu_f$ approximates $\mu^\mathrm{el}_f$ in the sense of reduced densities.

\begin{proposition}[\textbf{Mean-field approximation of $\Psi_{f}$}]\mbox{}\label{prop:mf_upper}\\ 
Let $f$ be a polynomial and $\mu^{(1)}_f$, $\mu^{(2)}_f$ be the one- and two-body densities associated to $\Psi_f$ of the form \eqref{eq:restricted_class}. Then, for any test functions $\chi_1$ and $\chi_2$,
\begin{equation} \label{eq:qm_mf_approximation}
\Big|\int_{\mathbb{R}^2}\big(\mu^{(1)}_{f}(x)-\mu^\mathrm{el}_{f}(x)\big)\chi_1(x)dx\Big|\le C\Big(\frac{\log N}{N}\Big)^{1/2}\|\nabla\chi_1\|_{L^2}+CN^{-1/2}\|\nabla\chi_1\|_{L^\infty},
\end{equation}
and
\begin{equation} \label{eq:qm_mf_approximation2}
\begin{split}
\Big|\iint_{\mathbb{R}^2\times\mathbb{R}^2}\big(\mu^{(2)}_{f}(x,y)-\mu^\mathrm{el}_{f}(x)\mu^\mathrm{el}_{f}(y)\big)\chi_2(x,y)dxdy\Big|\le\;& C\Big(\frac{\log N}{N}\Big)^{1/2}\sup_y\|\nabla\chi_2(\cdot,y)\|_{L^2}\\
&+CN^{-1/2}\sup_y\|\nabla\chi_2(\cdot,y)\|_{L^\infty}.
\end{split}
\end{equation}
Moreover, we have the pointwise bounds
\begin{equation} \label{eq:pointwise_bounds}
\begin{split}
0\le\mu^{(1)}_{f}(x)&\le C e^{-NC(|x|^2-\log N)}\\
0\le\mu^{(2)}_{f}(x,y)&\le C e^{-NC(|x|^2+|y|^2-\log N)}
\end{split}
\end{equation}
\end{proposition}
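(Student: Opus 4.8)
The plan is to deduce both \eqref{eq:qm_mf_approximation} and \eqref{eq:qm_mf_approximation2} from a single estimate on the expected Coulomb distance between the empirical measure $\Emp_{X_N}$ and the equilibrium measure $\mu^\mathrm{el}_f$ under the Gibbs state $\mu_f$, namely
\begin{equation} \label{eq:key_D_bound}
\bE_{\mu_f}\Big[ D\big(\Emp_{X_N}-\mu^\mathrm{el}_f,\Emp_{X_N}-\mu^\mathrm{el}_f\big)\Big]\le C\frac{\log N}{N},
\end{equation}
and then to treat the pointwise bounds \eqref{eq:pointwise_bounds} separately. Granting \eqref{eq:key_D_bound}, the one-body estimate follows by writing $\int \chi_1\,\mu^{(1)}_f=\bE_{\mu_f}[\int\chi_1\,d\Emp_{X_N}]$, applying \eqref{eq:Coulomb_distance1} configuration by configuration to the difference $\Emp_{X_N}-\mu^\mathrm{el}_f$, and using Jensen's inequality $\bE[D^{1/2}]\le(\bE D)^{1/2}$; the two-body estimate follows analogously from \eqref{eq:Coulomb_distance2} after expressing $\mu^{(2)}_f$ through $\bE_{\mu_f}[\Emp_{X_N}\otimes\Emp_{X_N}]$, the diagonal and $1/N$ corrections being harmless. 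Since $\Emp_{X_N}$ is a sum of Dirac masses, $D(\Emp_{X_N}-\mu^\mathrm{el}_f,\cdot)$ is in fact infinite, so \eqref{eq:key_D_bound} must be read with the point charges smeared at a scale $\eta=N^{-1/2}$; replacing the smeared measure by the true one costs $\eta\,\|\nabla\chi_1\|_{L^\infty}=N^{-1/2}\|\nabla\chi_1\|_{L^\infty}$ (respectively $N^{-1/2}\sup_y\|\nabla\chi_2(\cdot,y)\|_{L^\infty}$), which is exactly the second error term in \eqref{eq:qm_mf_approximation}--\eqref{eq:qm_mf_approximation2}.

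The core of the argument is therefore \eqref{eq:key_D_bound}, which I would obtain by the Coulomb gas method of \cite{RouYng-17,RouSerYng-13b}. The starting point is the algebraic identity, visible by inserting $\Emp_{X_N}$ into \eqref{eq:electrostatic_energy}, that $\cH_N(X_N)=N\,\cE^\mathrm{el}_f[\Emp_{X_N}]$ up to the (divergent) diagonal self-interaction, whose smeared version contributes $\ell N\log(1/\eta)$ to the exponent. Expanding $\cE^\mathrm{el}_f$ around its minimizer by bilinearity of the logarithmic kernel, and using that the Euler--Lagrange conditions make the effective potential $\sum_j q_j\log|\cdot-a_j|^{-1}+\tfrac B2|\cdot|^2+2\ell\log|\cdot|^{-1}\ast\mu^\mathrm{el}_f$ constant on $\supp\mu^\mathrm{el}_f$ and no smaller elsewhere, the first-order term drops (because $\Emp_{X_N}-\mu^\mathrm{el}_f$ has zero mass and the variation is sign-definite), leaving, for the measures smeared at scale $\eta$,
\begin{equation} \label{eq:energy_splitting}
\cE^\mathrm{el}_f[\Emp_{X_N}]\ge \cE^\mathrm{el}_f[\mu^\mathrm{el}_f]+2\ell\, D\big(\Emp_{X_N}-\mu^\mathrm{el}_f,\Emp_{X_N}-\mu^\mathrm{el}_f\big),
\end{equation}
the positivity of the last term following from $\int(\Emp_{X_N}-\mu^\mathrm{el}_f)=0$ exactly as in the proof of Lemma \ref{lemma:Coulomb_distance}.

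From \eqref{eq:energy_splitting} the bound \eqref{eq:key_D_bound} reduces to controlling the free energy. Taking expectations under $\mu_f$ and combining \eqref{eq:energy_splitting} with the Gibbs identity $-N^{-1}\log\cZ=\bE_{\mu_f}[\cH_N]+N^{-1}\int\mu_f\log\mu_f$, the quantity $2\ell N^2\bE_{\mu_f}[D]$ is controlled by three terms: the free-energy gap $-\log\cZ-N^2\cE^\mathrm{el}_f[\mu^\mathrm{el}_f]$, the negative entropy $-\int\mu_f\log\mu_f$, and the smeared self-energy $\ell N\log(1/\eta)$. A product trial state built from $\mu^\mathrm{el}_f$ bounds the first by $O(N)$ (i.e.\ the free energy from above), the $O(1)$ confinement of the particles bounds the entropy term by $O(N)$, and with $\eta=N^{-1/2}$ the self-energy $\tfrac\ell2 N\log N$ dominates, giving $\bE_{\mu_f}[D]\le C\log N/N$. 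This free-energy and self-energy bookkeeping -- extracting the correct power of $\log N$ and controlling the smearing corrections uniformly -- is the main obstacle, and is precisely the part imported and adapted from \cite{RouYng-17}; the only genuinely new point is that the two-body inequality \eqref{eq:Coulomb_distance2} lets the single estimate \eqref{eq:key_D_bound} govern $\mu^{(2)}_f$ as well.

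Finally, the pointwise bounds \eqref{eq:pointwise_bounds} are comparatively elementary and rely on the harmonic confinement. Integrating out $x_2,\dots,x_N$ in $\mu_f=\cZ^{-1}e^{-N\cH_N}$, the dependence of $\cH_N$ on the retained variable is $\tfrac B2|x|^2$ plus logarithmic terms (from the $a_j$ and from the pair interaction), and for large $|x|$ the quadratic confinement dominates the logarithms. The normalization $\cZ^{-1}$ together with the sub-Gaussian logarithmic corrections are absorbed into a factor $e^{NC\log N}$, yielding $\mu^{(1)}_f(x)\le Ce^{-NC(|x|^2-\log N)}$; the two-body bound is obtained identically by retaining two variables.
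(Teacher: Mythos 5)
Your proposal is correct in substance, but it takes a genuinely different route from the paper. The paper does not work with the empirical measure at all: it pivots through the intermediate \emph{thermal} equilibrium measure $\mu^{\mathrm{Gibbs}}_f$, the minimizer of the free-energy functional $\mathcal{F}^{\mathrm{el}}_f[\mu]=\mathcal{E}^{\mathrm{el}}_f[\mu]+N^{-1}\int\mu\log\mu$, and assembles the two-body bound \eqref{eq:qm_mf_approximation2} by a triangle decomposition: the comparison of $\mu^{(2)}_f$ with $\mu^{\mathrm{Gibbs}}_f\otimes\mu^{\mathrm{Gibbs}}_f$ is imported from \cite[Remark 3.3]{RouSerYng-13b} (see also \cite[Lemma 7.4]{RouSer-14}), the bound $D(\mu^{\mathrm{Gibbs}}_f-\mu^{\mathrm{el}}_f,\mu^{\mathrm{Gibbs}}_f-\mu^{\mathrm{el}}_f)\le CN^{-1}$ is imported from \cite[Eq. (4.4)]{RouYng-17}, and the two are glued with \eqref{eq:Coulomb_distance2}; the one-body bound \eqref{eq:qm_mf_approximation} is cited wholesale from \cite[Eq. (4.1)]{RouYng-17}. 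You instead re-derive the concentration mechanism from scratch: a single estimate $\mathbb{E}_{\mu_f}[D(\widetilde{\mathrm{Emp}}_{X_N}-\mu^{\mathrm{el}}_f,\widetilde{\mathrm{Emp}}_{X_N}-\mu^{\mathrm{el}}_f)]\le C\log N/N$ for the empirical measure smeared at scale $N^{-1/2}$, obtained by the energy splitting around $\mu^{\mathrm{el}}_f$ (Euler--Lagrange sign condition killing the first-order term) plus the free-energy/entropy/self-energy bookkeeping, from which both marginal bounds follow via Lemma \ref{lemma:Coulomb_distance} and Jensen. Your bookkeeping is correct (the factor $2\ell$ in the quadratic remainder, the $\tfrac{\ell}{2}N\log N$ self-energy at scale $\eta=N^{-1/2}$, and the two error terms $(\log N/N)^{1/2}\|\nabla\chi\|_{L^2}$ and $N^{-1/2}\|\nabla\chi\|_{L^\infty}$ match exactly), and this is in fact the argument underlying the results the paper cites; so your route is more self-contained and has the aesthetic advantage of treating the one- and two-body bounds by one and the same estimate, at the price of re-proving what the paper obtains by reference.

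Two points deserve more care than you give them. First, in the passage from $\mathbb{E}[\mathrm{Emp}\otimes\mathrm{Emp}]$ to $\mu^{(2)}_f$, the diagonal correction $N^{-1}\int\chi_2(x,x)\mu^{(1)}_f$ is \emph{not} controlled by gradient norms of $\chi_2$ alone; you must first rewrite it as $N^{-1}\iint(\chi_2(x,x)-\chi_2(x,y))\mu^{(2)}_f\,dxdy$ and use a moment bound $\iint|x-y|\mu^{(2)}_f\le C$ (available from the confinement) to get an error $\le CN^{-1}\sup_y\|\nabla\chi_2(\cdot,y)\|_{L^\infty}$. Second, the pointwise bounds \eqref{eq:pointwise_bounds} are not as elementary as you suggest: integrating out $x_2,\dots,x_N$ does not decouple the interaction of $x_1$ with the remaining variables, and the clean way to absorb it is precisely the factorization $\mu_f\le e^{CN\log N}\prod_j\mu^{\mathrm{Gibbs}}_f(x_j)$ of \cite[Section 3.3]{RouSerYng-13b} combined with the Gaussian decay of $\mu^{\mathrm{Gibbs}}_f$ — i.e.\ the same free-energy machinery, not a soft confinement argument. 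Since the paper itself defers both points to the cited references, these are gaps of detail rather than of principle, but they are where the remaining work lies if you want your proof to stand alone.
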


\begin{proof}[Proof of Proposition \ref{prop:mf_upper}]
The bound \eqref{eq:qm_mf_approximation} corresponds exactly to \cite[Eq. (4.1)]{RouYng-17}, which in turn is based on \cite[Theorem 3.2]{RouSerYng-13b} and we refer to those papers for the complete proof. The argument is based on the free-energy functional
\begin{equation}
\mathcal{F}^\mathrm{el}_{f}[\mu]=\mathcal{E}^\mathrm{el}_{f}[\mu]+\frac{1}{N}\int\mu(x)\log\mu(x)dx
\end{equation}
at temperature $N^{-1}$ associated with the electrostatic energy functional $\mathcal{E}^\mathrm{el}_{f}$ \eqref{eq:electrostatic_energy}. Let us denote by $\mu^{\mathrm{Gibbs}}_{f}$ the unique minimizer of $\mathcal{F}^\mathrm{el}_{f}$ among probability measures.

In order to prove \eqref{eq:qm_mf_approximation2}, we notice that an intermediate step for the proof of \eqref{eq:qm_mf_approximation} is the bound
\begin{equation} \label{eq:Coulomb_proximity}
D\left(\mu^{\mathrm{Gibbs}}_{f}-\mu^{\mathrm{el}}_{f},\mu^{\mathrm{Gibbs}}_{f}-\mu^{\mathrm{el}}_{f}\right)\le CN^{-1}.
\end{equation}
This corresponds to \cite[Eq. (4.4)]{RouYng-17}. We also import, directly from \cite[Remark 3.3]{RouSerYng-13b} (see also \cite[Lemma 7.4]{RouSer-14})
\begin{equation} \label{eq:second_marginal1}
\begin{split}
\Big|\iint_{\mathbb{R}^2\times\mathbb{R}^2}\big(\mu^{(2)}_{f}(x,y)-\mu^\mathrm{Gibbs}_{f}(x)\mu^\mathrm{Gibbs}_{f}(y)\big)\chi_2(x,y)dxdy\Big|\le\;& C\Big(\frac{\log N}{N}\Big)^{1/2}\sup_y\|\nabla\chi_2(\cdot,y)\|_{L^2}\\
&+CN^{-1/2}\sup_y\|\nabla\chi_2(\cdot,y)\|_{L^\infty}.
\end{split}
\end{equation}
Moreover, by \eqref{eq:Coulomb_distance2} and \eqref{eq:Coulomb_proximity},
\begin{equation} \label{eq:second_marginal2}
\Big|\iint_{\mathbb{R}^2\times\mathbb{R}^2}\big(\mu^\mathrm{Gibbs}_{f}(x)\mu^\mathrm{Gibbs}_{f}(y)-\mu^\mathrm{el}_{f}(x)\mu^\mathrm{el}_{f}(y)\big)\chi_2(x,y)dxdy\Big|\le C N^{-1/2}\sup_y\|\nabla\chi_2(\cdot,y)\|_{L^2}.
\end{equation}

Hence, \eqref{eq:qm_mf_approximation2} is proven by decomposing
\begin{equation*}
\begin{split}
\iint_{\mathbb{R}^2\times\mathbb{R}^2}&\big(\mu^{(2)}_{f}(x,y)-\mu^\mathrm{el}_{f}(x)\mu^\mathrm{el}_{f}(y)\big)\chi_2(x,y)dxdy\\
=&\iint_{\mathbb{R}^2\times\mathbb{R}^2}\big(\mu^{(2)}_{f}(x,y)-\mu^\mathrm{Gibbs}_{f}(x)\mu^\mathrm{Gibbs}_{f}(y)\big)\chi_2(x,y)dxdy\\
&+\iint_{\mathbb{R}^2\times\mathbb{R}^2}\big(\mu^\mathrm{Gibbs}_{f}(x)\mu^\mathrm{Gibbs}_{f}(y)-\mu^\mathrm{el}_{f}(x)\mu^\mathrm{el}_{f}(y)\big)\chi_2(x,y)dxdy
\end{split}
\end{equation*}
and then using \eqref{eq:second_marginal1} and \eqref{eq:second_marginal2}.

We are only left with the proof of \eqref{eq:pointwise_bounds}, which is obtained analogously to the proof of \cite[Eq. (4.2)]{RouYng-17}. There one first obtains (cf. \cite[Eq. (4.14)]{RouYng-17})
\begin{equation} \label{eq:partial_pointwise_estimate}
0\le \mu^\mathrm{Gibbs}_f(x)\le C e^{-NC|x|^2}.
\end{equation}
This estimate is then carried over to one- and two-particle densities as in the proof of \cite[Eq. (3.16)]{RouSerYng-13b}. More precisely, at the end of Section 3.3 of \cite{RouSerYng-13b} the following formula is obtained
\begin{equation*}
\mu_f(x_1,\dots,x_N)\le e^{CN\log N}\prod_{j=1}^N\mu_f^\mathrm{Gibbs}(x_j).
\end{equation*}
By integrating with respect to all but one or two variables one gets
\begin{equation*}
\begin{split}
\mu_f^{(1)}(x)\le\;& e^{CN\log N}\mu_f^\mathrm{Gibbs}(x)\\
\mu_f^{(2)}(x,y)\le\;& e^{CN\log N}\mu_f^\mathrm{Gibbs}(x)\mu_f^\mathrm{Gibbs}(y),
\end{split}
\end{equation*}
which directly imply \eqref{eq:pointwise_bounds} due to \eqref{eq:partial_pointwise_estimate}.
\end{proof}

Having at hand Proposition \ref{prop:mf_upper} which shows proximity of $\mu^{(1)}_f$ and $\mu^\mathrm{el}_f$, the missing ingredient is a link between $\mu^\mathrm{el}_f$ and the measure $\mu^\mathrm{sol}$ of Theorem \ref{thm:energies up}. This is provided by the following result.

\begin{proposition}[\textbf{Inverse electrostatic problem}]\mbox{}\label{prop:inverse_electrostatic}\\
Let $\mu^\mathrm{sol}$ be, as in Theorem \ref{thm:energies up}, a probability measure whose only values are 0 and $B/2\pi\ell$. There exists a (sequence of) polynomial(s) $f_\delta$ indexed by a $N$-dependent parameter $\delta>0$ such that, for any test-functions $\chi_1$ and $\chi_2$,
\begin{equation} \label{eq:estimate_Coulomb_norm}
\Big|\int_{\mathbb{R}^2}\big(\mu^\mathrm{el}_{f_\delta}(x)-\mu^\mathrm{sol}(x))\chi(x)dx\Big|\le C N^{-1/4}\|\nabla\chi\|_{L^2}
\end{equation}
and
\begin{equation} \label{eq:estimate_Coulomb_norm2}
\begin{split}
\Big|\iint_{\mathbb{R}^2\times\mathbb{R}^2}\big(\mu^\mathrm{el}_{f_\delta}(x)\mu^\mathrm{el}_{f_\delta}(y)-\mu^\mathrm{sol}(x)\mu^\mathrm{sol}(y)\big)\chi_2(x,y)dxdy\Big|\le CN^{-1/4}\sup_y\|\nabla\chi_2(\cdot,y)\|_{L^2}.
\end{split}
\end{equation}
\end{proposition}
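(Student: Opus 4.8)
The plan is to reduce both inequalities to a single bound on the Coulomb distance between $\mu^\mathrm{el}_{f_\delta}$ and $\mu^\mathrm{sol}$, namely
\begin{equation}
D\big(\mu^\mathrm{el}_{f_\delta}-\mu^\mathrm{sol},\,\mu^\mathrm{el}_{f_\delta}-\mu^\mathrm{sol}\big)^{1/2}\le C N^{-1/4}.
\tag{$\dagger$}
\end{equation}
Once $(\dagger)$ is established, \eqref{eq:estimate_Coulomb_norm} is precisely \eqref{eq:Coulomb_distance1} applied to $\mu_1=\mu^\mathrm{el}_{f_\delta}$, $\mu_2=\mu^\mathrm{sol}$, while \eqref{eq:estimate_Coulomb_norm2} is precisely \eqref{eq:Coulomb_distance2} for the same pair, both being probability measures. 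In particular the two-body density requires no separate argument: the symmetrization $\mu_1\otimes\mu_1-\mu_2\otimes\mu_2=(\mu_1+\mu_2)\otimes(\mu_1-\mu_2)$ already performed in Lemma~\ref{lemma:Coulomb_distance} turns the two-body bound into the one-body Coulomb distance, the mass constraint $\int\mu^\mathrm{el}_{f_\delta}=\int\mu^\mathrm{sol}=1$ absorbing the extra factor. I therefore concentrate on $(\dagger)$.

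To build $f_\delta$ I follow the inverse-electrostatic recipe of~\cite{RouYng-17}. Writing $\mu^\mathrm{sol}=\frac{B}{2\pi\ell}\mathbbm 1_\Sigma$, I fix a large disk $D\supset\Sigma$ carrying the correct total mass and place the quasi-hole points $a_j$ on a square lattice of spacing $\delta$ filling the depletion region $D\setminus\Sigma$, each with multiplicity $Nq_j/2$ and charge $q_j\approx\frac{B}{\pi}\delta^2$ (rounded so that $Nq_j/2\in\mathbb{N}$). The design is such that the discrete background $\nu:=\sum_j q_j\delta_{a_j}$ is a scale-$\delta$ quadrature of the smeared background $\bar\nu:=\frac{B}{\pi}\mathbbm 1_{D\setminus\Sigma}$, the latter chosen so that $\mu^\mathrm{sol}$ solves the smeared electrostatic problem (i.e. satisfies the Frostman conditions for $\bar\Phi:=\frac{B}{2}|x|^2+\log\tfrac1{|\cdot|}\star\bar\nu$). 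The requirement that the multiplicities be $\gtrsim 1$ forces $\delta\gtrsim N^{-1/2}$, which, together with $\delta\to0$, pins down the admissible window for $\delta$.

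The heart of the matter is a comparison between two logarithmic-gas equilibrium problems: the genuine one, with external potential $\Phi=\frac{B}{2}|x|^2+\log\tfrac1{|\cdot|}\star\nu$ and minimizer $\mu^\mathrm{el}_{f_\delta}$, and the smeared one, with $\bar\Phi$ and minimizer $\mu^\mathrm{sol}$ (up to a boundary correction). Testing the two Frostman inequalities against one another yields the standard monotonicity estimate
\begin{equation*}
4\ell\,D\big(\mu^\mathrm{el}_{f_\delta}-\mu^\mathrm{sol},\,\mu^\mathrm{el}_{f_\delta}-\mu^\mathrm{sol}\big)\le\int_{\mathbb{R}^2}\big(\bar\Phi-\Phi\big)\,d\big(\mu^\mathrm{el}_{f_\delta}-\mu^\mathrm{sol}\big),
\end{equation*}
whose right-hand side is the pairing of the density difference against the potential $\bar\Phi-\Phi=\log\tfrac1{|\cdot|}\star(\bar\nu-\nu)$ generated by the mass-zero quadrature error $\bar\nu-\nu$. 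Estimating this pairing is then a classical potential-theoretic task: it splits into a bulk part, measuring how well the $\delta$-lattice of equilibrium holes depletes $D\setminus\Sigma$, and a boundary-layer part of width $O(\delta)$ along $\partial\Sigma$, each a power of $\delta$; optimizing over $\delta$ in the window $N^{-1/2}\lesssim\delta\to0$ yields the rate in $(\dagger)$.

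I expect the main obstacle to be exactly this potential-theoretic estimate of the right-hand side. The delicacy is twofold. First, the point charges $q_j\delta_{a_j}$ have divergent self-energy, so one cannot simply Cauchy--Schwarz with $D(\bar\nu-\nu,\bar\nu-\nu)^{1/2}$; one must estimate the singular potential $\log\tfrac1{|\cdot|}\star(\bar\nu-\nu)$ directly against the bounded densities $\mu^\mathrm{el}_{f_\delta}$ and $\mu^\mathrm{sol}$, controlling the near-field of each quasi-hole. Second, one must verify that the equilibrium holes around neighbouring $a_j$ genuinely merge and deplete $D\setminus\Sigma$ to nearly zero density rather than leaving $O(1)$ density in the interstices, and that rounding the multiplicities costs only a total charge error $O(\delta^{-2}/N)=o(1)$. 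These are precisely the estimates carried out for the one-body density in~\cite{RouYng-17}, which I would import; the genuinely new ingredient is the two-body reduction of the first paragraph.
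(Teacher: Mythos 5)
Your proposal is correct and follows essentially the same route as the paper: the paper's proof likewise constructs $f_\delta$ by adapting the inverse-electrostatic construction of \cite[Prop. 3.1]{RouYng-17} with $\mu^\mathrm{sol}$ playing the role of the bathtub minimizer, deduces $D(\mu^\mathrm{sol}-\mu^\mathrm{el}_{f_\delta},\mu^\mathrm{sol}-\mu^\mathrm{el}_{f_\delta})\le CN^{-1/2}$ (your bound $(\dagger)$), and then converts this into \eqref{eq:estimate_Coulomb_norm} and \eqref{eq:estimate_Coulomb_norm2} via Lemma \ref{lemma:Coulomb_distance}. The potential-theoretic details you spell out (the $\delta$-lattice of quasi-holes, the Frostman comparison, the optimization over $\delta$ in the window $N^{-1/2}\lesssim\delta\to0$) are exactly the steps the paper imports by citation rather than repeating.
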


\begin{proof}
Proposition \ref{prop:inverse_electrostatic} is proven analogously to \cite[Prop. 3.1]{RouYng-17}, the main difference being that the role of the bathtub minimizer $\rho_0$ is now played by $\mu^{\mathrm{sol}}_{}$. One can anyway repeat all the steps and deduce
\begin{equation*}
D(\mu^\mathrm{sol}-\mu^\mathrm{el}_{f_\delta},\mu^\mathrm{sol}-\mu^\mathrm{el}_{f_\delta}) \le CN^{-1/2}.
\end{equation*}
The result is then brought to the form \eqref{eq:estimate_Coulomb_norm} and \eqref{eq:estimate_Coulomb_norm2} using Lemma \ref{lemma:Coulomb_distance}.
\end{proof}

We can now provide the 

\begin{proof}[Proof of Theorem~\ref{thm:energies up}]
Let us introduce $\chi_\mathrm{in}$ and $\chi_\mathrm{out}$, smooth partition of unity with $\chi_\mathrm{in}$ supported in the disk $D(0,2\log N)$ and $\chi_\mathrm{out}$ identically zero in the disk $D(0,\log N)$. We can choose them so that
\begin{equation*}
\|\nabla \chi_\mathrm{in}\|_\infty+\|\nabla \chi_\mathrm{in}\|_\infty\le C\log N,
\end{equation*}
which implies, by the growth assumption \eqref{eq:growth_trap},
\begin{equation} \label{eq:norms_cutoff}
\begin{split}
\|\nabla(\chi_\mathrm{in}v)\|_{L^p}\le\;& C N^{\theta}\\
\|\nabla(\chi_\mathrm{in}w)\|_{L^p}\le\;& C N^{\theta}
\end{split}
\end{equation}
for any $\theta>0$ and $p\in[2,+\infty]$.

Let $f_\delta$ be the sequence of polynomials provided by Proposition \ref{prop:inverse_electrostatic}. We have
\begin{equation} \label{eq:upper_bound_partial}
\begin{split}
\frac{e (N,\lambda)}{N}\le&\;\frac{\mathcal{E}_{N,\lambda}[\Psi_{f_\delta}]}{N}\\
=&\;\int\chi_\mathrm{in}(x)v(x)\mu^\mathrm{sol}(x)dx+\frac{\lambda}{2}\int \int \chi_\mathrm{in}(x)\chi_\mathrm{in}(x)w(x-y)\mu^\mathrm{sol}(x)\mu^\mathrm{sol}(y)dxdy\\
&+\int \chi_\mathrm{in}(x)v(x)\Big(\mu^{(1)}_{f_\delta}(x)-\mu^\mathrm{sol}_{}(x)\Big)dx\\
&+\frac{\lambda}{2}\int\chi_\mathrm{in}(x)\chi_\mathrm{in}(y) w(x-y)\Big(\mu^{(2)}_{f_\delta}(x,y)-\mu^\mathrm{sol}_{}(x)\mu^\mathrm{sol}_{}(y)\Big)dxdy\\
&+\int \chi_\mathrm{out}(x)v(x)\mu^{(1)}_{f_\delta}(x)dx\\
&+\lambda\int\chi_\mathrm{in}(x)\chi_\mathrm{out}(y) w(x-y)\mu^{(2)}_{f_\delta}(x,y)dxdy\\
&+\frac{\lambda}{2}\int\chi_\mathrm{out}(x)\chi_\mathrm{out}(y) w(x-y)\mu^{(2)}_{f_\delta}(x,y)dxdy.
\end{split}
\end{equation}
Since $\mu^\mathrm{sol}$ is a fixed probability measure with compact support, the first line in the right-hand side of \eqref{eq:upper_bound_partial} coincides with $\mathcal{E}^\mathrm{MF}[\mu^\mathrm{sol}]$ for $N$ large enough. To estimate the second and third lines we use \eqref{eq:qm_mf_approximation}, \eqref{eq:qm_mf_approximation2}, \eqref{eq:estimate_Coulomb_norm}, \eqref{eq:estimate_Coulomb_norm2}, together with \eqref{eq:norms_cutoff}, thus obtaining
\begin{equation*}
\begin{split}
\Big|\int \chi_\mathrm{in}(x)v(x)\Big(\mu^{(1)}_{f_\delta}(x)-\mu^\mathrm{sol}_{}(x)\Big)dx\Big|\le\;& CN^{-1/4+\gamma}\\
\frac{\lambda}{2}\;\Big|\int\chi_\mathrm{in}(x)\chi_\mathrm{in}(y) w(x-y)\Big(\mu^{(2)}_{f_\delta}(x,y)-\mu^\mathrm{sol}_{}(x)\mu^\mathrm{sol}_{}(y)\Big)dxdy\Big|\le\;& CN^{-1/4+\gamma}
\end{split}
\end{equation*}
for any $\gamma>0$. The remaining three terms are estimated using \eqref{eq:pointwise_bounds}. For example
\begin{equation*}
\begin{split}
\Big|\int\chi_\mathrm{out}(x)v(x)\mu^{(1)}_{f_\delta}(x)dx\Big|\le \int_{|x|\ge \log N} e^{-CN(|x|^2-\log N)}|x|^s
\end{split}
\end{equation*}
is exponentially small as $N\to\infty$, and the same for the other terms. We obtained
\begin{equation*}
\frac{e (N,\lambda)}{N}\le \frac{\mathcal{E}_{N,\lambda}[\Psi_{f_\delta}]}{N}\le \mathcal{E}^\mathrm{MF}[\mu^\mathrm{sol}](1+N^{-1/4+\gamma})
\end{equation*}
for any $\gamma>0$, which concludes the proof.
\end{proof}

\section{Convergence of densities} \label{sect:densities}

\subsection{The Hewitt-Savage and Diaconis-Freedman Theorems}

We start by recalling the Hewitt-Savage Theorem~\cite{HewSav-55,DiaFre-80} for $N$-body states, see~\cite{Rougerie-spartacus,Rougerie-LMU,Mischler-11} for more details. We will denote by $\mathcal{P}(\Sigma)$ the set of probability measures over a set $\Sigma$. We shall use the fact that the de Finetti-Hewitt-Savage can be approximated by the Diaconis-Freedman construction: given a symmetric probability measure $\mu_N$ over $\R^{dN}$, $d\geq 1$, we define the probability measure $P_{\mu_N}$ over $\mathcal{P}(\mathbb{R}^2)$ by
\begin{equation}\label{eq:DF_measure}
\int_{\mathcal{P}(\mathbb{R}^2)}\phi(\sigma)dP_{\mu_N}(\sigma):=\int \phi(\mathrm{Emp}_{X_N})d\mu_N(X_N),
\end{equation}
for any $\phi\in C^0(\mathcal{P}(\mathbb{R}^2))$ or, by a slight abuse of notation,
\begin{equation} 
P_{\mu_N}(\sigma) =\int_{\mathbb{R}^{2N}} \delta_{\sigma = \Emp_{X_N}} d\mu_N (x_1,\dots,x_N),
\end{equation}

We then have the

\begin{theorem}[\textbf{Hewitt-Savage in large $N$ limit}]\mbox{} \label{thm:Hewitt-Savage}\\
Let $\mu_N$ be a symmetric probability measure over $\R^{dN}$, and let its $k$-marginal $\mu_N^{(k)}$ be defined by integrating over $N-k$ $d$-dimensional variables. Assume that $\mu^{(1)}_N$ is tight:
\begin{equation} \label{eq:tightness_assumption}
\limsup_{R\to\infty}\sup_{N\in\mathbb{N}}\Big(1-\mu_N^{(1)}\left(B(0,R)\right)\Big)=0.
\end{equation}
Extract a subsequence such that 
$\mu_N ^{(1)} \wto \mu \in \cP (\R^d)$ 
as measures. Then, along this (not-relabeled) subsequence, 
\begin{enumerate}
 \item There exists a unique probability measure $P \in \cP (\cP (\R^d))$ such that 
 for any fixed $k\in\mathbb{N}$,
\begin{equation} \label{eq:HS}
\mu_N^{(k)}\rightharpoonup\int_{\mathcal{P}(\mathbb{R}^d)}\mu^{\otimes k}dP(\mu).
\end{equation}
\item Let $P_{\mu_N}$ be defined in~\eqref{eq:DF_measure} and $P$ the measure such that~\eqref{eq:HS} holds. We have that 
\begin{equation}\label{eq:DF to HS}
P_{\mu_N} \rightharpoonup P 
\end{equation}
weakly as measures on $\cP (\R^d)$.
\end{enumerate}
\end{theorem}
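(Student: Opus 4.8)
The plan is to follow the constructive route of Diaconis and Freedman, treating the explicit measure $P_{\mu_N}$ defined in~\eqref{eq:DF_measure} as the natural candidate converging to the sought-after $P$. The single quantitative input is the Diaconis--Freedman collision estimate, which compares sampling $k$ points \emph{without} replacement (which, by symmetry of $\mu_N$, reproduces exactly the marginal $\mu_N^{(k)}$) to sampling $k$ points \emph{with} replacement from the empirical measure. The latter scheme is precisely $\int_{\cP(\R^d)}\sigma^{\otimes k}dP_{\mu_N}(\sigma)=\int_{\R^{dN}}\Emp_{X_N}^{\otimes k}d\mu_N(X_N)$, as follows by testing~\eqref{eq:DF_measure} against $\phi(\sigma)=\int\psi\,d\sigma^{\otimes k}$. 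Since the two schemes differ only on the event that two of the $k$ draws coincide, whose probability is at most $\binom{k}{2}/N$, one obtains
$$\left\| \mu_N^{(k)}-\int_{\cP(\R^d)}\mu^{\otimes k}\,dP_{\mu_N}(\mu)\right\|_{\mathrm{TV}}\le \frac{k(k-1)}{2N}$$
for every fixed $k\le N$. This is the only point where the finite-$N$ exchangeability is used, and it comes for free from~\cite{DiaFre-80}.

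Next I would establish tightness of the family $\{P_{\mu_N}\}_N$ on the Polish space $\cP(\R^d)$ endowed with the weak topology. By Prokhorov's theorem it suffices to produce, for each $\eta>0$, a relatively compact $\cK\subset\cP(\R^d)$ with $P_{\mu_N}(\cK)\ge 1-\eta$ uniformly in $N$. Here assumption~\eqref{eq:tightness_assumption} is decisive, through the identity
$$\int_{\cP(\R^d)}\sigma\big(B(0,R)^c\big)\,dP_{\mu_N}(\sigma)=\int_{\R^{dN}}\Emp_{X_N}\big(B(0,R)^c\big)\,d\mu_N(X_N)=\mu_N^{(1)}\big(B(0,R)^c\big).$$
Choosing radii $R_j\uparrow\infty$ and thresholds $\delta_j\downarrow 0$ adapted to the uniform tightness of $\mu_N^{(1)}$, a Markov inequality on $\cP(\R^d)$ shows that the uniformly tight (hence, again by Prokhorov, relatively compact) set $\cK=\bigcap_j\{\sigma:\sigma(B(0,R_j)^c)\le\delta_j\}$ has $P_{\mu_N}$-mass at least $1-\eta$. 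Since the same tightness makes each marginal $\mu_N^{(k)}$ tight on $\R^{dk}$, a preliminary diagonal extraction lets me assume that every $\mu_N^{(k)}$ converges; extracting further so that $P_{\mu_N}\wto P$ then yields exactly the conclusion~\eqref{eq:DF to HS} of part $(2)$.

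It remains to identify the marginals of $P$. For fixed $k$ and $\phi\in C_b(\R^{dk})$ the functional $\sigma\mapsto\int\phi\,d\sigma^{\otimes k}$ is bounded and continuous for the weak topology on $\cP(\R^d)$, so $P_{\mu_N}\wto P$ gives $\int\sigma^{\otimes k}dP_{\mu_N}(\sigma)\wto\int\mu^{\otimes k}dP(\mu)$ as measures on $\R^{dk}$. Combined with the collision bound above, whose right-hand side vanishes for $k$ fixed, this produces $\mu_N^{(k)}\wto\int\mu^{\otimes k}dP(\mu)$, which is~\eqref{eq:HS}; taking $k=1$ identifies the barycenter of $P$ with the prescribed limit $\mu$. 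Uniqueness of $P$ follows from the moment problem on $\cP(\R^d)$: the maps $\sigma\mapsto\int\phi\,d\sigma^{\otimes k}$, as $k$ and $\phi$ vary, separate points and, via Stone--Weierstrass on the compact set $\cK$, generate a dense subalgebra of $C(\cK)$; hence $P$ is determined by the numbers $\int\mu^{\otimes k}dP(\mu)=\lim_N\mu_N^{(k)}$, so any two subsequential limits of $P_{\mu_N}$ coincide and no further extraction can change the answer.

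The main obstacle is conceptual rather than computational: one must correctly transport tightness from the physical one-particle marginals $\mu_N^{(1)}$ up to the abstract level of measures on $\cP(\R^d)$, and verify the weak continuity of the evaluation functionals $\sigma\mapsto\int\phi\,d\sigma^{\otimes k}$ so that the limit can be passed inside the $P_{\mu_N}$-integral. The Diaconis--Freedman estimate itself is standard and quantitative, so the real care lies in the Prokhorov/tightness bookkeeping on the space of probability measures and in closing the uniqueness argument through the moment characterization.
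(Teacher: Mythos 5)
Your proposal is correct and proves the theorem by a genuinely more self-contained route than the paper. The paper disposes of Item 1 by citation: the compact case is taken from \cite{Rougerie-spartacus,Rougerie-LMU} and the non-compactness of $\R^d$ is handled by a one-point compactification argument from \cite{Mischler-11}; Item 2 is then obtained by extracting a weak limit $P'$ of $P_{\mu_N}$ in the compact space $\cP\big(\cP\big(\overline{\R^d}\big)\big)$, using the Diaconis--Freedman bound \eqref{eq:DF} to see that the restriction of $P'$ to $\cP(\R^d)$ equals the measure $P$ of Item 1, and invoking the fact that $P$ has full mass to conclude that $P'$ charges only $\cP(\R^d)$. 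You instead never compactify and never invoke the abstract compact-case theorem: you take $P_{\mu_N}$ itself as the candidate, transport the hypothesis \eqref{eq:tightness_assumption} into tightness of the family $\{P_{\mu_N}\}$ on $\cP(\cP(\R^d))$ via the exact identity $\int\sigma\big(B(0,R)^c\big)\,dP_{\mu_N}(\sigma)=\mu_N^{(1)}\big(B(0,R)^c\big)$ together with Markov and Prokhorov, extract a limit $P$, identify its moments through the weak continuity of $\sigma\mapsto\int\phi\,d\sigma^{\otimes k}$ plus the same Diaconis--Freedman collision estimate, and obtain uniqueness by moment determinacy (Stone--Weierstrass on compact subsets of $\cP(\R^d)$). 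Both arguments pivot on the same quantitative input --- the Diaconis--Freedman bound --- but your route yields Items 1 and 2 in a single stroke and makes explicit where the tightness hypothesis enters (it prevents escape of mass at the level of measures on measures), whereas the paper's proof is shorter only because it outsources the construction of $P$ to the literature. The constant in your collision bound ($k(k-1)/2N$ versus $2k(k-1)/N$ in \eqref{eq:DF}) differs only by the normalization convention for the total-variation norm and is immaterial.

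One point deserves care, though it is an imprecision you share with the paper's own statement rather than a flaw specific to your argument. Your proof needs the ``preliminary diagonal extraction'' making every $\mu_N^{(k)}$ converge, and your closing claim that ``no further extraction can change the answer'' is valid only along that diagonal subsequence: the moments $\lim_N\mu_N^{(k)}$ that pin down $P$ must actually exist before they can determine anything. Along a subsequence on which only $\mu_N^{(1)}$ is assumed to converge --- the literal hypothesis of the theorem --- the higher marginals need not converge at all: take $\mu_N=\rho^{\otimes N}$ for $N$ even and $\mu_N=\tfrac12\big(\rho_1^{\otimes N}+\rho_2^{\otimes N}\big)$ for $N$ odd, with $\rho=\tfrac12(\rho_1+\rho_2)$; then $\mu_N^{(1)}\equiv\rho$ while $\mu_N^{(2)}$ oscillates and $P_{\mu_N}$ alternates between the limits $\delta_\rho$ and $\tfrac12(\delta_{\rho_1}+\delta_{\rho_2})$. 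So the conclusion genuinely holds only along a further subsequence, which is what the cited references prove and all that the paper ever uses; your argument establishes exactly that, and the discrepancy with the theorem's literal wording is inherited from the statement, not introduced by you.
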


\begin{proof}[Elements of proof]
If $\R^d$ is replaced by a compact subset, the proof is contained in~\cite[Chapter~2]{Rougerie-spartacus,Rougerie-LMU}. The compactness assumption can be removed by a one-point compactification argument that can be found in~\cite{Mischler-11}.

As for Item~2 of the statement, see again~\cite[Chapter~2]{Rougerie-spartacus,Rougerie-LMU} for the case where $\R^d$ is compactified. First extract a subsequence along which 
$$ P_N \wto P'$$
for some probability measure $P'$ on $\cP \left(\overline{\mathbb{R}^d}\right)$. We recall the Diaconis-Freedman bound
\begin{equation}\label{eq:DF}
\Big\|\mu_N^{(k)}-\widetilde\mu_N^{(k)}\Big\|_{\mathrm{TV} }\le 2\frac{k(k-1)}{N}.
\end{equation}
where we used the total variation norm
\begin{equation*}
\|\mu-\nu\|_{\mathrm{TV}}=2\sup\big\{|\mu(A)-\nu(A)|\;|\;A\in\Sigma\big\}.
\end{equation*}
and denoted 
\begin{equation}
\widetilde\mu_N=\int_{\mathcal{P}(\mathbb{R}^d)}\mu^{\otimes N}dP_N (\mu).
\end{equation}
It follows that $P' = P$. Indeed, passing to the limit in~\eqref{eq:DF}, the restriction of $P'$ to $\mathcal{P}({\mathbb{R}^d})$ must be $P$. But the latter is a probability measure, so $P'$ must charge only $\mathcal{P}({\mathbb{R}^d})$.
\end{proof}

\subsection{Energy estimates on the support of the limit measure}

As a preparatory step for the proof of Theorem \ref{thm:densities} we will show in this subsection that the limit measure $P$ which Theorem \ref{thm:Hewitt-Savage} associates to a many-body minimizing sequence only charges probability measures whose energy is smaller than $E^\mathrm{flo}$.

For any $\delta>0$ define the function $U:\mathcal{P}(\mathbb{R}^2)\to\mathbb{R}$ as
\begin{equation}
U_\delta(\mu)=\begin{cases}
1\qquad\text{if }\mathcal{E}^\mathrm{MF}[\mu]>E^\mathrm{flo}+\delta\;\;\text{ or if } \mathcal{E}^\mathrm{MF}[\mu]\text{ is not defined}\\
0\qquad\text{if }\mathcal{E}^\mathrm{MF}[\mu]\le E^\mathrm{flo}+\delta.
\end{cases}
\end{equation}
Notice that $U_\delta$ is also defined on probability measures that do not satisfy the upper constraint $\mu\le B/2\pi\ell$ of the flocking problem \ref{eq:mf_energy}.
The following lemma will be used later in order to apply a weak version of Fatou's lemma.

\begin{lemma} [\textbf{Weak-$\star$ lower semicontinuity of $U_\delta$}]\mbox{}\label{lemma:lower_semicont_U}\\ 
For any sequence of probability measures $\mu_n$ converging weakly-$\star$ as Radon measures to $\mu$ we have
\begin{equation*}
\liminf_{n\to\infty}U_\delta(\mu_n)\ge U_\delta(\mu).
\end{equation*}
\end{lemma}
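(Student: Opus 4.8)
The plan is to exploit that $U_\delta$ takes only the values $0$ and $1$, so the asserted inequality is equivalent, via its contrapositive, to the implication: if $U_\delta(\mu_n)=0$ for infinitely many $n$, then $U_\delta(\mu)=0$. Passing to a subsequence along which $U_\delta(\mu_{n_j})=0$, i.e. $\mathcal{E}^{\mathrm{MF}}[\mu_{n_j}]\le E^{\mathrm{flo}}+\delta$, it therefore suffices to prove the lower-semicontinuity inequality $\mathcal{E}^{\mathrm{MF}}[\mu]\le\liminf_j\mathcal{E}^{\mathrm{MF}}[\mu_{n_j}]$ for such energy-bounded sequences; this will in particular force $\mu$ to be a genuine probability measure, so that $\mathcal{E}^{\mathrm{MF}}[\mu]$ is defined and $U_\delta(\mu)=0$, which is exactly what the contrapositive requires.

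First I would use the confining potential to rule out escape of mass. Since $w$ is bounded, $\bigl|\iint w(x-y)\,d\mu_{n_j}(x)\,d\mu_{n_j}(y)\bigr|\le\|w\|_\infty$ for probability measures, so the energy bound gives $\int v\,d\mu_{n_j}\le E^{\mathrm{flo}}+\delta+\tfrac{|\lambda|}{2}\|w\|_\infty=:C$ uniformly in $j$. The growth assumption \eqref{eq:growth_trap} yields $\inf_{|x|\ge R}v\to+\infty$, so Markov's inequality gives $\mu_{n_j}(\{|x|\ge R\})\le C/\inf_{|x|\ge R}v\to 0$ uniformly in $j$ as $R\to\infty$. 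Hence the subsequence $(\mu_{n_j})$ is tight, and the vague (weak-$\star$ Radon) convergence $\mu_{n_j}\wto\mu$ upgrades to narrow convergence tested against $C_b(\R^2)$; in particular $\mu\in\cP(\R^2)$.

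With narrow convergence secured, I would treat the two terms of $\mathcal{E}^{\mathrm{MF}}$ separately. For the potential term, $v\ge 0$ is continuous, so the portmanteau lemma gives $\int v\,d\mu\le\liminf_j\int v\,d\mu_{n_j}$. For the interaction term, narrow convergence of $\mu_{n_j}$ implies narrow convergence of the product measures $\mu_{n_j}\otimes\mu_{n_j}\wto\mu\otimes\mu$ on $\R^2\times\R^2$; since $w(x-y)\in C_b(\R^2\times\R^2)$ we obtain the full convergence $\iint w(x-y)\,d\mu_{n_j}\,d\mu_{n_j}\to\iint w(x-y)\,d\mu\,d\mu$. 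Adding a lower-semicontinuous quantity to a convergent one preserves lower semicontinuity, whence $\mathcal{E}^{\mathrm{MF}}[\mu]\le\liminf_j\mathcal{E}^{\mathrm{MF}}[\mu_{n_j}]\le E^{\mathrm{flo}}+\delta$, giving $U_\delta(\mu)=0$.

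The step I expect to be the main obstacle is precisely the interaction term. Because $\lambda w$ is sign-indefinite and $w(x-y)$ does \emph{not} vanish at infinity on the product space $\R^2\times\R^2$ (a lump of mass sliding off to infinity retains a fixed self-interaction energy), the interaction term is by itself neither continuous nor lower semicontinuous under mere vague convergence; a simple two-bump example shows the limiting value can jump. The whole argument hinges on first using the energy bound together with the confining potential $v$ to exclude mass escape, thereby converting vague convergence into narrow convergence, which is exactly what makes the bounded continuous integrand $w(x-y)$ pass to the limit and renders the confining/interaction competition lower semicontinuous.
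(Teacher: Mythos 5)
Your proof is correct, and it follows the same skeleton as the paper's: since $U_\delta$ is two-valued, lower semicontinuity is equivalent to closedness of the zero sublevel set $U_\delta^{-1}(0)=\{\mu\,|\,\mathcal{E}^\mathrm{MF}[\mu]\le E^\mathrm{flo}+\delta\}$, which is in turn a lower-semicontinuity statement for $\mathcal{E}^\mathrm{MF}$. The difference is in how that last statement is handled. The paper simply cites the lower semicontinuity of $\mathcal{E}^\mathrm{MF}$ asserted in Section~\ref{sec:exist} (justified there by a monotone-convergence argument for the continuous, bounded-below integrands, in a context where tightness of the sequences under consideration is available), whereas you prove the needed closedness from scratch, and in doing so you make explicit the point that the paper's citation glosses over: under mere weak-$\star$ (vague) convergence the sign-indefinite interaction term is neither continuous nor lower semicontinuous, so one must first use the energy bound $\mathcal{E}^\mathrm{MF}[\mu_{n_j}]\le E^\mathrm{flo}+\delta$ together with the confining growth \eqref{eq:growth_trap} of $v$ to obtain tightness, upgrade vague to narrow convergence, and only then conclude (lower semicontinuity of the $v$-term by truncation, genuine continuity of the $w$-term since $w(x-y)\in C_b(\R^2\times\R^2)$ and product measures converge narrowly). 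Your route buys two things: it is self-contained in the weak-$\star$ Radon topology that actually appears in the statement (the one relevant for the Diaconis--Freedman/Fatou argument of Proposition~\ref{prop:HS_measure}), and, via the contrapositive formulation, it gracefully covers the case where the weak-$\star$ limit $\mu$ loses mass --- then no subsequence can satisfy $U_\delta(\mu_{n_j})=0$, so the inequality holds trivially with $\liminf_n U_\delta(\mu_n)=1$. The paper's route is shorter but leans on an earlier claim whose stated justification really addresses tight (e.g.\ minimizing) sequences rather than arbitrary weak-$\star$ convergent ones; your tightness step is exactly the bridge that makes it rigorous.
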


\begin{proof}
We have mentioned in Section~\ref{sec:exist} that $\mathcal{E}^\mathrm{MF}$ is lower semicontinuous. Hence 
$$U^{-1}_\delta(0)=\{\mu\in\mathcal{P}(\mathbb{R}^2)\,|\,\mathcal{E}^\mathrm{MF}[\mu]\le E^\mathrm{flo}+\delta\}$$
is closed. It follows that $\{\mu\in\mathcal{P}(\mathbb{R}^2)\,|\, U_\delta(\mu)\le\alpha\}\subset\mathcal{P}(\mathbb{R}^2)$ is closed for any $\alpha\in\mathbb{R}$, which is lower semicontinuity of $U_\delta$.
%
%
%
\end{proof}

%

\begin{proposition}[\textbf{The limit measure of minimizing sequences charges low energies}]\mbox{}\label{prop:HS_measure}\\ 
Let $\Psi_F$ be a minimizing sequence for the many-body energy \eqref{eq:many_body_energy}, i.e.,
\begin{equation*}
\mathcal{E}_{N,\lambda}[\Psi_F]=E (N,\lambda)+o(N)
\end{equation*}
as $N\to\infty$. Let $P$ be the limit measure associated to $|\Psi_F|^2$ via Theorem~\ref{thm:Hewitt-Savage}. Then
\begin{equation*}
\mathcal{E}^\mathrm{MF}[\mu]\le E^\mathrm{flo}\quad P\;\text{ - a.e.}
\end{equation*}
\end{proposition}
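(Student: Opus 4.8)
The plan is to show that for every fixed $\delta>0$ one has $\int_{\cP(\R^2)}U_\delta\,dP=0$, where $U_\delta$ is the function of Lemma~\ref{lemma:lower_semicont_U}. Since $\{\mu\,:\,\mathcal{E}^\mathrm{MF}[\mu]>\Eflo\}$ is the increasing union over $\delta\downarrow0$ of the sets $\{\mu\,:\,\mathcal{E}^\mathrm{MF}[\mu]>\Eflo+\delta\}\subset U_\delta^{-1}(1)$, this yields $P(\{\mathcal{E}^\mathrm{MF}>\Eflo\})=0$, which is exactly the claim. Now $U_\delta$ is bounded and, by Lemma~\ref{lemma:lower_semicont_U}, lower semicontinuous for the weak-$\star$ topology, while $P_{\mu_F}\wto P$ by Theorem~\ref{thm:Hewitt-Savage}(2) (along the subsequence furnished there). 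The portmanteau (weak Fatou) inequality therefore gives
$$\int U_\delta\,dP\le\liminf_{N\to\infty}\int U_\delta\,dP_{\mu_F},$$
so it suffices to prove $\int U_\delta\,dP_{\mu_F}\to0$. By the defining relation~\eqref{eq:DF_measure}, and since $\mathcal{E}^\mathrm{MF}[\Emp_{X_N}]$ is always finite, this integral is the deviation probability
$$\int U_\delta\,dP_{\mu_F}=\mu_F\Big(\big\{X_N\,:\,\mathcal{E}^\mathrm{MF}[\Emp_{X_N}]>\Eflo+\delta\big\}\Big),$$
and the whole problem reduces to a concentration estimate for $\mathcal{E}^\mathrm{MF}[\Emp_{X_N}]$ under $\mu_F$.

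Two ingredients feed this estimate. First, the energy identity derived at the start of the proof of Theorem~\ref{thm:energies low}, namely $N^{-1}\mathcal{E}_{N,\lambda}[\Psi_F]=\int\mathcal{E}^\mathrm{MF}[\Emp_{X_N}]\,d\mu_F+O(N^{-1})$ (the corrections being the diagonal and $w(0)$ terms, controlled by $\|w\|_\infty$). Combined with the minimizing-sequence hypothesis $\mathcal{E}_{N,\lambda}[\Psi_F]=E(N,\lambda)+o(N)$ and with $E(N,\lambda)=N\Eflo(1+o(1))$ from~\eqref{eq:proof_energies}, this produces the average
$$\int\mathcal{E}^\mathrm{MF}[\Emp_{X_N}]\,d\mu_F=\Eflo+o(1).$$
Second, a pointwise lower bound on the good set: reusing the Riemann-sum comparison~\eqref{eq:second_step} configuration-by-configuration rather than in average, I would show that for $X_N$ in the set $A_L$ of Theorem~\ref{thm:energies low} one has $\mathcal{E}^\mathrm{MF}[\Emp_{X_N}]\ge\mathcal{E}^\mathrm{MF}_L[\Emp_{X_N}]\ge\mathcal{E}^\mathrm{MF}_L[\widetilde\mu^{(X_N)}]-C_LN^\alpha\ge\Eflo-\eta_N$, where the first inequality uses $v\ge U_L$, the second the Lipschitz bound~\eqref{eq:gradient_U_L} on the Riemann sum, the third that $\widetilde\mu^{(X_N)}$ is admissible for the perturbed flocking problem (Lemmas~\ref{lemma:perturbed_L} and~\ref{lemma:dependence_upper_constraint}), and $\eta_N=C\eps+C_LN^\alpha\to0$ with the choices $\eps=N^{-1}$ and $L$ fixed large.

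With these in hand the concentration estimate is a Markov argument on $A_L$. Since $\mathcal{E}^\mathrm{MF}[\Emp_{X_N}]-\Eflo+\eta_N\ge0$ on $A_L$, Markov's inequality gives
$$\mu_F\big(\{X_N\in A_L\,:\,\mathcal{E}^\mathrm{MF}[\Emp_{X_N}]>\Eflo+\delta\}\big)\le\frac1\delta\int_{A_L}\big(\mathcal{E}^\mathrm{MF}[\Emp_{X_N}]-\Eflo+\eta_N\big)\,d\mu_F.$$
Writing the right-hand integral as the full integral minus the one over $A_L^c$, the former equals $o(1)+\eta_N$ by the average above, while the latter is bounded below using only $\mathcal{E}^\mathrm{MF}\ge-C$ (from $v\ge0$ and $\|w\|_\infty<\infty$) together with the exponentially small bound on $\mu_F(A_L^c)$ from~\eqref{eq:union_bound}; hence the whole right-hand side tends to $0$. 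Adding back $\mu_F(A_L^c)\to0$ yields $\mu_F(\{\mathcal{E}^\mathrm{MF}[\Emp_{X_N}]>\Eflo+\delta\})\to0$, whence $\int U_\delta\,dP=0$, and letting $\delta\downarrow0$ concludes.

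The step I expect to be the main obstacle is the pointwise lower bound on $A_L$: the proof of Theorem~\ref{thm:energies low} is written as a chain of inequalities for the $\mu_F$-\emph{average} of $\mathcal{E}^\mathrm{MF}[\Emp_{X_N}]$, and one must verify that each of them (in particular the Riemann-sum replacement and the passage through $E^\mathrm{flo}_\eps$) is in fact valid configuration-by-configuration on $A_L$, so that it can be frozen into the integrand of a Markov estimate, the only genuinely global input surviving being the energy identity that supplies the average. The delicate point is that on $A_L^c$ there is no upper bound on $\mathcal{E}^\mathrm{MF}[\Emp_{X_N}]$, so the estimate must be arranged to invoke only the crude lower bound $\mathcal{E}^\mathrm{MF}\ge-C$ there, the smallness of that contribution coming entirely from the deviation estimate of Theorem~\ref{thm:exponential_bound}.
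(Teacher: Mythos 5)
Your proposal is correct and follows essentially the same route as the paper's own proof: the identification $\int U_\delta\,dP_{\Psi_F}=\mu_F(\Xi_\delta)$ via the pushforward definition~\eqref{eq:DF_measure}, the good set $A_L$ with the exponential bound of Theorem~\ref{thm:exponential_bound}, the (pointwise-in-$X_N$) Riemann-sum comparison feeding into the perturbed flocking problems of Lemmas~\ref{lemma:perturbed_L} and~\ref{lemma:dependence_upper_constraint}, the minimizing hypothesis plus the upper bound~\eqref{eq:upper_bound} to control the average, the Fatou-type inequality for the weakly convergent $P_{\Psi_F}$ against the lower semicontinuous $U_\delta$, and the final limit $\delta\to 0$. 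Your Markov-inequality packaging is just a rearrangement of the paper's direct splitting of $\int_{A_L}\mathcal{E}^{\mathrm{MF}}[\Emp_{X_N}]\,d\mu_F$ over $A_L\cap\Xi_\delta$ and $A_L\cap\Xi_\delta^c$ (the configuration-by-configuration validity of the Riemann-sum step you flag as the main obstacle is indeed exactly what the paper uses on $A_L\cap\Xi_\delta^c$, and it holds since the error is controlled by $\|\nabla U_L\|_\infty$ and $\|\nabla w\|_\infty$ on each square); the only item you omit is the verification of the tightness hypothesis~\eqref{eq:tightness_assumption} needed to invoke Theorem~\ref{thm:Hewitt-Savage} at all, which the paper deduces from the energy bound and the growth~\eqref{eq:growth_trap} of $v$.
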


\begin{proof}
We want to apply Theorem \ref{thm:Hewitt-Savage} to $|\Psi_F|^2$.  We only have to prove the tightness condition~\eqref{eq:tightness_assumption} for $\mu^{(1)}_{\Psi_F}$. But, for $R$ large enough,
\begin{equation*}
1-\mu_{\Psi_F}^{(1)}(B(0,R))=\int_{\mathbb{R}^2\setminus B(0,R)}\frac{v(x)}{v(x)}\mu_{\Psi_F}^{(1)}(x)dx \le \frac{1}{v(R)}\left(N^{-1}\mathcal{E}_{N,\lambda}[\Psi_F]\right)\le \frac{E^\mathrm{flo}+C}{v(R)}.
\end{equation*}
The first inequality follows from the growth of $v$ in~\eqref{eq:growth_trap} and the boundedness of $w$. The second one is clear since we proved in the previous sections that
\begin{equation*}
\lim_{N\to\infty}\frac{E(N,\lambda)}{N}=E^\mathrm{flo}
\end{equation*}
for $\lambda$ small enough.

Denote now $P_{\Psi_F}$ the measure over $\cP (\R^2)$ associated to $|\Psi_F| ^2$ via~\eqref{eq:DF_measure}.  We first claim that, for any fixed $\delta>0$, 
\begin{equation}\label{eq:DF meas sup}
\int_{\mathcal{P}(\mathbb{R}^2)}U_\delta(\mu)dP_{\Psi_F}(\mu)\le (\delta^{-1}+1)o_N(1).
\end{equation}
Let us define the set
\begin{equation*}
\Xi_\delta=\big\{X_N\in\mathbb{R}^{2N}\;|\;\mathcal{E}^\mathrm{MF}[\Emp_{X_N}]> E^\mathrm{flo}+\delta\big\}.
\end{equation*}
From the definition~\eqref{eq:DF_measure} of $P_{\Psi_F}$ we deduce
\begin{equation} \label{eq:DF_U_equality}
\int_{\mathcal{P}(\mathbb{R}^2)}U_\delta(\mu)dP_{\Psi_F}(\mu)=\mu_F(\Xi_\delta).
\end{equation}
In order to obtain an estimate for $\mu_F(\Xi_\delta)$, our proof now goes through computations similar to those in the proof of \eqref{eq:lower_bound} at the end of Section \ref{sect:lower_bound}. In particular, importing directly \eqref{eq:first_splitting}, we have
\begin{equation} \label{eq:splitting_xi}
\begin{split}
N^{-1}\mathcal{E}_{N,\lambda}[\Psi_F]\ge\;&\int_{A_L\cap \Xi_\delta}\mathcal{E}^{\mathrm{MF}}[\Emp_{X_N}]d\mu_F(X_N)+\int_{A_L\cap \Xi_\delta^c}\mathcal{E}^{\mathrm{MF}}[\Emp_{X_N}]d\mu_F(X_N)\\
&-CN^{-1}.
\end{split}
\end{equation}
where
\begin{equation*}
A_L=\Big\{X_N\in \mathbb{R}^{2N}\,|\,\int_{S_k}\Emp_{X_N}\le (1+\varepsilon)\frac{B|S_k|}{2\pi\ell}\quad\forall k\,\text{ such that }\,S_k\subset\mathcal{S}_L\Big\},
\end{equation*}
$\mathcal{S}_L$ is the square $[-L,L]\times[-L,L]$ and $S_k$ is the $k$-th square of side $N^\alpha$ of a tiling of $\mathcal{S}_L$. We now use the definition of $\Xi_\delta$ in the integral on $A_L\cap \Xi_\delta$ to get
\begin{equation*}
\int_{A_L\cap \Xi_\delta}\mathcal{E}^{\mathrm{MF}}[\Emp_{X_N}]d\mu_F(X_N)> (E^\mathrm{flo}+\delta)\mu_F(A_L\cap\Xi_\delta).
\end{equation*}
Notice also that
\begin{equation*}
\mu_F(A_L\cap \Xi_\delta)=\mu_F(\Xi_\delta)-\mu_F(A_L^c\cap \Xi_\delta)\ge \mu_F(\Xi_\delta)-\mu_F(A_L^c),
\end{equation*}
and we can use \eqref{eq:union_bound} to estimate $\mu(A_L^c)$. Furthermore, in the integral on $A_L\cap \Xi_\delta^c$, we perform a Riemann sum approximation as explained in the proof of \eqref{eq:lower_bound} when passing from \eqref{eq:first_splitting} to \eqref{eq:second_step}. After repeating the same steps, this procedure yields
\begin{equation*}
\int_{A_L\cap \Xi_\delta^c}\mathcal{E}^{\mathrm{MF}}[\Emp_{X_N}]d\mu_F(X_N)\ge E^\mathrm{flo}\mu_F(A_L\cap \Xi_\delta^c)-C\varepsilon-CN^\alpha.
\end{equation*}
Hence, we can bring \eqref{eq:splitting_xi} to the form
\begin{equation*}
N^{-1}\mathcal{E}_{N,\lambda}[\Psi_F]\ge E^\mathrm{flo}+\delta \mu_F(\Xi_\delta)-C\delta L^2N^{-2\alpha}e^{-C\varepsilon N^{\sqrt{5}}-1}-CN^\alpha-C\varepsilon.
\end{equation*}
Since we chose a minimizing sequence $\Psi_F$ we have, by the upper bound \eqref{eq:upper_bound},
\begin{equation*}
N^{-1}\mathcal{E}_{N,\lambda}[\Psi_F] \le E^\mathrm{flo}+o_N(1)
\end{equation*}
as $N\to\infty$. Combining the last two inequalities we find that we can choose $\varepsilon=N^{-1}$ and $L$ large enough such that
\begin{equation*}
\delta \mu_F(\Xi_\delta)\le (1+\delta)o_N(1).
\end{equation*}
Using~\eqref{eq:DF_U_equality}, this completes the proof of~\eqref{eq:DF meas sup}.

Next, since $P_{\Psi_F}$ converges weakly as measures to $P$ as in \eqref{eq:DF to HS}, the integral
\begin{equation*}
\int_{\mathcal{P}(\mathbb{R}^2)}U_\delta(\mu)dP_{\Psi_F}(\mu)
\end{equation*}
has a positive and weakly lower semicontinuous function $U_\delta$ (see Lemma \ref{lemma:lower_semicont_U}) integrated against a weakly converging sequence of measures. We are within the assumptions of \cite[Chapter 1, Section 2, Problem 2.6]{Bil-99} and \cite[Theorem 1.1]{FeiKasZad-14} which yield the Fatou-like lemma
\begin{equation*}
\liminf_{N\to\infty} \int_{\cP(\mathbb{R}^2)}U_\delta(\mu)dP_{\Psi_F}(\mu)\ge \int_{\mathcal{P}(\mathbb{R}^2)}U_\delta(\mu)dP(\mu),
\end{equation*}
whence
\begin{equation*}
\int_{\mathcal{P}(\mathbb{R}^2)}U_\delta(\mu)dP_{\Psi_F}(\mu)\ge \int_{\mathcal{P}(\mathbb{R}^2)}U_\delta(\mu)dP(\mu)-o_N(1).
\end{equation*}
Combining with~\eqref{eq:DF meas sup} we deduce
\begin{equation*}
\int_{\mathcal{P}(\mathbb{R}^2)}U_\delta(\mu)dP(\mu)\le (1+\delta^{-1})o_N(1),
\end{equation*}
whence, passing to the limit $N\to\infty$,
\begin{equation*}
\int_{\mathcal{P}(\mathbb{R}^2)}U_\delta(\mu)dP(\mu)\le 0
\end{equation*}
The subsequent limit $\delta\to 0$ proves the desired statement.
\end{proof}

\subsection{Incompressibility estimate for the limit measure and proof of Theorem \ref{thm:densities}}

\begin{proposition}[\textbf{Incompressibility estimate for the limit measure}]\mbox{}\label{prop:incompressibility_HS}\\ 
Let $\Psi_F$ be a minimizing sequence for the many-body energy \eqref{eq:many_body_energy}, i.e.,
\begin{equation*}
\mathcal{E}_{N,\lambda}[\Psi_F]=E (N,\lambda)+o(N)
\end{equation*}
as $N\to\infty$. Let $P\in \cP (\cP (\R^2))$ be the limit measure associated to $|\Psi_F| ^2$ by Theorem \ref{thm:Hewitt-Savage}. Then, $P$-almost everywhere, 
\begin{equation*}
\mu \le \frac{B}{2\pi\ell}.
\end{equation*}
\end{proposition}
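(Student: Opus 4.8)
The plan is to transfer the large-deviation estimate of Theorem~\ref{thm:exponential_bound} down to the limit measure $P$, following the same scheme as the proof of Proposition~\ref{prop:HS_measure}. The conceptual point worth stressing first is that the \emph{averaged} incompressibility bound \eqref{eq:incompressibility_mu} is \emph{not} sufficient here: it only controls the barycenter $\int \mu \, dP(\mu)$ of $P$ (the weak limit of $\mu_F^{(1)}$), which could perfectly well stay below the threshold even if many individual $\mu$'s in the support of $P$ violated it. It is precisely the exponential deviation estimate that lets us push the constraint onto $P$-almost every $\mu$.

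Fix a bounded open set $\Omega$ with Lipschitz boundary (say an open ball) and a parameter $\varepsilon>0$, and define $G_{\Omega,\varepsilon}\colon\cP(\R^2)\to\{0,1\}$ by $G_{\Omega,\varepsilon}(\mu)=1$ if $\mu(\Omega)>(1+\varepsilon)\tfrac{B|\Omega|}{2\pi\ell}$ and $G_{\Omega,\varepsilon}(\mu)=0$ otherwise. Since $\mu\mapsto\mu(\Omega)$ is lower semicontinuous for narrow convergence when $\Omega$ is open (portmanteau theorem), the set $\{\mu : \mu(\Omega)>t\}$ is open for every $t$, so $G_{\Omega,\varepsilon}$ is nonnegative and lower semicontinuous, exactly as $U_\delta$ was in Lemma~\ref{lemma:lower_semicont_U}. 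First I would apply Theorem~\ref{thm:exponential_bound} with the \emph{fixed} set $\Omega$, i.e. with the scale exponent $\alpha=0$; this is admissible because $0>(-3+\sqrt5)/4$, so that $r=N^0=1$ and $\Omega_r=\Omega$. Combining the resulting bound with the definition~\eqref{eq:DF_measure} of the Diaconis--Freedman measure $P_{\Psi_F}$ gives
\begin{equation*}
\int_{\cP(\R^2)} G_{\Omega,\varepsilon}(\mu)\, dP_{\Psi_F}(\mu) = \mu_F\Big(\Big\{ \int_\Omega \Emp_{X_N} > (1+\varepsilon)\tfrac{B|\Omega|}{2\pi\ell} \Big\}\Big) \le e^{-C\varepsilon N^{\sqrt5-1}},
\end{equation*}
which tends to $0$ as $N\to\infty$.

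Next I would use the weak convergence $P_{\Psi_F}\rightharpoonup P$ from \eqref{eq:DF to HS} together with the Fatou-type lemma for nonnegative lower semicontinuous integrands against weakly converging measures (the very one invoked in Proposition~\ref{prop:HS_measure}, see \cite[Chapter~1, Section~2, Problem~2.6]{Bil-99} and \cite[Theorem~1.1]{FeiKasZad-14}) to obtain $\int G_{\Omega,\varepsilon}\, dP \le \liminf_N \int G_{\Omega,\varepsilon}\, dP_{\Psi_F}=0$. Hence $P$-almost every $\mu$ satisfies $\mu(\Omega)\le(1+\varepsilon)\tfrac{B|\Omega|}{2\pi\ell}$; letting $\varepsilon$ run through $1/n$ and discarding a countable union of $P$-null sets yields $\mu(\Omega)\le\tfrac{B|\Omega|}{2\pi\ell}$ for $P$-a.e. $\mu$.

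Finally I would upgrade this from a single $\Omega$ to the pointwise density bound. Running the previous argument over the countable family of balls with rational centers and radii and intersecting the corresponding $P$-null sets, I get that $P$-almost every $\mu$ satisfies $\mu(B)\le\tfrac{B|B|}{2\pi\ell}$ simultaneously for all such balls $B$, hence for all balls by approximation (an open ball is an increasing union of rational ones). A standard consequence of the Lebesgue--Besicovitch differentiation theorem then shows that such a $\mu$ is absolutely continuous with respect to Lebesgue measure with density bounded above by $B/(2\pi\ell)$ a.e., which is the claim. The only genuinely delicate points are the lower semicontinuity of $G_{\Omega,\varepsilon}$ in the (possibly compactified) narrow topology — handled just as for $U_\delta$, and clean here since $\Omega$ is bounded — and verifying that $\alpha=0$ lies in the admissible range of Theorem~\ref{thm:exponential_bound} so that the deviation estimate holds on a fixed set; the remainder is the Diaconis--Freedman/Fatou machinery already assembled in this section.
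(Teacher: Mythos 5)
Your proposal is correct, but it takes a genuinely different route from the paper's own proof. The paper argues via moments: from Theorem~\ref{thm:exponential_bound} applied to a fixed ball $\Omega$ it deduces $\int_{\R^{2N}}\bigl(\int_\Omega\Emp_{X_N}\bigr)^k d\mu_F\le \bigl((1+\varepsilon)\tfrac{B|\Omega|}{2\pi\ell}\bigr)^k+e^{-C\varepsilon N^{\sqrt{5}-1}}$, transfers this to $P$ through regularized indicators $\chi_{\Omega,\vartheta}$, the weak convergence of the marginals $\mu_F^{(k)}$ and the Diaconis--Freedman bound~\eqref{eq:DF}, arriving at the moment estimate~\eqref{eq:incompressibility_HS}, and finally rules out $P$-positive mass on $\{\mu(\Omega)>B|\Omega|/2\pi\ell\}$ by letting $k\to\infty$ as in \cite[Theorem~2.6]{FouLewSol-15}. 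You instead work directly at the level of the convergence $P_{\Psi_F}\rightharpoonup P$ of~\eqref{eq:DF to HS}: the set $\{\mu\,|\,\mu(\Omega)>t\}$ is open for the narrow topology when $\Omega$ is open (portmanteau), so its indicator is nonnegative and lower semicontinuous --- exactly the structure of $U_\delta$ in Lemma~\ref{lemma:lower_semicont_U} --- and the Fatou-type lemma already invoked in Proposition~\ref{prop:HS_measure} (indeed plain portmanteau suffices, since your integrand is the indicator of an open set) converts the exponential bound of Theorem~\ref{thm:exponential_bound} into $P\bigl(\{\mu(\Omega)>(1+\varepsilon)B|\Omega|/2\pi\ell\}\bigr)=0$; rational balls plus Lebesgue differentiation then yield the pointwise density bound. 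Both proofs consume the same probabilistic input, namely Theorem~\ref{thm:exponential_bound} at fixed scale ($\alpha=0$, admissible since $0>(-3+\sqrt{5})/4$; this is also how the paper itself uses the theorem, via the remark $J_{k,\Omega}=J_{1,\Omega}$). Your route is leaner: no moments, no regularization $\chi_{\Omega,\vartheta}$, no use of~\eqref{eq:DF}, and it makes explicit the countable reduction over balls that the paper's contradiction argument leaves implicit (there the offending ball could a priori depend on $\mu\in\mathcal{K}$). The paper's route, in exchange, produces the quantitative moment bound~\eqref{eq:incompressibility_HS} on $P$, which is stronger than the qualitative null-set statement and of possible independent interest. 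The two delicate points you flag (semicontinuity on the compactified space, admissibility of $\alpha=0$) are precisely the right ones, and your handling of both is sound.
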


\begin{proof}
For any $k\in\mathbb{N}$ and $\Omega\subset\mathbb{R}^2$, define the set
\begin{equation*}
J_{k,\Omega}=\left\{X_N\in\mathbb{R}^{2N}\,\Big|\, \left(\int_{\Omega}\Emp_{X_N}\right)^k>\left((1+\varepsilon)\frac{B|\Omega|}{2\pi\ell}\right)^k \right\}.
\end{equation*}
Without loss of generality, since balls form a basis for Borel subsets of $\mathbb{R}^2$, we will assume that $\Omega$ is a ball. 
Since $J_{k,\Omega}=J_{1,\Omega}$, Theorem~\ref{thm:exponential_bound} implies immediately
\begin{equation*}
\mu_F(J_{k,\Omega})\le  e^{-C\varepsilon N^{\sqrt{5}-1}}.
\end{equation*}
Notice that the $C$ on the right of the last inequality depends on the set $\Omega$, which we will however keep fixed. We write
\begin{equation} \label{eq:estimate_DF_omega}
\begin{split}
\int_{\mathbb{R}^{2N}}\Big(\int_\Omega\Emp_{X_N}\Big)^kd\mu_F(X_N)=\;&\int_{J_{k,\Omega}^c}\Big(\int_\Omega\Emp_{X_N}\Big)^kd\mu_F(X_N)\\
&+\int_{J_{k,\Omega}}\Big(\int_\Omega\Emp_{X_N}\Big)^kd\mu_F(X_N)\\
\le\;&\Big((1+\varepsilon)\frac{B|\Omega|}{2\pi\ell}\Big)^k+e^{-C\varepsilon N^{\sqrt{5}-1}},
\end{split}
\end{equation}
having used $\int_\Omega\Emp_{X_N}\le1$ for the second term.

Let $P$ be the measure on $\mathcal{P}(\mathbb{R}^2)$ that Theorem \ref{thm:Hewitt-Savage} associates to $\mu_F$. We will show that
\begin{equation} \label{eq:incompressibility_HS}
\int_{\mathcal{P}(\mathbb{R}^2)}\Big(\int_{\Omega}\mu\Big)^kdP(\mu)\le \Big(\frac{B|\Omega|}{2\pi\ell}\Big)^k.
\end{equation}
Let us denote by $\chi_\Omega$ the characteristic function of $\Omega$, and let $\chi_{\Omega,\vartheta}$ be its regularization on a small scale $\vartheta$. Let $\Omega_\vartheta$ be the dilation of $\Omega$ by a factor $(1+\vartheta)$ and let $\chi_{\Omega_\vartheta}$ be the corresponding characteristic function. We can pick these functions so as to have
\begin{equation} \label{eq:construction_balls}
\chi_\Omega\le \chi_{\Omega,\vartheta}\le \chi_{\Omega_\vartheta}.
\end{equation}
Then
\begin{equation*}
\begin{split}
\int_{\mathcal{P}(\mathbb{R}^2)}\Big(\int_\Omega\mu\Big)^kdP(\mu)\le\;& \int_{\mathcal{P}(\mathbb{R}^2)}\Big(\int\chi_{\Omega,\vartheta}\,\mu\Big)^kdP(\mu)\\
=\;&\int \chi^{\otimes k}_{\Omega,\vartheta}\: \mu_F^{(k)}+o_N(1)\\
\le\;&\int_{\Omega_\vartheta^{k}}\mu_F^{(k)}+o_N(1),
\end{split}
\end{equation*}
where the first step is due to \eqref{eq:construction_balls}, the second is due to Theorem~\ref{thm:Hewitt-Savage} (and the bound~\eqref{eq:DF}, notice that $\chi_{\Omega,\delta}$ is continuous and compactly supported), and the third again to \eqref{eq:construction_balls}. We remark that the error $o_N(1)$ might depend badly on $\vartheta$. Now, by~\eqref{eq:DF} we have
\begin{equation*}
\int_{\Omega_\vartheta^{k}}\mu_F^{(k)}=\int_{\Omega_\vartheta^{k}}\widetilde\mu_F^{(k)}+o_N(1).
\end{equation*}
Furthermore, notice that
\begin{equation*}
\int_{\Omega_\vartheta^{\times k}}\widetilde\mu_F^{(k)}=\int_{\mathbb{R}^{2N}}\Big(\int_{\Omega_\vartheta}\Emp_{X_N}\Big)^kd\mu_F(X_N).
\end{equation*}
By putting together the last three estimates we get
\begin{equation*}
\begin{split}
\int_{\mathcal{P}(\mathbb{R}^2)}\Big(\int_{\Omega}\mu\Big)^kdP(\mu)\le\;& \int_{\mathbb{R}^{2N}}\Big(\int_{\Omega_\vartheta}\Emp_{X_N}\Big)^kd\mu_F(X_N)+o_N(1)\\
\le\;&\Big((1+\varepsilon)\frac{B|{\Omega_\vartheta}|}{2\pi\ell}\Big)^k+e^{-C\varepsilon N^{\sqrt{5}-1}}+o_N(1),
\end{split}
\end{equation*}
having used \eqref{eq:estimate_DF_omega} for the second step. We can choose, for example, $\varepsilon=N^{-1}$, which allows to take the limit $N\to\infty$ first. The subsequent limit $\vartheta\to0$ finally yields \eqref{eq:incompressibility_HS}.

The proof is then completed by an argument already appearing in \cite[Theorem 2.6]{FouLewSol-15}. Assume for contradiction that there exists a set $\mathcal{K}\subset\mathcal{P}(\mathbb{R}^2)$ such that ${P}(\mathcal{K})>0$ and $\mu(\Omega)>|\Omega|B/2\pi\ell$ for some ball $\Omega\subset\mathbb{R}^2$ and for every $\mu\in\mathcal{K}$. We have, by \eqref{eq:incompressibility_HS},
\begin{equation*}
\int_{\mathcal{K}}\left(\frac{\int_{\Omega}\mu}{\frac{B|\Omega|}{2\pi\ell}}\right) ^k dP(\mu)\le 1.
\end{equation*}
Since
$$ \int_{\Omega}\mu > \frac{B|\Omega|}{2\pi\ell}$$
on the whole of $\mathcal{K}$, the above inequality is absurd for large enough $k$ and therefore it must be that
\begin{equation*}
\mu(\Omega)\le\frac{B|\Omega|}{2\pi\ell}\quad P\;\text{ - a.e.}.
\end{equation*}
This is actually equivalent to having the density of $\mu$ bounded from above by $B/2\pi\ell$ almost everywhere, and the proof is complete.
\end{proof}

We are finally ready to prove Theorem \ref{thm:densities}.

\begin{proof}[Proof of Theorem \ref{thm:densities}]
Let $\Psi_F$ be a minimizing sequence as in \eqref{eq:minimizing_sequence}. By Theorem \ref{thm:Hewitt-Savage} there exists $P$ such that, for every $k\in\mathbb{N}$,
\begin{equation*}
\mu_F^{(k)}\rightharpoonup\int_{\mathcal{P}(\mathbb{R}^2)}\mu^{\otimes k}dP(\mu)
\end{equation*}
weakly as measures as $N\to\infty$. The fact that $P$ is a probability measure follows from the second part of Theorem \ref{thm:Hewitt-Savage} if we show that the sequences of densities $\mu_F^{(k)}$ are tight. For fixed $\varepsilon>0$, let $B(0,R_\varepsilon)$ be the ball of radius $R_\varepsilon$ and centered at the origin. We have
\begin{equation*}
\mu^{(2)}_F(B(0,R_\varepsilon)\times B(0,R_\varepsilon))\ge 1-2\int_{B(0,R_\varepsilon)^c}\mu^{(1)}_F.
\end{equation*}
However, since by our assumption \eqref{eq:growth_trap} the external potential $v$ grows at infinity, we have
\begin{equation*}
\int_{B(0,R_\varepsilon)^c}\mu^{(1)}_F\le\frac{1}{v(R_\varepsilon)}\int_{B(0,R_\varepsilon)^c}v\mu^{(1)}_F\le \frac{E^\mathrm{flo}+C}{v(R_\varepsilon)}
\end{equation*}
having used in the second step that (recall that $w\in L^\infty$)
\begin{equation*}
\iint_{\mathbb{R}^2\times\mathbb{R}^2}w(x-y)\mu_F^{(2)}(x,y)dxdy \ge -C
\end{equation*}
together with the fact that the sequence $\mu_F$ minimizes the energy. Hence, for $R_\varepsilon$ large enough, $(E^\mathrm{flo}+C)/v(R_\varepsilon)<\varepsilon$ and we have
\begin{equation*}
\mu^{(2)}_F(B(0,R_\varepsilon)\times B(0,R_\varepsilon))\ge 1-2\varepsilon
\end{equation*}
for all $N$. Tightness of all other marginals follows.

By Proposition \ref{prop:HS_measure} and Proposition \ref{prop:incompressibility_HS} we know that $P$ must be concentrated on
\begin{equation*}
\left\{\mu\;|\;\mathcal{E}^\mathrm{MF}[\mu]\le E^\mathrm{flo},\int_{\mathbb{R}^2}\mu=1\right\}\cap\left\{\mu\;|\;0\le\mu\le \frac{B}{2\pi\ell}\;\text{ a.e. },\;\int_{\mathbb{R}^2}\mu=1\right\}.
\end{equation*}
This is equivalent to saying that $P$ is concentrated on the set of minimizers of the flocking problem \eqref{eq:mf_energy}, which completes the proof.
\end{proof}

%

%

\end{document}